\renewcommand*\env@matrix[1][*\c@MaxMatrixCols c]{%
  \hskip -\arraycolsep
  \let\@ifnextchar\new@ifnextchar
  \array{#1}}
\newtheorem{theorem}{Theorem}
\newtheorem{lemma}{Lemma}
\newtheorem{proposition}{Proposition}
\newtheorem{algorithm}{Algorithm}
\newtheorem*{remark}{Remark}
\title{Cost optimisation of hybrid institutional incentives for promoting cooperation in finite populations}
\author[$\dagger$]{M. H. Duong}
\author[$\star$]{C. M. Durbac}
\author[$\ddag$]{T. A. Han}
\affil[$\dagger$]{School of Mathematics, University of Birmingham, UK. Email: h.duong@bham.ac.uk}
\affil[$\star$]{School of Mathematics, University of Birmingham, UK. Email: c.m.durbac@bham.ac.uk}
\affil[$\ddag$]{ School of Computing, Engineering and Digital Technologies, Teesside University, UK. Email: T.Han@tees.ac.uk}
\date\today
\begin{document}

\maketitle
\begin{abstract}
In this paper, we rigorously study the problem of cost optimisation of hybrid (mixed) institutional incentives, which are a plan of actions involving the use of reward and punishment by an external decision-maker, for maximising the level (or guaranteeing at least a certain level) of cooperative behaviour in a well-mixed, finite population of self-regarding individuals who interact via cooperation dilemmas (Donation Game or Public Goods Game). We show that a mixed incentive scheme can offer a more cost-efficient approach for providing incentives while ensuring the same level or standard of cooperation in the long-run. We establish the asymptotic behaviour (namely neutral drift, strong selection, and infinite-population limits). We prove the existence of a phase transition, obtaining the critical threshold of the strength of selection at which the monotonicity of the cost function changes and providing an algorithm for finding the optimal value of the individual incentive cost. Our analytical results are illustrated with numerical investigations.
Overall, our analysis provides novel theoretical insights into the design of cost-efficient institutional incentive mechanisms for promoting the evolution of cooperation in stochastic systems. 
\end{abstract}
\newpage
\tableofcontents
\section{Introduction}
\paragraph{Literature review.} Evolutionary Game Theory made its debut in 1973 with John Maynard Smith’s and George R. Price’s work on the formalisation of animal contests, thus successfully using Classical Game Theory to create a new framework that could predict the evolutionary outcomes of the interaction between competing individuals. Ever since, it has been widely used to study myriad questions in various disciplines like Evolutionary Biology, Ecology, Physics, Sociology and Computer Science, including what the mechanisms underlying the emergence and stability of cooperation are \cite{nowak2006} and how to mitigate climate change and its risks \cite{riskSantos}.

Cooperation is the act of paying a cost  in order to convey a benefit to somebody else. Although it initially seems against the Darwinian theory of natural selection, cooperation has been, is, and will be a vital part of life, from cellular clusters to bees to humans \cite{sigmund2010calculus,nowak2011supercooperators}. Several mechanisms for promoting the evolution of cooperation have been identified, including kin selection, direct reciprocity, indirect reciprocity, network reciprocity, group selection and different forms of  incentives \cite{nowak2006,sigmund2010calculus,perc2017statistical,rand2013human,van2014reward}. The current work focuses on institutional incentives \cite{sasaki2012take,sigmundinstitutions,wang2019exploring,duong2021cost,cimpeanu2021cost,sun2021combination,van2014reward,gurerk},  which are a plan of actions involving the use of reward (i.e., increasing the payoff of cooperators) and punishment (i.e., reducing the payoff of defectors) by an external decision-maker, in particular, how they can be used in combination (i.e. hybrid or mixed incentives) in a cost-efficient way for maximising the levels of  cooperative behaviour in a well-mixed, finite population of self-regarding individuals. 
 
 Promoting and implementing cooperation via  incentives is costly to the incentive-providing institution, such as the United Nations or the European Union \cite{ostrom2009understanding,van2014reward}. Thus, it is crucial to understand how to  minimise the said cost while ensuring a desirable standard of cooperation. 
 In this work, players interact via cooperation dilemmas, both the pairwise Donation Game (DG) and its multi-player version, the  Public Goods Game (PGG) \cite{sigmund2010calculus,nowak2006}. These games have been widely adopted to model social dilemma situations in which collective rationality leads to individual irrationality. 

Several theoretical models studied how to combine institutional reward and punishment for enhancing the emergence and stability of cooperation \cite{chen2014optimal,gois2019reward,berenji2014recidivism,hilbe2010incentives,sun2021combination,gurerk}. However, little attention  has been given to addressing the cost optimisation of providing incentives. Chen et al.~\cite{carrotstick} looked at a rewarding policy that switches the incentive from reward to punishment when the frequency of cooperators exceeds a certain threshold. This policy establishes cooperation at a lower cost and under a wider range of conditions than either reward or punishment alone, in both well-mixed and spatial populations. Others have applied the `carrot and stick' idea to criminal recidivism and rehabilitation as now the justice system is switching its focus to reintegrating wrongdoers into society after the penalty has been served. Berenji et al.'s work \cite{berenji2014recidivism} studied the game where players decide to permanently reform or continue engaging in criminal activity, eventually reaching a state where they are considered incorrigible. Since resources may be limited, they fixed the combined rehabilitation and penalty costs per crime. The most successful strategy in reducing crime is to optimally allocate resources so that after being having served the penalty, criminals are reintroduced into society via impactful programs. Wang et al.~\cite{wang2019exploring} explored the optimal incentive that not only minimises the total cost, but also guarantees a sufficient level of cooperation in an infinite and well-mixed population via optimal control theory. 

This work however does not take into account various stochastic effects of evolutionary dynamics such as mutation and non-deterministic behavioural update. In a deterministic system of cooperators and defectors, once the latter disappear, there is no further change in the system and hence no further interference is needed. When mutation is present, defectors can reappear and become numerous over time, requiring the external institution to spend more on providing further incentives. Moreover, the intensity of selection - how strongly an individual bases their decision to copy another individual’s strategy on their fitness difference - is overlooked. When selection is weak, providing incentives would make little difference in causing behavioural change as no individual would be motivated to copy another and all changes in strategy would be due to noise. When selection is strong, incentives that ensure a minimum fitness advantage to cooperators would ensure a positive behavioural change as the players would be more likely to copy one another. Recently, in \cite{Zhang2022} by simulating the weak prisoner’s dilemma in finite populations, the authors find that a combination of appropriate punishment and reward mechanisms can promote cooperation's prosperity regardless of how large or small the temptation to defect is.

Whilst the above works were concerned with well-mixed populations, the following two studies deal with spatial ones. Szolnoki and Perc \cite{corposneg} looked at whether there are evolutionary advantages in correlating positive and negative reciprocity, as opposed to adopting only reward or punishment. Their work supports others that use empirical data. In those studies, the data fails to support the central assumption of the strong reciprocity model that negative and positive reciprocity are correlated. In a different work \cite{2ndfreeriding}, the authors showed how second-order free-riding on antisocial punishment restores the effectiveness of prosocial punishment, providing an evolutionary escape from adverse effects of antisocial punishment. Both these works use Monte Carlo simulations.

Moreover, several works have provided insights into how best to promote the emergence of collective behaviours such as cooperation and fairness while also considering the institutional costs of providing incentives  \cite{liu2018evolutionary,chen2014optimal, duong2021cost, cimpeanu2021cost,cimpeanu2019exogenous,han2018cost,cimpeanu2023does}, see also \cite{wang2021} for a recent survey on these papers.
Indeed, most relevant to our work are Duong and Han \cite{duong2021cost} and Han and Tran-Thanh \cite{han2018cost}, which derived analytical conditions for which a general incentive scheme can guarantee a given level of cooperation while at the same time minimising the total cost of investment. These results are highly sensitive to the intensity of selection. They also studied a class of incentive strategies that make an investment whenever the number of players with a desired behaviour reaches a certain threshold $t\in\{1,\ldots,N-1\}$ ($N$ is the population size), showing that there is a wide range of values for the threshold that outperforms standard institutional incentive strategies - those which invest in all players, i.e. the threshold is $t=N-1$ \cite{carrotstick}. These works however did not study the cost-efficiency of the mixed incentive scheme, which is the focus of the present work.

\paragraph{Overview of contribution of this paper.}  As mentioned above, in this work, we consider a well-mixed, finite population of self-regarding individuals where  players interact via cooperation dilemmas (DG and PGG) and rigorously study the problem of cost optimisation of hybrid institutional incentives (combination of reward and punishment) for maximising the levels of cooperative behaviour (or guaranteeing at least a certain levels of cooperation). This problem is challenging due to the number of parameters involved such as the number of individuals in the population, the strength of selection, the game-specific quantities, as well as the efficiency ratios of providing the corresponding incentive. In particular, the Markov chain modelling the evolutionary process is of order equal to the population size, which is large but finite. The calculation of the entries of the corresponding fundamental matrix is intricate, both analytically and computationally.

Our present work provides a rigorous and delicate analysis of this problem, combining techniques from different branches of Mathematics including Markov chain theory, polynomial theory, and numerical analysis. The main results of the paper can be summarised as follows (detailed and precise statements are provided in Section \ref{sec: models}).
\begin{enumerate}[(i)]
    \item We show that a mixed incentive scheme can offer a more cost-efficient approach for providing incentives while ensuring the same level or standard of cooperation in the long-run. 
    \item We obtain the asymptotic behaviour of the cost function in the limits of neutral drift, strong selection as well as infinite population sizes. 
    \item We prove the existence of a phase transition for the cost function, obtaining the critical threshold of the strength of selection at which the monotonicity of the cost function changes and finding the optimal value of the individual incentive cost. 
\end{enumerate}
Furthermore, we provide numerical simulations to illustrate the analytical results.
\subsection*{Organisation of the paper} The rest of the paper is organised as follows. In Section \ref{sec: models} we present the model, methods, and the main results. Our main results include Proposition \ref{prop: 1st result} on the efficiency of the combined reward and punishment incentives compared to implementing them separately, Theorem \ref{thm: main thm1} on the asymptotic behaviour (neutral drift, strong selection, and infinite population limits) of the cost function, and Theorem \ref{thm: main thm2} on the optimal incentive. In Section \ref{sec: reward and punishment} we provide detailed computations and proofs of Theorem \ref{thm: main thm2}. Proof of Theorem \ref{thm: main thm1} is given in Section \ref{sec: asymptotic behaviour}. Summary and further discussions are provided in Section \ref{sec: discussion}. Finally, Section \ref{sec: appendix} contains the proof of Proposition \ref{prop: 1st result}, detailed computations, and proofs for the technical results.

\section{Model, methods, and main results} 
\label{sec: models}
In this section, we present the model, methods, and main results of the paper. We first introduce the class of games, namely cooperation dilemmas, that we are interested in throughout this paper.

\subsection{Cooperation dilemmas} 

\noindent We consider a well-mixed, finite population of $N$ self-regarding individuals (players) who engage with one another using one of the following one-shot (i.e. non-repeated) cooperation dilemmas, the Donation Game (DG) or its multi-player version, the Public Goods Game (PGG). Strategy wise, each player can choose to either cooperate (C) or  defect (D). 

Let $\Pi_C(j)$ be the average payoff of a C player (cooperator) and $\Pi_D(j)$ that  of a D player (defector), in a population with $j$ $C$ players and $(N-j)$ $D$ players. As can be seen below, the difference in payoffs $\delta = \Pi_C(j) - \Pi_D(j)$ in both games does not depend on $j$. For the two cooperation dilemmas considered in this paper, namely the Donation Games and the Public Goods Games,  it is  always the case that $\delta < 0$. This does not cover some weak social dilemmas such as the Snowdrift Game, where $\delta>0$ for some $j$, the general prisoners' dilemma, and the collective risk game \cite{sun2021combination}, where $\delta$ depends on $j$. We will investigate these games in future research (see Section \ref{sec: discussion} for further discussion).
\subsubsection*{Donation Game (DG)} 
\noindent The Donation Game is a form of Prisoners' Dilemma in which cooperation corresponds to offering the other player a benefit $B$ at a personal cost $c$, satisfying that $B > c$. Defection means offering nothing. 
The payoff matrix of DG (for the row player) is given as follows  
\[
 \bordermatrix{~ & C & D\cr
                  C & B-c & -c \cr
                  D & B & 0  \cr
                 }. 
\]

\noindent Denoting $\pi_{X,Y}$ the payoff of a strategist  $X$ when playing with a strategist $Y$ from the  payoff matrix above, we obtain
\begin{equation*} 
\begin{split} 
\Pi_C(j) &=\frac{(j-1)\pi_{C,C} + (N-j)\pi_{C,D}}{N-1} = \frac{(j-1) (B-c) + (N-j) (-c)}{N-1}  ,\\
\Pi_D(j) &=\frac{j\pi_{D,C} + (N-j-1)\pi_{D,D}}{N-1} =\frac{j B}{N-1}.
\end{split}
\end{equation*} 
Thus, 
$$\delta = \Pi_C(j) - \Pi_D(j) =  -\Big(c + \frac{B}{N-1}\Big).$$

\subsubsection*{Public Goods Game (PGG)} 

\noindent In a Public Goods Game, players interact in a group of size $n$, where they decide to cooperate, contributing an amount $c > 0$ to a common pool, or to defect, contributing nothing to the pool. The total contribution in a group is multiplied by a factor $r$, where $1 < r < n$ (for the PGG to be a social dilemma), which is then shared equally among all members of the group, regardless of their strategy. Intuitively, contributing nothing offers one a higher amount of money after redistribution.

The average payoffs, $\Pi_C(j)$ and $\Pi_D(j)$, are calculated based on the assumption that the groups engaging in a public goods game are given by multivariate
hypergeometric sampling. Thereby, for transitions between two pure states, this reduces to sampling, without replacement, from a hypergeometric distribution. More precisely, we obtain \cite{hauert2007}
\begin{equation*} 
\begin{split} 
\Pi_C(j) &= \sum^{n-1}_{i=0}\frac{\dbinom{j-1}{i}\dbinom{N-j}{n-1-i}}{
 \dbinom{N-1}{n-1}} \ \left(\frac{(i+1)rc}{n} - c\right) =\frac{rc}{n}\left(1 + (j-1)\frac{n-1}{N-1}\right) - c ,\\
\Pi_D(j) &=\sum^{n-1}_{i=0}\frac{\dbinom{j}{i}\dbinom{N-1-j}{n-1-i}}{
 \dbinom{N-1}{n-1}} \ \frac{jrc}{n} =\frac{rc(n-1)}{n(N-1)}j.
\end{split}
\end{equation*} 
Thus, 
$$\delta = \Pi_C(j) - \Pi_D(j) = -c \left(1 - \frac{r(N-n)}{n(N-1)} \right).
$$

\subsection{Cost of institutional reward and punishment} 
To reward a cooperator (respectively, punish a defector), the institution has to pay an amount $\theta/a$ (resp., $\theta/b$) so that the cooperator's (defector's) payoff increases (decreases) by $\theta$, where $a, b > 0$ are constants representing the efficiency ratios of providing the corresponding incentive. 

In an institutional enforcement setting, we assume that the institution has full information about the population composition or statistics at the time of decision-making. That is, given the well-mixed population setting, we assume that  the number $j$ of cooperators in the population is known. Thus, if both reward and punishment are feasible options (i.e., mixed incentives), the institution can minimise its cost by choosing the minimum of $\frac{j}{a}$ and $\frac{N-j}{b}$. Thus, the key question here is: \textit{what is the optimal value of the individual incentive cost $\theta$ that ensures a sufficient desired level of cooperation in the population (in the long-run) while minimising the total cost spent by the institution?  }

Note that, as discussed above, this mixed incentive, also known as the `carrot and stick' approach, has been shown efficient for promoting cooperation in both pairwise and multi-player interactions \cite{carrotstick,hilbe2010incentives,sun2021combination,gois2019reward,gurerk}. However, these works have not studied cost optimisation and have not shown whether this approach is actually more cost-efficient and by how much.  

\subsubsection*{Deriving the expected cost of providing  institutional incentives}
\label{sec: cost of incentive}
In this model, we adopt the finite population dynamics with the Fermi strategy update rule \cite{traulsen2006}, stating that a player $X$ with fitness $f_X$ adopts the strategy of another player $Y$ with fitness $f_Y$ with a probability given by $P_{X,Y}=\left(1 + e^{-\beta(f_Y-f_X)}\right)^{-1}$, where $\beta$ represents the intensity of selection. 
We compute the expected number of times the population contains $j$ C players, $1 \leq j \leq N-1$. For that, we consider an absorbing Markov chain of $(N+1)$ states, $\{S_0, ..., S_N\}$, where $S_j$ represents a population with $j$ C players.  $S_0$ and $S_N$ are absorbing states. Let  $U = \{u_{ij}\}_{i,j = 1}^{N-1}$ denote the transition matrix between the $N-1$ transient states, $\{S_1, ..., S_{N-1}\}$. The transition probabilities can be defined as follows, for $1\leq i \leq N-1$: 
\begin{equation} 
\label{eq: transition probabilities}
\begin{split} 
u_{i,i\pm k} &= 0 \qquad \text{ for all } k \geq 2, \\
u_{i,i\pm1} &= \frac{N-i}{N} \frac{i}{N} \left(1 + e^{\mp\beta[\Pi_C(i) - \Pi_D(i)+\theta]}\right)^{-1},\\
u_{i,i} &= 1 - u_{i,i+1} -u_{i,i-1}.
\end{split} 
\end{equation}
The entries $n_{ik}$ of the so-called fundamental matrix $\mathcal{N}=(n_{ik})_{i,k=1}^{N-1}= (I-U)^{-1}$ of the absorbing Markov chain gives the expected number of times the population is in the state $S_j$ if it is started in the transient state $S_i$ \cite{kemeny1976finite}.
As a mutant can  randomly occur either at $S_0$ or $S_N$, the expected number of visits at state $S_i$ is thus, $\frac{1}{2} (n_{1i} + n_{N-1,i})$.
The total cost per generation is
\begin{equation*}
\theta_j = \min\Big(\frac{j}{a}, \frac{N-j}{b}\Big) \times \theta.
\end{equation*}

Hence, the expected total cost of interference for mixed reward and punishment is 
\begin{equation} 
\label{eq:total_investment}
E_{mix}(\theta) = 
\frac{\theta}{2} \sum_{j=1}^{N-1}(n_{1j} + n_{N-1,j}) \min\Big(\frac{j}{a}, \frac{N-j}{b}\Big).
\end{equation}
As a comparison, we recall the cost for reward and punishment incentives, $E_r$ and $E_p$,  respectively, when being used separately \cite{duong2021cost}  
\begin{equation}
\label{eq: Er and Ep}    
E_{r}(\theta) = \frac{\theta}{2}\sum_{j=1}^{N-1}(n_{1j}+n_{N-1,j})\frac{j }{a},\quad E_{p}(\theta)=\frac{\theta}{2}\sum_{j=1}^{N-1}(n_{1j}+n_{N-1,j})\frac{N-j}{b}.
\end{equation}

By comparing \eqref{eq:total_investment} and \eqref{eq: Er and Ep} one clearly expects that the efficiency ratios $a$ and $b$ strongly affect the incentive cost. In the cost functions $E_r$ and $E_p$, they are just scaling factors and do not affect the analysis of these functions. This is not the case in the combined incentive. One of the main objectives of this paper is to reveal explicitly the influence of $a$ and $b$ on the cost function. From a mathematical point of view, the presence and interplay of $a$ and $b$ make the analysis of the combined incentive much harder than that of the separate ones.

\begin{remark}[On the interference scheme]
In the mixed incentive setting being considered in this paper, the institution either rewards every cooperator or punishes every defector, depending on which one is less costly. Although being rather unsophisticated, this incentive strategy is typically considered in the literature of institutional incentives modelling. However, other interference schemes are also investigated in many works, for instance, the institution only rewards C players whenever their frequency or number in the population does not exceed a given threshold $t$, where $1\leq t\leq N-1$. The scheme studied in this paper corresponds to the case where $t=N-1$. We refer the reader to \cite{han2018cost} and references therein for more information about different interference schemes. We plan to generalise the results of this paper to more complicated incentive strategies in future work, see Section \ref{sec: discussion} for further discussion. 
\end{remark}

\subsubsection*{Cooperation frequency}
Since the population consists of only two strategies, the fixation  probabilities of a C (D) player in a homogeneous population of D (C) players  when the interference scheme is carried out are, respectively, \cite{nowak}
\begin{equation*} 
\begin{split}
\rho_{D,C} &= \left(1+\sum_{i = 1}^{N-1} \prod_{k = 1}^i \frac{1+e^{\beta(\Pi_C(k)-\Pi_D(k) + \theta)}}{1+e^{-\beta(\Pi_C(k)-\Pi_D(k)+\theta)}}  \right)^{-1}, \\
\rho_{C,D} &= \left(1+\sum_{i = 1}^{N-1} \prod_{k = 1}^i \frac{1+e^{\beta(\Pi_D(k)-\Pi_C(k) - \theta)}}{1+e^{-\beta(\Pi_D(k)-\Pi_C(k)-\theta)}}  \right)^{-1}.
\end{split}
\end{equation*} 

Computing the stationary distribution using  these fixation probabilities, we  obtain the frequency of cooperation  $$\frac{\rho_{D,C}}{\rho_{D,C}+\rho_{C,D}}.$$

Hence, this frequency of cooperation can be maximised by maximising 
\begin{equation}
\label{eq:max}
\max_{\theta} \left(\rho_{D,C}/\rho_{C,D}\right).  
\end{equation} 

The fraction in Equation~\eqref{eq:max} can be simplified as follows \cite{nowak2006} 
\begin{eqnarray}
\nonumber
\frac{\rho_{D,C}}{\rho_{C,D}} &=&  \prod_{k = 1}^{N-1} \frac{u_{i,i-1}}{u_{i,i+1}} =\prod_{k = 1}^{N-1} \frac{1 + e^{\beta[\Pi_C(k)-\Pi_D(k) + \theta]}}{1 + e^{-\beta[\Pi_C(k)-\Pi_D(k) + \theta]}} \\
\nonumber
&=& e^{\beta\sum_{k = 1}^{N-1} \left(\Pi_C(k)-\Pi_D(k) + \theta\right)} \\
\label{eq:max_Q_prime}
 &=& e^{\beta (N-1)(\delta +  \theta)}. 
 \end{eqnarray}

In the above transformation, $u_{i,i-1}$ and $u_{i,i-1}$ are the probabilities  to decrease or increase the number  of C players  (i.e. $i$) by one in each time step, respectively. 

We consider non-neutral selection, i.e.  $\beta > 0$ (under neutral selection, there is no need to use incentives as no player is likely to copy another player and any changes in strategy that happen are due to noise as opposed to payoffs). Assuming that we desire to obtain  at least an $\omega  \in [0,1]$ fraction of cooperation, i.e. $\frac{\rho_{D,C}}{\rho_{D,C}+\rho_{C,D}} \geq \omega$, it follows from equation~\eqref{eq:max_Q_prime}  that
\begin{equation} 
\label{eq:omega_fraction}
 \theta \geq \theta_0(\omega) = \frac{1}{(N-1)\beta} \log\left(\frac{\omega}{1-\omega}\right) - \delta .
\end{equation}

Therefore it is guaranteed that if $\theta  \geq \theta_0(\omega)$, at least an $\omega$ fraction of cooperation can be expected.
This condition implies that the lower bound of $\theta$ monotonically depends on $\beta$. Namely, when $\omega \geq 0.5$, it increases with $\beta$ and when $\omega < 0.5$, it decreases with $\beta$.\\

To summarise, we obtain the following constrained minimisation problem
\begin{equation}
\label{eq: min prob} \min_{\theta\geq \theta_0} E_{mix}(\theta).
\end{equation}
\begin{remark}[On the formula of the cost  of the mixed incentive]
In the derivation of the cost function \eqref{eq:total_investment}, we  assumed that the population is equally likely to start in the homogeneous state $S_0$ as well as in the homogeneous state $S_N$. However, in general, this might not be correct. For example, if cooperators are very likely to fixate in a population of defectors, but defectors are unlikely to fixate in a population of cooperators, mutants are on average more likely to appear in the homogeneous cooperative population (that is in $S_N$). Similarly, the population might also be likely to appear in $S_0$ rather than $S_N$. In general, in the long-run, the population will start at $i = 0$ ($i = N$, respectively) with probability equal to the frequency of D (C) computed at the equilibrium, $f_D = 1/(r+1)$ ($f_C = r/(r+1)$, respectively), where $r = e^{\beta (N-1)(\delta +  \theta)}$. Thus generally, the expected number of visits at state $S_i$ will be $ f_D n_{1i} + f_C n_{N-1,i}$. Therefore, instead of \eqref{eq:total_investment}, in the general setting the formula for the cost function should be
$$
E_{mix}= \sum_{j=1}^{N-1}(f_D n_{1j} + f_C n_{N-1,j}) \min\Big(\frac{j}{a}, \frac{N-j}{b}\Big).
$$
In practice, in many works based on agent-based simulations \cite{chen2014optimal,cimpeanu2023does,2ndfreeriding,han2018fostering,carrotstick}, it is often assumed that mutation is negligible and simulations end whenever the population fixates in a homogeneous state. Moreover, these simulations usually assume that the initial population starts at a homogeneous state or has a uniform distribution of different types. In this work, we thus assume an equal likelihood that the population starts at one of the homogeneous states and our formula \eqref{eq:total_investment} captures such scenarios.  This assumption enables us to analytically study the cost function and its behaviour. As will be clear in the subsequent sections, the analysis is already very complicated in this simplified setting. Our results encapsulate the  intermediate-run dynamics, an approximation that is valid if the time-scale is long enough for one type to reach fixation, but too short for the next mutant to appear. Our findings might thus be more practically useful for the optimisation of the institutional budget for providing incentives on an intermediate timescale. 

We will study this problem in the most general case, where the initial population can start at an arbitrary state, in future work. The cost function can be obtained from extensive agent-based simulations of the evolutionary process. However, this approach is very computationally expensive especially when one wants to analytically study the cost function as a function of the individual incentive cost for large population sizes, which is the focus of this paper. Previous works have already shown that outcomes from our adopted evolutionary processes (small-mutation limit) are in line with extensive agent-based simulations, e.g. in \cite{han2013emergence,van2012emergence,hauert2007,sigmundinstitutions}.
\end{remark}

\subsection{Main results of the present paper}
\begin{proposition}[Combined incentives vs separate ones]
\label{prop: 1st result}
It is always more cost efficient  to use the mixed incentive approach than  a separate incentive, reward or punishment, 
$$
E_{mix}\leq \min\{E_r, E_p\}.
$$
If $\frac{b}{a}\leq \frac{1}{N-1}$, then $E_{mix}(\theta)=E_r(\theta)$.  If $\frac{b}{a}\geq N-1$, then $E_{mix}(\theta)=E_p(\theta)$.
That is, if providing reward for a cooperator is much more cost-efficient for the institution than punishing a defector, i.e., when $b/a \geq N-1$, then it is optimal to use reward entirely. Symmetrically, if punishment is much more efficient,  i.e. $a/b \geq N-1$, then it is optimal to use punishment entirely. Otherwise, a mixed approach is more cost-efficient. Note however that the mixed approach require the institution to be able to observe the population composition (i.e. the number of cooperators in the population, $j$).
\end{proposition}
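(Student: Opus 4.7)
The plan is to reduce everything to an elementary pointwise inequality for the minimum function, then sum against the non-negative weights $w_j := \tfrac{1}{2}(n_{1j}+n_{N-1,j})$, which are non-negative because they are entries of the fundamental matrix of an absorbing Markov chain and therefore represent expected visit counts.

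First I would dispatch the inequality $E_{mix}\leq \min\{E_r,E_p\}$ directly. For every $j\in\{1,\dots,N-1\}$ we have trivially $\min(j/a,(N-j)/b)\leq j/a$ and $\min(j/a,(N-j)/b)\leq (N-j)/b$. Multiplying each inequality by $\theta\, w_j\geq 0$ and summing over $j$ yields $E_{mix}(\theta)\leq E_r(\theta)$ and $E_{mix}(\theta)\leq E_p(\theta)$, hence $E_{mix}\leq \min\{E_r,E_p\}$.

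Next I would handle the two boundary regimes by showing that the pointwise minimum collapses to a single branch uniformly in $j$. For the condition $b/a\leq 1/(N-1)$, observe that the map $j\mapsto (N-j)/j$ on $\{1,\dots,N-1\}$ is decreasing and attains its minimum value $1/(N-1)$ at $j=N-1$. Therefore $b/a\leq 1/(N-1)\leq (N-j)/j$ for every admissible $j$, which rearranges to $j/a\leq (N-j)/b$, so $\min(j/a,(N-j)/b)=j/a$ for all $j$. Substituting into \eqref{eq:total_investment} gives $E_{mix}(\theta)=E_r(\theta)$. Symmetrically, for $b/a\geq N-1$, the maximum of $(N-j)/j$ on $\{1,\dots,N-1\}$ is $N-1$, attained at $j=1$; hence $b/a\geq (N-j)/j$ for all $j$, which rearranges to $(N-j)/b\leq j/a$, so $\min(j/a,(N-j)/b)=(N-j)/b$ for all $j$ and $E_{mix}(\theta)=E_p(\theta)$.

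There is essentially no serious obstacle here: the argument is elementary once one notices that the two threshold values $1/(N-1)$ and $N-1$ are exactly the extrema of the discrete ratio $(N-j)/j$ on the transient states $j\in\{1,\dots,N-1\}$. The only thing one must be careful about is the non-negativity of the weights $w_j$, which is needed both to preserve the inequalities after multiplication and to ensure that the term-by-term equalities in the boundary regimes lift to equalities between the sums; this non-negativity is immediate from the probabilistic interpretation of $\mathcal{N}=(I-U)^{-1}$ as a matrix of expected occupation times for the absorbing chain defined in \eqref{eq: transition probabilities}. The interpretive remark at the end of the proposition (that the mixed scheme requires observation of $j$) is a modelling comment rather than part of the mathematical claim and needs no proof.
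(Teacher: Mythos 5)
Your proposal is correct and follows essentially the same route as the paper: the inequality $E_{mix}\leq\min\{E_r,E_p\}$ comes from the pointwise bound on the minimum together with positivity of the weights $n_{1j}+n_{N-1,j}$, and the boundary regimes are handled by noting that $1/(N-1)$ and $N-1$ are the extreme values of $(N-j)/j$ over $j\in\{1,\dots,N-1\}$, so the minimum collapses to a single branch for all $j$. No gaps; the argument matches the paper's proof in substance.
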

The proof of this Proposition will be given in Section \ref{sec: proof of 1st result} and see Figure \ref{fig:comparison} for an illustration.

The following number is central to the analysis of this paper 
\begin{equation}
\label{eq: HNab}
H_{N,a,b}=\sum\limits_{j=1}^{N-1}\frac{1}{j(N-j)}\min(\frac{j}{a},\frac{N-j}{b}).    
\end{equation}

It plays a similar role as the harmonic number $H_N$ in \cite{duong2021cost}, where a similar cost optimisation problem but for a separate reward or punishment incentive is studied. However, unlike the harmonic function, which has a growth of $\ln N+\gamma$ (where $\gamma$ is the Euler–Mascheroni constant) as $N\rightarrow+\infty$, we will show that $H_{N,a,b}$ is always bounded and its asymptotic behaviour is given by (see more details in Proposition \ref{prop: limit HNab})
$$
H_{a,b}:=\lim\limits_{N\rightarrow +\infty} H_{N,a,b}=\frac{1}{a}\ln(\frac{a+b}{b})+\frac{1}{b}\ln(\frac{a+b}{a}).
$$

Now, our second main result below studies the asymptotic behaviour (neutral drift limit, strong selection limit, and infinite population limit) of the cost function $E_{mix}(\theta)$. 
\begin{theorem}[Asymptotic behaviour of the cost function]\
\label{thm: main thm1}
\begin{enumerate}
\item (Growth of the cost function) The cost function satisfies the following lower and upper bound estimates 
$$
\frac{N^2\theta}{2}\Big(H_{N,a,b}+\frac{1}{\max(a,b)(N-1)}\Big)\leq E_{mix}(\theta)\leq N(N-1)\theta \Big(H_{N,a,b}+\frac{1}{\min(a,b)\lfloor \frac{(N-1)}{2} \rfloor}\Big).
$$
In particular, since $H_{N,a,b}$ is uniformly bounded (with respect to $N$), it follows that the cost function grows quadratically with respect to $N$.
\item Neutral drift limit:
    $$
\lim\limits_{\beta\rightarrow 0}E_{mix}(\theta)=\theta N^2 H_{N,a,b}.
$$
\item Strong selection limit:
$$
\lim\limits_{\beta\rightarrow +\infty}E_{mix}(\theta)=\begin{cases}
\frac{N^2\theta}{2}\Big(H_{N,a,b} + \frac{1}{a(N-1)}\Big),  \quad\text{for}\quad \theta<-\delta,\\
\frac{N A}{2}\Big[2N H_{N,a,b}+\frac{1}{a(N-1)}+\frac{1}{b(N-1)}-\frac{\min(2/a, (N-2)/b)}{2(N-2)}-\frac{\min((N-1)/a,1/b)}{N-1}\Big], \quad\text{for}\quad \theta=-\delta,\\
\frac{N^2\theta}{2}\bigg[H_{N,a,b}+\frac{1}{b(N-1)}\bigg]  \quad\text{for}\quad \theta>-\delta.
\end{cases}
$$
\item Infinite population limit: 
\begin{equation}
\lim_{N\rightarrow +\infty}\frac{E_{mix}(\theta)}{\frac{N^2\theta}{2}H_{a,b}}=\begin{cases}
1+e^{-\beta|\theta-c|} \quad \text{for DG},\\
1+ e^{-\beta |\theta-c(1-\frac{r}{n})|}\quad\text{for PGG}.
\end{cases}
\end{equation}
\end{enumerate}
\end{theorem}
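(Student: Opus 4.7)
My approach centres on deriving a single closed-form identity for the quantity $n_{1,j}+n_{N-1,j}$ and reading off each of the four statements from it. Because $\delta$ and $\theta$ are state-independent, the ratio $u_{i,i-1}/u_{i,i+1}$ equals the constant $\lambda:=e^{-\beta(\delta+\theta)}$ for every transient $i$, so the absorbing chain is a birth--death chain with a geometric potential. Standard formulas for the fundamental matrix of such a chain give $n_{1,j}=(\lambda^{N-j}-1)/[(\lambda^{N}-1)u_{j,j+1}]$ and $n_{N-1,j}=(\lambda^{N-1}-\lambda^{N-1-j})/[(\lambda^{N}-1)u_{j,j+1}]$, which combined with $u_{j,j+1}=j(N-j)/[N^{2}(1+\lambda)]$ yield the master identity
\begin{equation*}
n_{1,j}+n_{N-1,j}=\frac{N^{2}(1+\lambda)}{j(N-j)}\,\Phi_{N,j}(\lambda),\qquad \Phi_{N,j}(\lambda):=\frac{\lambda^{N-1-j}(\lambda-1)+\lambda^{N-1}-1}{\lambda^{N}-1}.
\end{equation*}
Every part of Theorem~\ref{thm: main thm1} will be deduced by inserting this identity into \eqref{eq:total_investment} and analysing the resulting discrete sum in the appropriate regime.

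Parts (2) and (3) reduce to calculations. For the neutral-drift limit, $\beta\to 0$ forces $\lambda\to 1$; a Taylor expansion (or l'H\^opital) gives $(1+\lambda)\Phi_{N,j}(\lambda)\to 2$, so $n_{1,j}+n_{N-1,j}\to 2N^{2}/(j(N-j))$, and summing against the weight $\min(j/a,(N-j)/b)$ produces $\theta N^{2}H_{N,a,b}$. For the strong-selection limit I split on the sign of $\delta+\theta$. If $\theta<-\delta$ then $\lambda\to+\infty$ and the selection-driven chain moves strictly downward, so starting from $S_{N-1}$ the chain traverses every state once while from $S_{1}$ it spends only a geometric time at $S_{1}$ before absorbing at $S_{0}$; the master identity reproduces this as $(1+\lambda)\Phi_{N,j}(\lambda)\to 1$ for $j\geq 2$ and $\to 2$ for $j=1$. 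The case $\theta>-\delta$ is the mirror image under $\lambda\leftrightarrow\lambda^{-1}$ with doubling at $j=N-1$, and $\theta=-\delta$ is just the neutral formula since $\lambda\equiv 1$ regardless of $\beta$. Inserting these pointwise limits into \eqref{eq:total_investment} and isolating the unique boundary term from the $H_{N,a,b}$--style bulk produces the three formulas in Part (3).

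For Part (1), the polynomial identity
\begin{equation*}
(1+\lambda)\bigl[\lambda^{N-1-j}(\lambda-1)+\lambda^{N-1}-1\bigr]-(\lambda^{N}-1)=\lambda^{N-1-j}(\lambda^{2}-1)+\lambda(\lambda^{N-2}-1)
\end{equation*}
has the same sign as $\lambda^{N}-1$ for every $\lambda>0$, which gives the key estimate $(1+\lambda)\Phi_{N,j}(\lambda)\geq 1$ and therefore $E_{mix}(\theta)\geq \tfrac{N^{2}\theta}{2}H_{N,a,b}$. The extra lower-bound correction $\tfrac{1}{\max(a,b)(N-1)}$ is extracted by noting that this estimate is sharpened to $\geq 2$ at whichever of $j\in\{1,N-1\}$ is favoured by the sign of $\log\lambda$; the matching upper bound comes from a symmetric control on $(1+\lambda)\Phi_{N,j}(\lambda)$ together with the fact that the weight $\min(j/a,(N-j)/b)/[j(N-j)]$ attains its maximum near $j\approx aN/(a+b)$, where $\min(j/a,(N-j)/b)\leq \lfloor(N-1)/2\rfloor/\min(a,b)$. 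The announced quadratic growth $E_{mix}\asymp N^{2}$ is then immediate because $H_{N,a,b}$ is bounded as $N\to\infty$ (Proposition~\ref{prop: limit HNab}).

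Part (4) is the delicate one. Writing $\lambda_{N}:=e^{-\beta(\delta_{N}+\theta)}$ and using $\delta_{N}\to\delta_{\infty}=-c$ for DG or $-c(1-r/n)$ for PGG, I would prove the pointwise limit $(1+\lambda_{N})\Phi_{N,j_{N}}(\lambda_{N})\to 1+\min(\lambda_{\infty},\lambda_{\infty}^{-1})=1+e^{-\beta|\delta_{\infty}+\theta|}$ whenever $j_{N}/N\to x\in(0,1)$, by separating the regimes $\lambda_{\infty}\gtrless 1$ and tracking the dominant exponential in the master identity. Then $\tfrac{1}{N^{2}}\sum_{j}(n_{1,j}+n_{N-1,j})\min(j/a,(N-j)/b)$ is a Riemann sum for $(1+e^{-\beta|\delta_{\infty}+\theta|})\int_{0}^{1}\min(x/a,(1-x)/b)/[x(1-x)]\,dx$, and the computation carried out immediately before Theorem~\ref{thm: main thm1} evaluates the integral to $H_{a,b}$, giving the two stated limits. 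The main obstacle, and what makes Part (4) the hardest of the four, is the uniform envelope needed to exchange sum and $N$-limit: the factor $(1+\lambda)\Phi_{N,j}(\lambda)$ fails to converge uniformly in $j$ because of the boundary doubling from Part (3), precisely where the weight $\min(j/a,(N-j)/b)$ becomes small, so the endpoint contributions must be isolated and shown to be $o(N^{2})$ relative to the bulk.
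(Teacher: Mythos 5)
Your master identity is just Lemma \ref{lem: N entries} rewritten in the variable $\lambda=e^{-x}$, and with it Parts (2) and (3) go through exactly as you say; this is the paper's own computation organised state-by-state instead of through the coefficients $\eta_j$. Your Part (1) lower bound is also sound: your polynomial identity gives $(1+\lambda)\Phi_{N,j}(\lambda)\geq 1$ for all $\lambda>0$, and the endpoint sharpening is verifiable, e.g. $(1+\lambda)\Phi_{N,1}(\lambda)-2=\frac{(\lambda-1)(\lambda^{N-2}-1)}{\lambda^{N}-1}\geq 0$ for $\lambda\geq 1$, which together with $\min(1/a,(N-1)/b)\geq 1/\max(a,b)$ yields the stated bound. (Incidentally, your value at $\theta=-\delta$ --- the neutral formula $-\delta N^{2}H_{N,a,b}$ --- is the correct one, since $x\equiv 0$ there; the bracket displayed in the theorem should in fact simplify to $2NH_{N,a,b}$, because $\sum_{j=1}^{N-2}h_j=H_{N,a,b}-h_0$ rather than $H_{N,a,b}-h_1$.)

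There are, however, two genuine gaps. First, the Part (1) upper bound: the ``symmetric control'' you invoke does not exist pointwise. The factor $(1+\lambda)\Phi_{N,j}(\lambda)$ exceeds $2$ for suitable $\lambda$ (for $N=3$, $j=1$, $\lambda=2$ it equals $15/7$, and its supremum over $\lambda$ is strictly above $2$ for every $N$), whereas the stated upper bound is asymptotically $N^{2}\theta H_{N,a,b}$, i.e.\ corresponds to the constant $\frac{2(N-1)}{N}<2$ per unit of $H_{N,a,b}+M$; so no pointwise bound on the factor (the best elementary one, obtained by splitting the numerator, is $3$) can deliver the inequality with the prefactor $N(N-1)$, only the cruder quadratic-growth statement. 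Moreover the weight inequality you quote concerns $\min(j/a,(N-j)/b)$ rather than the needed quantity $\min(j/a,(N-j)/b)/[j(N-j)]$, and as written it fails near $j\approx aN/(a+b)$ (e.g.\ $a=b$, $N$ even, $j=N/2$); what is required is the bound of Lemma \ref{lem: m and M}. To get the stated upper bound you must bound the geometric average, i.e.\ keep the cost in the form $\frac{N^{2}\theta}{2}\,\frac{1+u}{1+\cdots+u^{N-1}}\big[(1+\cdots+u^{N-2})H_{N,a,b}+\sum_{j}u^{j}h_j\big]$ and apply Lemma \ref{lem: max} together with Lemma \ref{lem: m and M}, as the paper does. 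Second, in Part (4) you correctly identify the interchange of the $j$-sum with the $N$-limit as the crux, but you leave it open. It is fixable --- either by your route, using a uniform bound on the factor plus the fact that the weights are $\frac{1}{N}g(j/N)$ with $g$ bounded, or, more economically, by the paper's sandwich: factor out $w(\theta)=\frac{(1+e^{x})(1+\cdots+e^{(N-2)x})}{1+\cdots+e^{(N-1)x}}$ exactly and squeeze the remaining bracket between $H_{N,a,b}+\underline{h}$ and $H_{N,a,b}+\overline{h}$, which eliminates any per-$j$ analysis and any Riemann-sum argument --- but as written this step, like the upper bound, is not yet a proof.
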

It is worth noticing that the neutral drift and strong selection limits of the cost function do not depend on the underlying games, but the infinite population limit does.

The proof of this Theorem will be given in Section \ref{sec: asymptotic behaviour}. Figures \ref{fig:weakstrong} and \ref{fig:infinitepopulation} provide  numerical simulations of the neutral drift and strong selection limits and the infinite population one.

The following result provides a detailed analysis for the minimisation problem \eqref{eq: min prob}  for $a=b$. Note that since $N\geq 2$, this case belongs to the interesting regime where $\frac{1}{N-1}\leq \frac{b}{a}\leq N-1$, see Proposition \ref{prop: 1st result} above. Mathematically, this case is distinguishable since it gives rise to many useful and beautiful symmetric properties and cancellations, see Section \ref{sec: reward and punishment}. We also numerically investigate the case $a\neq b$ and conjecture that the result also holds true.

\begin{theorem}[Optimisation problems and phase transition phenomenon]
\label{thm: main thm2}
\begin{enumerate}\
\item(Phase transition phenomena and behaviour under the threshold) For $a=b$, there exists a threshold value $\beta^*$ given by 
$$
\beta^*=-\frac{F^*}{\delta}>0,
$$
with 
$$
F^*=\min\{F(u):\quad u>1\},
$$
where $F(u):=\frac{Q(u)}{uP(u)}-\log (u)$ for 
\begin{align}
\label{eq: def of P}
P(u)&:=(1+u)\Bigg[\Big(\sum_{j=0}^{N-2}\Big(H_{N,a,b} + \frac{\min(\frac{j+1}{a},\frac{N-j-1}{b})}{(j+1)(N-j-1)}\Big) u^j\Big)\Big(\sum_{j=1}^{N-1} j u^{j-1}\Big)\notag
\\&\qquad\qquad-\Big(\sum_{j=1}^{N-2}\Big(H_{N,a,b} + \frac{\min(\frac{j+1}{a},\frac{N-j-1}{b})}{(j+1)(N-j-1)}\Big) j u^{j-1}\Big)\Big(\sum_{j=0}^{N-1}u^j\Big)\Bigg]\notag
\\&\qquad\qquad-\Big(\sum_{j=0}^{N-2}\Big(H_{N,a,b} + \frac{\min(\frac{j+1}{a},\frac{N-j-1}{b})}{(j+1)(N-j-1)}\Big) u^j\Big)\Big(\sum_{j=0}^{N-1}u^j\Big)
\end{align}
and
\begin{align}
\label{eq: def of Q}
    Q(u)&:=(1+u)\Big(\sum_{j=0}^{N-2}\Big(H_{N,a,b} + \frac{\min(\frac{j+1}{a},\frac{N-j-1}{b})}{(j+1)(N-j-1)}\Big) u^j\Big)\Big(\sum_{j=0}^{N-1} u^{j}\Big),
\end{align}
with $u:=e^x$, such that $\theta\mapsto E_{mix}(\theta)$ is non-decreasing for all $\beta\leq \beta^*$ and it is non-monotonic when $\beta>\beta^*$. As a consequence, for $\beta\leq \beta^*$ 
\begin{equation}
\label{eq: smallbeta}
\min\limits_{\theta\geq\theta_0}E_{mix}(\theta)=E_{mix}(\theta_0).
\end{equation}  
\item (Behaviour above the threshold value) For $\beta>\beta^*$, the number of changes of sign of $E_{mix}'(\theta)$ is at least two for all $N$ and there exists an $N_0$ such that the number of changes is exactly 2 for $N\leq N_0$. As a consequence,  
for $N\leq N_0$, there exist $\theta_1<\theta_2$ such that for $\beta>\beta^*$, $E_{mix}(\theta)$ is increasing when $\theta< \theta_1$, decreasing when $\theta_1<\theta<\theta_2$, and increasing when $\theta>\theta_2$. Thus, for $N\leq N_0$, 
$$
\min\limits_{\theta\geq\theta_0}E_{mix}(\theta)=\min\{E_{mix}(\theta_0),E_{mix}(\theta_2)\}.
$$
\end{enumerate}
\end{theorem}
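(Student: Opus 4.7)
My strategy is to reduce the whole problem to the analysis of the single rational function $F(u)$ on $(1,\infty)$, where $u:=e^{\beta(\delta+\theta)}$. Two features make this possible. First, because $\delta=\Pi_C(j)-\Pi_D(j)$ is independent of $j$ in both games, the ratio $u_{i,i-1}/u_{i,i+1}=1/u$ is the same for every $i$, so the absorbing birth--death chain has constant bias. Standard formulas for fundamental matrices of birth--death chains (in the spirit of those used in \cite{duong2021cost}) then give $n_{1j}$ and $n_{N-1,j}$ as explicit rational functions of $u$. Second, for $a=b$ the weight $\min(j/a,(N-j)/b)=\min(j,N-j)/a$ is symmetric about $j=N/2$, which produces enough cancellation for the sum in \eqref{eq:total_investment} to collapse into a compact form. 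Combining these yields
\[
E_{mix}(\theta) \;=\; \frac{\theta}{2}\,\tilde G(u)
\]
for an explicit rational function $\tilde G$, with the coefficients $a_j:=H_{N,a,b}+\frac{\min((j+1)/a,(N-j-1)/b)}{(j+1)(N-j-1)}$ appearing naturally as the combinatorial repackaging that makes $\tilde G$ tractable.

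Using $du/d\theta=\beta u$ and $\beta\theta=\log u-\beta\delta$, the critical-point equation $E'_{mix}(\theta)=0$ rearranges to
\[
\log u - \beta\delta \;=\; -\,\frac{\tilde G(u)}{u\,\tilde G'(u)}.
\]
The key algebraic identity underlying the definitions \eqref{eq: def of P}--\eqref{eq: def of Q} is
\[
-\,\frac{\tilde G(u)}{u\,\tilde G'(u)} \;=\; \frac{Q(u)}{u\,P(u)},
\]
whose verification is a routine but tedious manipulation: the ``$(1+u)$'' prefactor in $P$ and $Q$ arises from the index shift $\sum_{j=0}^{N-2}(\cdot)u^j\leftrightarrow\sum_{j=0}^{N-1}(\cdot)u^j$ that accompanies the quotient rule, while the bracket $A(u)B(u)-C(u)D(u)$ in $P$ is exactly the cross-term appearing when one differentiates a ratio (here $B=D'$ and $C=A'$). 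With this identity the zeros of $E'_{mix}$ are in bijection with the solutions of $F(u)=-\beta\delta=\beta|\delta|$ in $u>0$.

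Part (1) now follows from the shape of $F$ on $(1,\infty)$ (which is the relevant range, since $\theta\geq\theta_0>-\delta$ when $\omega\geq 1/2$). A degree count shows that the leading $u^{2N-3}$ coefficients of $(1+u)[AB-CD]$ and of $AD$ cancel, so $P$ has degree at most $2N-4$ and $Q(u)/(uP(u))$ grows linearly in $u$; together with a direct evaluation at $u=1^{+}$ (using $D(1)=N$ and the definition of $H_{N,a,b}$) this gives $F(u)\to+\infty$ at both ends of $(1,\infty)$, so $F^*=\min_{u>1}F(u)$ is attained and positive, and hence $\beta^*=-F^*/\delta>0$. For $\beta\leq\beta^*$ one has $\beta|\delta|\leq F^*\leq F(u)$ on $(1,\infty)$, so $E'_{mix}$ never vanishes there; a sign check at any convenient point (for instance as $\beta\to 0$, compared with the neutral-drift limit of Theorem~\ref{thm: main thm1}) fixes the sign to be $\geq 0$, which proves the monotonicity and hence \eqref{eq: smallbeta}.

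For Part (2), if $\beta>\beta^*$ then the horizontal line $F\equiv\beta|\delta|$ strictly exceeds $F^*$, and because $F\to+\infty$ at both ends of $(1,\infty)$ it intersects this line in at least two points, giving at least two sign changes of $E'_{mix}$. To upgrade this to \emph{exactly} two for $N\leq N_0$ one must show that $F$ is unimodal on $(1,\infty)$: after clearing denominators $F'(u)=0$ becomes an explicit polynomial equation whose degree grows with $N$, and one needs to check that it has exactly one root in $(1,\infty)$. This is where I expect the main obstacle to lie --- uniform unimodality in $N$ would require delicate sign estimates on the coefficients $a_j$, whereas for each small $N$ Descartes' rule of signs (or a Sturm-sequence computation) applied symbolically suffices, which is precisely the source of the finite bound $N_0$. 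Once unimodality is established, the two critical points $\theta_1<\theta_2$ exist, and because $E_{mix}(\theta)\to+\infty$ as $\theta\to+\infty$ the ``increasing--decreasing--increasing'' pattern is forced, yielding $\min_{\theta\geq\theta_0}E_{mix}=\min\{E_{mix}(\theta_0),E_{mix}(\theta_2)\}$.
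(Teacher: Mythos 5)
Your overall route is the same as the paper's: express $E_{mix}'(\theta)$ through the factorisation $E_{mix}'(\theta)=\frac{N^2}{2g(x)^2}\bigl[Q(u)-(x-\beta\delta)\,uP(u)\bigr]=\frac{N^2}{2g(x)^2}\,uP(u)\bigl(F(u)+\beta\delta\bigr)$ with $F(u)=\frac{Q(u)}{uP(u)}-\log u$ (your identity $-\tilde G/(u\tilde G')=Q/(uP)$ is exactly Lemma \ref{lem: Emix and derivative}), then read the phase transition off the shape of $F$ on $(1,\infty)$. However, there is a genuine gap at the step you dismiss as ``a degree count plus a direct evaluation at $u=1^+$''. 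To conclude that the sign of $E_{mix}'$ on $u>1$ is that of $F(u)+\beta\delta$, that $F$ is finite on all of $(1,\infty)$, and that $F\to+\infty$ at both ends (so that $F^*$ is attained and the level set $\{F=-\beta\delta\}$ has the claimed structure), you need $P(u)>0$ for \emph{every} $u>1$, together with $P(1)=0$, $Q(1)>0$, and a \emph{positive} leading coefficient of $P$. A degree count only bounds $\deg P\le 2N-4$ and says nothing about interior zeros: if $P$ vanished somewhere in $(1,\infty)$, $F$ would have a pole there, $uP$ would change sign, and both your ``$F\to+\infty$ at both ends, hence a minimum exists'' argument and your counting of sign changes of $E_{mix}'$ would break down. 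In the paper this is precisely Proposition \ref{prop: coeffs}: for $a=b$ the coefficients satisfy $p_k=-p_{2N-4-k}<0$ for $k\le N-3$, $p_{N-2}=0$, $p_{2N-4}=a_{N-3}>0$, so by Descartes' rule $P$ has exactly one positive root, namely $u=1$; its proof is a lengthy anti-symmetry and discrete integration-by-parts computation, not a routine manipulation, and it is the heart of why the case $a=b$ is tractable.

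Two smaller points. First, the theorem's monotonicity claim (and the ``exactly two sign changes'' claim) also covers the region $u\le 1$; you discard it by assuming $\omega\ge 1/2$, which is not a hypothesis of the theorem. The paper handles it with the same Proposition \ref{prop: coeffs}: on $(0,1]$ one has $P(u)\le 0$ while $Q(u)>0$ and $x-\beta\delta=\beta\theta>0$, so $E_{mix}'>0$ there. Second, your plan for ``exactly two'' intersections for $N\le N_0$ (unimodality of $F$ on $(1,\infty)$, checked via sign/Sturm analysis of the polynomial obtained by clearing denominators in $F'=0$) is in the same spirit as the paper's Proposition \ref{prop: M} on the polynomial $M(u)=uQ'P-Q(P+uP')-uP^2$, whose symmetric coefficients and value $M(1)=-Q(1)P'(1)<0$ give at least two positive roots, with ``exactly two'' only asserted for small $N$; so identifying this as the remaining obstacle is fair, but it again relies on the sign structure of $P$ (e.g.\ $P'(1)>0$ comes from the coefficient anti-symmetry). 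In short: the skeleton matches the paper, but the load-bearing ingredient — the global positivity of $P$ on $(1,\infty)$ and non-positivity on $(0,1]$ for $a=b$ — is asserted rather than proved in your proposal.
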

The proof of this Theorem is detailed in Section \ref{sec: reward and punishment}.
\subsection{Algorithms}
Based on the results above, we describe below an algorithm for the computation of the critical threshold $\beta^*$ of the strength selection.
\begin{algorithm}[\textbf{Finding optimal cost of incentive $\theta^\star$}]\ \\ 
\label{algo1}
\textbf{Inputs}: i) $N\leq N_0$: population size, ii) $\beta$: intensity of selection, iii) game and parameters: DG ($c$ and $B$) or PGG ($c$, $r$ and $n$), iv) $\omega$: minimum desired  cooperation level, v) $a$ and $b$: reward and punishment efficiency ratios.
\begin{enumerate}[(1)]
\item Compute $\delta$ \Big\{in DG: $\delta = - (c + \frac{B}{N-1})$; in PGG: $\delta = -c \left(1 - \frac{r(N-n)}{n(N-1)} \right)$\Big\}.
\item Compute $\theta_0 = \frac{1}{(N-1)\beta} \log\left(\frac{\omega}{1-\omega}\right) - \delta$; 
\item Compute 
$$
F^*=\min\{F(u): u>1\},
$$
where $F(u)$ is defined in \eqref{eq: F}.
\item Compute $\beta^*=-\frac{F^*}{\delta}$.
\item If $\beta\leq \beta^*$:
$$
\theta^*=\theta_0,\quad \min E_{mix}(\theta)=E_{mix}(\theta_0).
$$ 
\item Otherwise (i.e. if $\beta>\beta^*$)
\begin{enumerate}
\item Compute $u_2$ that is the largest root of the equation $F(u)+\beta \delta=0$.
\item Compute $\theta_2=\frac{\log u_2}{\beta}-\delta$.
	\begin{itemize} 
		\item If $\theta_2 \leq \theta_0$: $\theta^*=\theta_0,\quad \min E_{mix}(\theta)=E_{mix}(\theta_0)$;
		\item Otherwise (if  $\theta_2 > \theta_0$): 
			\begin{itemize}  
				\item If $E_{mix}(\theta_0)\leq E_{mix}(\theta_2)$: $\theta^*=\theta_0,\quad \min E_{mix}(\theta)=E_{mix}(\theta_0)$;
				\item if $E(\theta_2)< E(\theta_0)$: $\theta^*=\theta_2,\quad \min E_{mix}(\theta)=E_{mix}(\theta_2)$.
			\end{itemize} 
	\end{itemize} 
\end{enumerate}
\end{enumerate}
\noindent \textbf{Output}: $\theta^*$ and $E_{mix}(\theta^*)$.
\end{algorithm} 

\begin{figure}[H]
\centering
\includegraphics[width=1\textwidth]{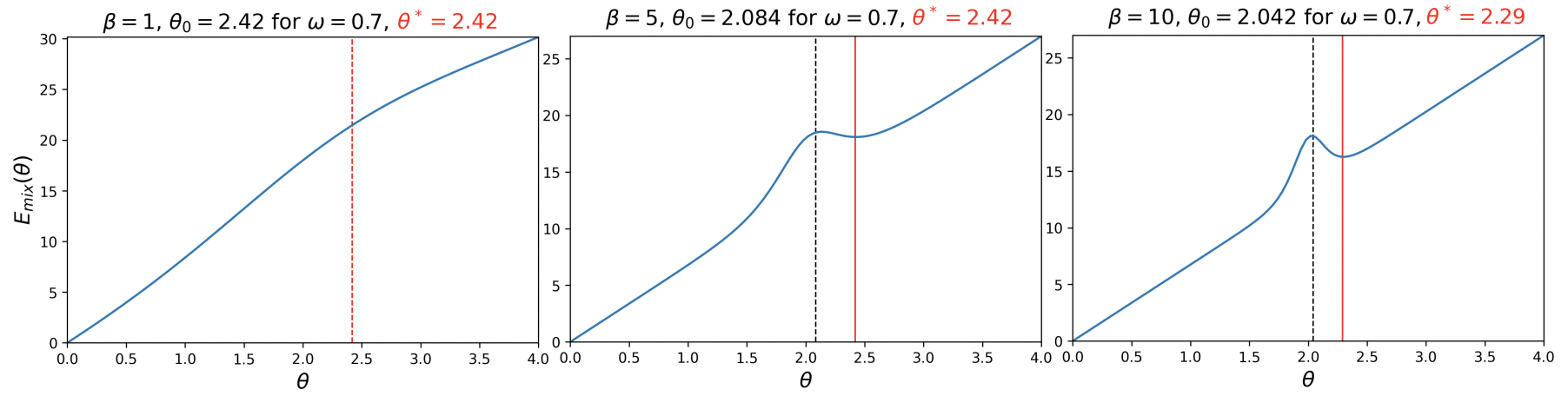}
\caption{We use Algorithm 1 to find the optimal cost per capita $\theta$, denoted by $\theta^*$ (represented by the red line in the figures), that minimises $E_{mix}(\theta)$ while ensuring a minimum level of cooperation $\omega$, where $N=3$ for DG with $B = 2, c = 1$. The left image illustrates the behaviour of the cost function for $\beta = 1$, the middle one for $\beta=5$, while the right one for $\beta = 10$. The critical threshold value for the strength of selection is $\beta^* = 3.67$ and the desired level of cooperation is $\omega=0.7$. The numerical results  obtained are in accordance with Theorem \ref{thm: main thm2}: for $\beta\leq\beta^*$, the cost function increases, while for $\beta>\beta^*$ it is not monotonic.}
\label{fig:optimalDG}
\end{figure}


\begin{figure}[H]
    \centering 
\begin{subfigure}{1\textwidth}
  \includegraphics[width=\linewidth]{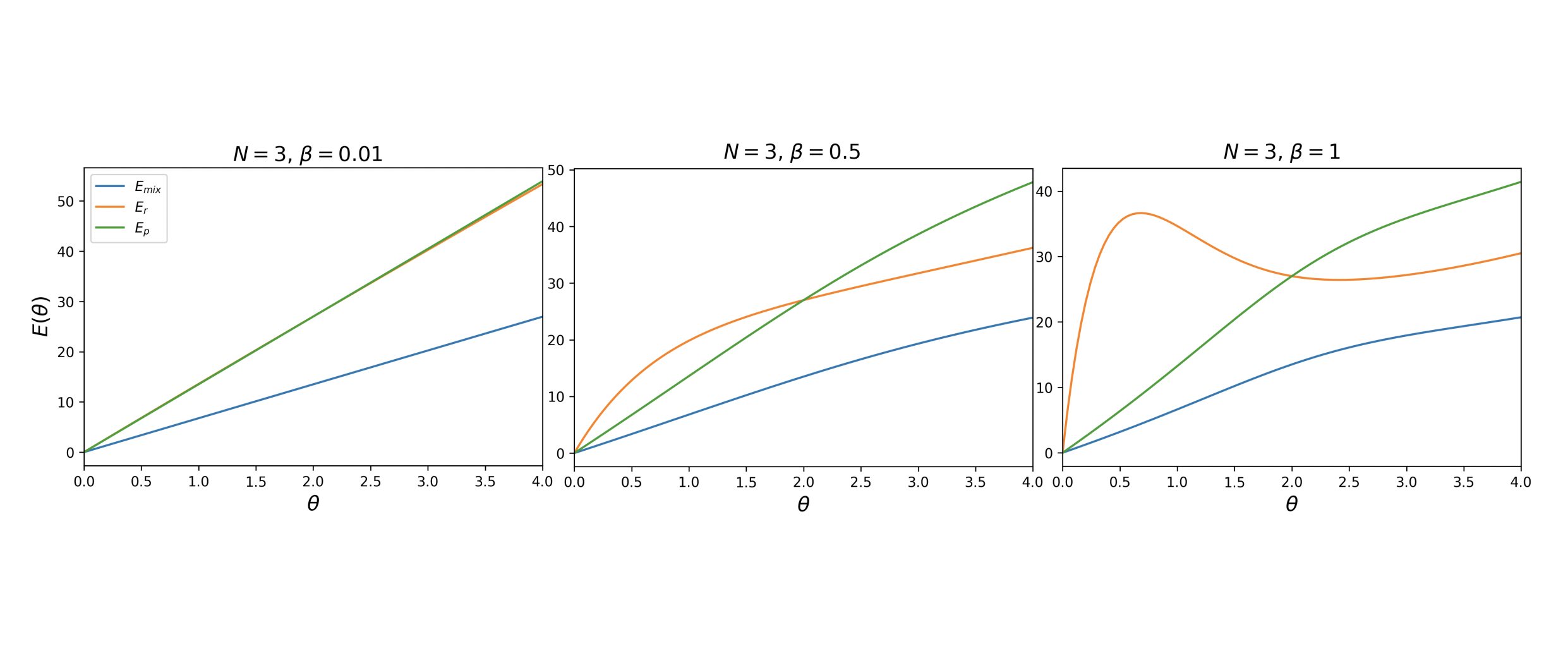}
\end{subfigure}\hfil 
\smallskip
\begin{subfigure}{1\textwidth}
  \includegraphics[width=\linewidth]{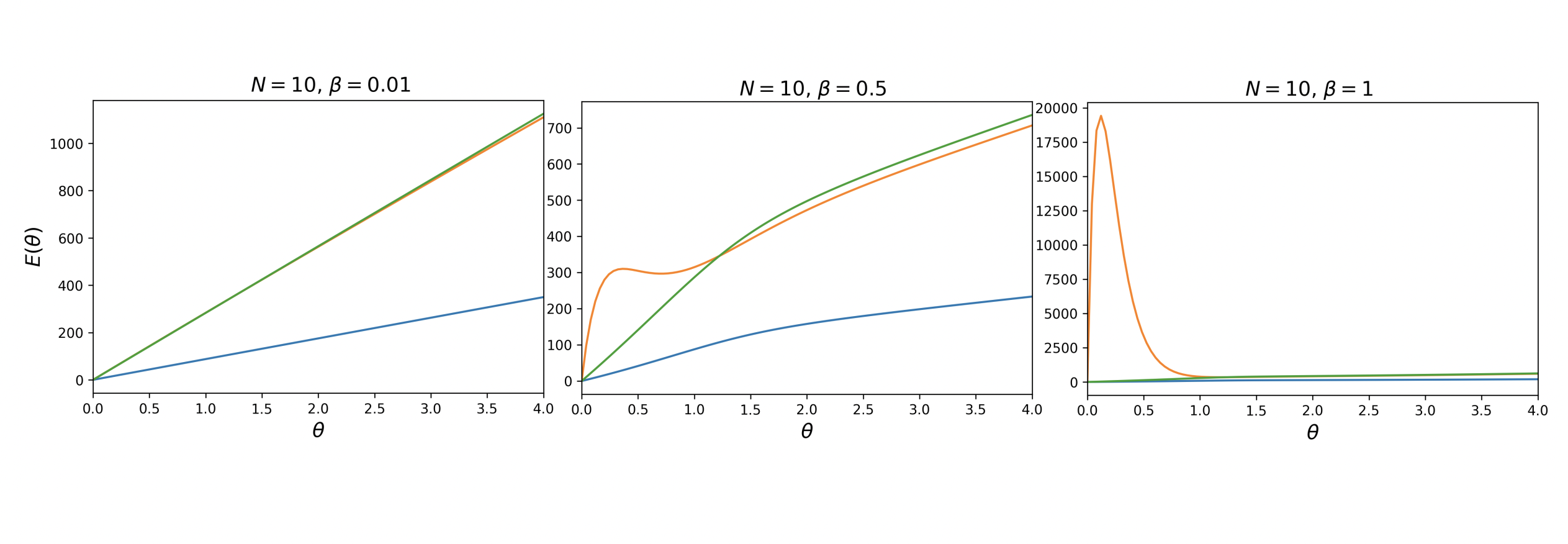}
\end{subfigure}\hfil 
\caption{Comparison of the total costs  for mixed incentives (blue), only reward (orange), and only punishment (green) as a function of the cost per capita $\theta$, for different values of $N$ and $\beta$ for DG with $B = 2, c = 1$. The first row illustrates the behaviour of the different cost functions when $N=3$ with $\beta= 0.01, 0.5, 1$ respectively, while the second row presents the comparison between the cost functions for $N=10$ with $\beta= 0.01, 0.5, 1$, respectively. As proven in Proposition \ref{prop: 1st result}, mixed incentives are less costly than either reward or punishment.}
\label{fig:comparison}
\end{figure}

\section{The expected cost function, phase transition, and the minimisation problem}
\label{sec: reward and punishment}
In this section, we study in detail the cost function, establishing the phase transition phenomena and solving the minimisation problem \ref{eq: min prob} of finding the optimal incentive, thus proving Theorem \ref{thm: main thm2}.

We consider the case 
$$
\frac{1}{N-1}<\frac{b}{a}< N-1,
$$ and focus on the most important case when $a=b$. This is due to Proposition \ref{prop: 1st result}, when $\frac{b}{a}$ is not in this interval, the mixed incentive problem reduces to either the reward or the punishment problem that has been studied in \cite{duong2021cost}.

\subsection{The cost function and its derivative}
\label{sec: cost function}
In this section, we provide the explicit computation for the cost function and its derivative. The class of cooperation dilemmas, namely DG and PGG, introduced in Section \ref{sec: models} are of crucial importance in the analysis of this paper. This is because, as already mentioned, the difference in payoffs between a cooperator and a defector, $\delta=\Pi_C(i)-\Pi_D(i)$, does not depend on the state $i$, which gives rise to explicit and analytically tractable formulas for the entries of the fundamental matrix $\mathcal{N}$ of the absorbing Markov chain describing the population dynamics. These entries are given in the following lemma, whose detailed proof can be found in \cite{duong2021cost}.
\begin{lemma}
\label{lem: N entries}
The entries $n_{1,i}$ and $n_{N-1,i}$, $i=1,\ldots, N-1$, of the fundamental matrix $\mathcal{N}$ (see its definition underneath \eqref{eq: transition probabilities}) are given by
\begin{equation*}
n_{1,i}=\frac{N^2}{(N-i)i}\frac{(e^{(N-1)x}-e^{(i-1)x})(1+e^x)}{e^{Nx}-1},\quad n_{N-1,i}=\frac{N^2}{(N-i)i}\frac{(e^{ix}-1)(1+e^x)}{e^{Nx}-1},
\end{equation*}
where $x=x(\theta):=\beta(\theta+\delta)$.
\end{lemma}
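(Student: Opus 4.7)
The plan is to compute the $i$-th column of $\mathcal{N}=(I-U)^{-1}$ directly from the defining relation $(I-U)\mathcal{N}=I$. Writing $M_k:=n_{k,i}$ for $k=1,\ldots,N-1$ and extending with the absorbing convention $M_0=M_N=0$, this becomes the inhomogeneous birth--death recurrence
$$(u_{k,k-1}+u_{k,k+1})M_k \;=\; u_{k,k-1}M_{k-1} + u_{k,k+1}M_{k+1} + \mathbf{1}_{\{k=i\}},\qquad 1\le k\le N-1,$$
where the indicator on the right-hand side comes from the $+1$ on the diagonal of $I$. The structural fact to exploit is that, because $\Pi_C(k)-\Pi_D(k)=\delta$ does not depend on $k$ for either DG or PGG, the ratio
$$\lambda \;:=\; \frac{u_{k,k-1}}{u_{k,k+1}} \;=\; \frac{1+e^{-x}}{1+e^{x}} \;=\; e^{-x}$$
is independent of the state $k$. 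Hence the recurrence has constant coefficients in $k$, which is precisely what makes the Markov chain analytically tractable.

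The second step is to integrate the recurrence via the discrete gradient $\Delta_k:=M_{k+1}-M_k$. Off the source point, one obtains $\Delta_k=\lambda\Delta_{k-1}$, while at $k=i$ the indicator forces the single jump $\Delta_i=\lambda\Delta_{i-1}-1/u_{i,i+1}$. Therefore $M_k$ is piecewise affine in the geometric partial sums $S_k:=\sum_{j=0}^{k-1}\lambda^j=(1-\lambda^k)/(1-\lambda)$: for constants to be determined,
$$M_k \;=\; \begin{cases} A_1+B_1 S_k, & 0\le k\le i,\\ A_2+B_2 S_k, & i\le k\le N.\end{cases}$$
Imposing the two boundary conditions $M_0=M_N=0$, continuity at $k=i$, and the jump condition gives four linear equations whose unique solution produces the Green-function representation
$$n_{k,i} \;=\; \frac{1}{\lambda^i\, u_{i,i+1}\, S_N}\times \begin{cases} (S_N-S_i)\,S_k, & k\le i,\\ S_i\,(S_N-S_k), & k\ge i.\end{cases}$$

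The final step is to specialise to $k=1$ and $k=N-1$ and simplify. For $k=1$ one uses $S_1=1$ and $S_N-S_i=\lambda^i\sum_{j=0}^{N-1-i}\lambda^j$; for $k=N-1$ one uses $S_N-S_{N-1}=\lambda^{N-1}$. Substituting $\lambda=e^{-x}$, the explicit value $u_{i,i+1}=\frac{(N-i)i}{N^{2}}\cdot\frac{e^{x}}{1+e^{x}}$, and multiplying numerators and denominators by $e^{Nx}$ to recast the answer in terms of positive powers of $e^{x}$, the remaining geometric sums collapse to $(e^{Nx}-e^{ix})/(e^{x}-1)$ and $(e^{ix}-1)/(e^{x}-1)$. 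The common prefactor $\frac{N^{2}(1+e^{x})}{(N-i)i\,(e^{Nx}-1)}$ emerges naturally, leaving precisely $e^{(N-1)x}-e^{(i-1)x}$ in the first formula and $e^{ix}-1$ in the second. I do not anticipate a genuine obstacle: the only delicate point is the sign-and-exponent bookkeeping in this last simplification, and the overall method is the standard Green-function construction for birth--death chains, made fully explicit here by the state-independence of $\lambda$.
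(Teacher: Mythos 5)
Your derivation is correct: the recurrence you write down is exactly the $i$-th column of $(I-U)\mathcal{N}=I$ with the absorbing convention $M_0=M_N=0$, the ratio $u_{k,k-1}/u_{k,k+1}=(1+e^{-x})/(1+e^{x})=e^{-x}$ is indeed state-independent for DG and PGG, and I checked that your Green-function formula $n_{k,i}=\frac{1}{\lambda^i u_{i,i+1}S_N}\bigl[(S_N-S_i)S_k\,\mathbf{1}_{k\le i}+S_i(S_N-S_k)\,\mathbf{1}_{k\ge i}\bigr]$ solves the boundary, continuity and jump conditions, and that specialising to $k=1$ and $k=N-1$ with $u_{i,i+1}=\frac{(N-i)i}{N^2}\frac{e^x}{1+e^x}$ reproduces exactly the expressions in Lemma \ref{lem: N entries}. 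Note that the paper itself does not prove this lemma but defers to \cite{duong2021cost}, where the entries are obtained by explicitly inverting the tridiagonal matrix $I-U$ in closed form; your route is a self-contained alternative that solves the same linear system as a constant-coefficient birth--death recurrence, exploiting precisely the same structural fact (the payoff difference $\delta$ independent of the state), and it has the advantage of avoiding any appeal to general tridiagonal-inverse formulas while making transparent where the factor $\frac{N^2(1+e^x)}{i(N-i)}$ comes from. One minor caveat worth recording: your formula in terms of $S_k=(1-\lambda^k)/(1-\lambda)$ covers the case $x=0$ (where $S_k=k$), whereas the exponential form stated in the lemma is singular at $x=0$ and must there be read as a limit, which is how the paper later uses it when evaluating $E_{mix}(-\delta)$.
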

Using Lemma \ref{lem: N entries}, we obtain a more explicit formula for the cost function as follows:
\begin{align*}
E_{mix}(\theta) &= \frac{\theta}{2}\sum_{j=1}^{N-1}(n_{1j}+n_{N-1,j}) \min\Big(\frac{j}{a}, \frac{N-j}{b}\Big)\\
&=\frac{\theta}{2} \sum_{j=1}^{N-1} \frac{N^2}{(N-j)j} \Big((W^{-1})_{1j} +(W^{-1})_{N-1,j}\Big) \min\Big(\frac{j}{a}, \frac{N-j}{b}\Big) \notag\\
&= \frac{N^2 \theta(1+e^x)}{2(e^{Nx}-1)} \sum_{j=1}^{N-1} \frac{e^{(N-1)x} - e^{(j-1)x} + e^{jx} - 1 }{j(N-j)}\min\Big(\frac{j}{a}, \frac{N-j}{b}\Big)\notag
\\&=\frac{N^2 \theta(1+e^x)}{2(e^{Nx}-1)}\Big[\big(e^{(N-1)x}-1\big)\sum_{j=1}^{N-1}\frac{1}{j(N-j)}\min\Big(\frac{j}{a}, \frac{N-j}{b}\Big)+
\\&\hspace{4cm}+ (e^x-1)\sum_{j=1}^{N-1}\frac{e^{(j-1) x}}{j(N-j)}\min\Big(\frac{j}{a}, \frac{N-j}{b}\Big)\Big]\notag
\\&=\frac{N^2 \theta(1+e^x)}{2(1+e^x+\ldots e^{(N-1)x})}\Big[\big(1+e^x+\ldots +e^{(N-2)x}\big)\sum_{j=1}^{N-1}\frac{1}{j(N-j)}\min\Big(\frac{j}{a}, \frac{N-j}{b}\Big)+
\\&\hspace{8cm}+\sum_{j=1}^{N-1}\frac{e^{(j-1) x}}{j(N-j)}\min\Big(\frac{j}{a}, \frac{N-j}{b}\Big)\Big].
\end{align*}

To study the monotonicity of the cost function, in the following lemma, we compute the derivative of $E_{mix}(\theta)$ with respect to $\theta$.
\begin{lemma}
\label{lem: Emix and derivative}
The total cost of interference for the mixed institutional incentive is given by
\begin{align*}
E_{mix}(\theta)&=\frac{N^2 \theta(1+e^x)}{2(1+e^x+\ldots e^{(N-1)x})}\Big[\big(1+e^x+\ldots +e^{(N-2)x}\big)\sum_{j=1}^{N-1}\frac{1}{j(N-j)}\min\Big(\frac{j}{a}, \frac{N-j}{b}\Big)
\\&\qquad+\sum_{j=1}^{N-1}\frac{e^{(j-1) x}}{j(N-j)}\min\Big(\frac{j}{a}, \frac{N-j}{b}\Big)\Big].
\end{align*}
Its derivative is given by
\begin{align*}
E_{mix}'(\theta)=\frac{N^2}{2g(x)^2}\left[Q(u)-(x-\beta \delta)uP(u)\right],
\end{align*}
where $P(u)=f(x)g'(x)-f'(x)g(x)$ and $Q(u)=f(x)g(x)$ for $f(x)=(1+u)\sum_{j=0}^{N-2} u^j\Big(H_{N,a,b} + \frac{\min(\frac{j+1}{a},\frac{N-j-1}{b})}{(j+1)(N-j-1)}\Big)$ and $g(x)=\sum_{j=0}^{N-1} u^j$ with $u=e^x$.
\end{lemma}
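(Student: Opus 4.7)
The plan is straightforward once one notices that the body of the derivation of the first identity is almost complete already in the computation block immediately preceding the lemma. Starting from Equation~\eqref{eq:total_investment}, I would substitute the closed forms of $n_{1,j}$ and $n_{N-1,j}$ supplied by Lemma~\ref{lem: N entries} and add them to obtain
\[
n_{1,j} + n_{N-1,j} = \frac{N^2(1+e^x)}{j(N-j)}\cdot\frac{e^{(N-1)x} - e^{(j-1)x} + e^{jx} - 1}{e^{Nx}-1}.
\]
I would then split the numerator as $(e^{(N-1)x}-1) + (e^{jx}-e^{(j-1)x})$, use the geometric identities $e^{Nx}-1=(e^x-1)\sum_{k=0}^{N-1}e^{kx}$, $e^{(N-1)x}-1=(e^x-1)\sum_{k=0}^{N-2}e^{kx}$, and $e^{jx}-e^{(j-1)x}=e^{(j-1)x}(e^x-1)$ to cancel the common factor $(e^x-1)$, and regroup what remains into the two sums displayed in the lemma. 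This is exactly the chain of equalities written out just before the statement, so the first identity is established.

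For the derivative, I would first rewrite the cost function in a compact fractional form. After the reindexing $k=j-1$ in the single-sum term and using the definition~\eqref{eq: HNab} of $H_{N,a,b}$, the bracket inside the formula for $E_{mix}(\theta)$ becomes
\[
A(u):=\sum_{j=0}^{N-2}u^{j}\Big[H_{N,a,b}+\frac{\min((j+1)/a,(N-j-1)/b)}{(j+1)(N-j-1)}\Big].
\]
Setting $f(x)=(1+u)A(u)$, $g(x)=\sum_{j=0}^{N-1}u^{j}$ and $u=e^{x}$, the cost function takes the form $E_{mix}(\theta)=\tfrac{N^{2}\theta}{2}\cdot\tfrac{f(x)}{g(x)}$. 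Applying the product and quotient rules with $\tfrac{dx}{d\theta}=\beta$ then gives
\[
E_{mix}'(\theta)=\frac{N^{2}}{2g(x)^{2}}\Big[f(x)g(x)-\beta\theta\big(f(x)g'(x)-f'(x)g(x)\big)\Big],
\]
and substituting $\beta\theta=x-\beta\delta$ together with the identifications $Q(u)=f(x)g(x)$ and $f(x)g'(x)-f'(x)g(x)=u\,P(u)$ yields the claimed expression.

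The only delicate step, and the main obstacle, is matching $f(x)g'(x)-f'(x)g(x)$ with the polynomial $P(u)$ in~\eqref{eq: def of P}. Since $\tfrac{d}{dx}=u\,\tfrac{d}{du}$, both $f'(x)$ and $g'(x)$ carry a factor of $u$, and a direct expansion with $f=(1+u)A(u)$ gives
\[
f(x)g'(x)-f'(x)g(x)=u\Big[(1+u)\big(A(u)B_{u}(u)-A_{u}(u)B(u)\big)-A(u)B(u)\Big],
\]
where $A_{u}$, $B_{u}$ denote derivatives with respect to $u$ and $B(u)=g(x)$. Writing out $B_{u}(u)=\sum_{j=1}^{N-1}ju^{j-1}$ and $A_{u}(u)=\sum_{j=1}^{N-2}j\big[H_{N,a,b}+\tfrac{\min((j+1)/a,(N-j-1)/b)}{(j+1)(N-j-1)}\big]u^{j-1}$, the bracket is recognised term-by-term as $P(u)$ of~\eqref{eq: def of P}. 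The remaining bookkeeping — keeping indices consistent under the shift $k=j-1$ and correctly absorbing the factor $u$ from $\tfrac{du}{dx}=u$ — is the only place where care is required; the rest is routine algebra.
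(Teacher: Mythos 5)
Your proposal is correct and follows essentially the same route as the paper: the first identity is the substitution of Lemma \ref{lem: N entries} into \eqref{eq:total_investment} with the geometric-sum cancellation already displayed before the lemma, and the derivative is obtained by writing $E_{mix}(\theta)=\tfrac{N^2\theta}{2}\,f(x)/g(x)$, differentiating with $dx/d\theta=\beta$, substituting $\beta\theta=x-\beta\delta$, and expanding $f(x)g'(x)-f'(x)g(x)=uP(u)$ and $f(x)g(x)=Q(u)$ in powers of $u$, exactly as in the paper's proof. Your identification of the factor $u$ coming from $du/dx=u$ and the term-by-term match with \eqref{eq: def of P} is the same bookkeeping the paper carries out.
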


See Section \ref{sec: prooflem2} for a proof of this lemma.

\subsection{The polynomial $P$}
This section contains details about 
\begin{align*}
P(u)&:=(1+u)\Bigg[\Big(\sum_{j=0}^{N-2}\Big(H_{N,a,b} + \frac{\min(\frac{j+1}{a},\frac{N-j-1}{b})}{(j+1)(N-j-1)}\Big) u^j\Big)\Big(\sum_{j=1}^{N-1} j u^{j-1}\Big)\notag
\\&\qquad\qquad-\Big(\sum_{j=1}^{N-2}\Big(H_{N,a,b} + \frac{\min(\frac{j+1}{a},\frac{N-j-1}{b})}{(j+1)(N-j-1)}\Big) j u^{j-1}\Big)\Big(\sum_{j=0}^{N-1}u^j\Big)\Bigg]\notag
\\&\qquad\qquad-\Big(\sum_{j=0}^{N-2}\Big(H_{N,a,b} + \frac{\min(\frac{j+1}{a},\frac{N-j-1}{b})}{(j+1)(N-j-1)}\Big) u^j\Big)\Big(\sum_{j=0}^{N-1}u^j\Big).
\end{align*}

The following proposition studies the properties of $P$. 
\begin{proposition}
\label{prop: coeffs}
Let $P(u)$ be the polynomial defined in \eqref{eq: def of P}.  Then it is a polynomial of degree $2N-4$,
$$
P(u)= \sum_{k=0}^{2N-4} p_k u^k,
$$
where the leading coefficient $p_{2N-4}$ is positive. When $a=b$, the coefficients of $P$ are anti-symmetric, that is we have
$$
p_k=-p_{2N-4-k}<0 \quad\text{for}\quad k=0,\ldots, N-3, \quad \text{and}\quad p_{N-2}=0.
$$
As a consequence, when $a=b$, $P$ has exactly one positive root,  which is equal to $1$.  
\end{proposition}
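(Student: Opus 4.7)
The plan is to exploit the hidden Wronskian structure of $P$. Rewrite
\[
P(u) = (1+u)\, W(u) - A(u)\, D(u),
\]
where $A(u) := \sum_{j=0}^{N-2} c_j u^j$, $D(u) := \sum_{j=0}^{N-1} u^j$, $c_j := H_{N,a,b} + \frac{\min((j+1)/a,\,(N-j-1)/b)}{(j+1)(N-j-1)} > 0$, and $W := A D' - A' D$. A direct matching of top coefficients shows that the $u^{2N-3}$ terms of $(1+u)W$ and $AD$ both equal $c_{N-2}$ and therefore cancel, leaving $p_{2N-4} = c_{N-3} > 0$; this establishes the degree $2N-4$ and the positivity of the leading coefficient for arbitrary $a, b > 0$. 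Specialising to $a = b$, the formula becomes $c_j = H_{N,a,a} + 1/(a \max(j+1, N-j-1))$, which is visibly symmetric: $c_j = c_{N-2-j}$. Hence $A$ is palindromic of degree $N-2$, so $\tilde f := (1+u) A$ and $\tilde g := D$ are both palindromic of the same degree $N-1$. A short calculation identifies $P$ with the Wronskian $\tilde f \tilde g' - \tilde f' \tilde g$; using the general fact (a two-line chain-rule computation) that the Wronskian of two palindromic polynomials of equal degree $n$ is anti-palindromic of degree $2n-2$, one obtains $u^{2N-4} P(1/u) = -P(u)$. This yields the anti-symmetry $p_k = -p_{2N-4-k}$, and in particular $p_{N-2} = 0$, $p_0 = -p_{2N-4} < 0$, and $P(1) = \sum_k p_k = 0$.

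To complete the strict negativity of $p_k$ for $k = 0, \ldots, N-3$, I would derive the explicit formula
\[
p_m = 2 \sum_{i=0}^{m} c_i (m - 2i) - (m+1)\, c_{m+1}, \qquad 0 \leq m \leq N-3,
\]
by coefficient extraction in the identity for $P$. Using Abel summation together with $\sum_{j=0}^{i} (m-2j) = (i+1)(m-i)$ and $\sum_{j=0}^{m}(m-2j) = 0$, this rewrites as
\[
p_m = -2 \sum_{i=0}^{m-1} (c_{i+1} - c_i)(i+1)(m-i) - (m+1)\, c_{m+1}.
\]
When $m \leq \lfloor (N-2)/2 \rfloor$, every increment $c_{i+1} - c_i$ is non-negative (since $c_j$ is non-decreasing on this range), and $(m+1) c_{m+1} > 0$, so $p_m < 0$. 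For the complementary range $\lfloor (N-2)/2 \rfloor < m \leq N-3$ I would fold the sum at the peak of $c$ via the identity $c_{i+1} - c_i = -(c_{N-2-i} - c_{N-3-i})$, rewriting descending increments as ascending ones and reducing back to the previous case plus a controllable remainder. Once $p_k < 0$ for $k \leq N-3$ and (by anti-palindromy) $p_k > 0$ for $k \geq N-1$, the coefficient sequence of $P$ exhibits the sign pattern $(-,\ldots,-,0,+,\ldots,+)$ with exactly one sign change; Descartes' rule of signs bounds the number of positive real roots by one, and since $P(1) = 0$ that root is $u = 1$. Equivalently, the factorisation $P(u) = (u-1) \sum_{k=0}^{N-3} |p_k|\, u^k (1 + u + \cdots + u^{2N-5-2k})$ exhibits $P/(u-1)$ as a polynomial with strictly positive coefficients, hence non-vanishing on $(0, \infty)$.

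The main obstacle is the second sub-range $\lfloor (N-2)/2 \rfloor < m \leq N-3$, where the increments $c_{i+1} - c_i$ no longer have uniform sign and the direct Abel bound breaks down. A potentially cleaner alternative is a probabilistic viewpoint: let $X$ be the random variable on $\{0, \ldots, N-1\}$ with pmf proportional to the coefficients $\tilde f_j = c_j + c_{j-1}$ of $\tilde f$ (with $c_{-1} = c_{N-1} := 0$) and let $Y$ be uniform on the same set. Then $\phi(u) := \tilde f(u)/\tilde g(u)$ equals, up to the constant $\tilde f(1)/\tilde g(1)$, the ratio of their moment-generating functions at $t = \log u$. Under $a = b$ both pmfs are symmetric about $(N-1)/2$ with the same mean, and the $\tilde f_j$ are unimodal, so Karlin's cut criterion gives $X \leq_{\mathrm{cx}} Y$ strictly. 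Strict convex order applied to $e^{tx}$ yields $\phi(u) < \phi(1)$ for every $u > 0$ with $u \neq 1$, forcing $u = 1$ to be the unique positive critical point of $\phi$ and hence the unique positive zero of $P = -\tilde g^2 \phi'$.
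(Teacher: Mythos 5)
Your first two steps are correct and, for the anti-symmetry, genuinely cleaner than the paper's argument. Writing $P=(1+u)W-AD$ with $W=AD'-A'D$, checking that the $u^{2N-3}$ terms cancel and $p_{2N-4}=c_{N-3}>0$, and then observing that $P=\tilde f\tilde g'-\tilde f'\tilde g$ with $\tilde f=(1+u)A$, $\tilde g=D$ both palindromic of degree $N-1$, so that the general identity $u^{2n-2}\bigl(fg'-f'g\bigr)(1/u)=-(fg'-f'g)(u)$ for equal-degree palindromic $f,g$ gives $u^{2N-4}P(1/u)=-P(u)$ at once: this replaces the paper's several pages of discrete integration by parts and immediately yields $p_k=-p_{2N-4-k}$, $p_{N-2}=0$ and $P(1)=0$. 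Your coefficient formula $p_m=2\sum_{i\le m}c_i(m-2i)-(m+1)c_{m+1}$ and its Abel rewriting are also correct; after substituting $c_j=H_{N,a,b}+m_j$ they coincide with the paper's expression for $p_k$.

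The gap is exactly where you flag it: the strict negativity $p_m<0$ on the range $\lfloor(N-2)/2\rfloor<m\le N-3$, which Descartes' rule needs, is never proved. ``Fold at the peak plus a controllable remainder'' is a plan, not an estimate, and the remainder is not obviously small (the descending increments contribute a positive sum of the same order as $(m+1)H_{N,a,b}$). The paper avoids the case split entirely: it keeps the $-(k+1)H_{N,a,b}$ term intact and uses the $a=b$ symmetry $m_j=m_{N-2-j}$ to write $H_{N,a,b}$ as (essentially) $2\sum_{j\le\lfloor(N-2)/2\rfloor}m_j$, so the single positive contribution $2k\sum_{j\le\lfloor k/2\rfloor}m_j$ is dominated by $2(k+1)\sum_{j\le\lfloor(N-2)/2\rfloor}m_j$ uniformly for every $k\le N-3$. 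Your probabilistic fallback does not close the hole either: the cut-criterion step is fine (with equal means, $P(Y=j)-P(X=j)$ is positive exactly at $j=0$ and $j=N-1$ and negative in between, so $X\leq_{\mathrm{cx}}Y$), and it does give $\phi(u)<\phi(1)$ for all positive $u\neq1$; but a strict global maximum of $\phi$ at $u=1$ does not force $u=1$ to be the \emph{unique} critical point of $\phi$ on $(0,\infty)$ — $\phi$ could still have a local minimum/maximum pair away from $1$, each producing an extra positive zero of $P=-\tilde g^{2}\phi'$. So the final inference is a non sequitur, and neither of your two routes, as written, delivers ``exactly one positive root'' (equivalently the sign statement $P<0$ on $(0,1)$, $P>0$ on $(1,\infty)$ that the proof of Theorem \ref{thm: main thm2} actually uses). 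Either carry out the folding estimate quantitatively, or simply import the paper's uniform bound for all $k\le N-3$ into your framework; the rest of your argument then stands.
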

The proof of this proposition is lengthy and delicate. The case $a=b$ is special since it gives rise to many useful symmetric properties and nice cancellations. To focus on the main points here, we postpone the proof to the Appendix, see Section \ref{sec:proof1}.

\subsection{The derivative of $F$}
In this section, we study the derivative of the function $F$ defined in \eqref{eq: F}. The analysis of this section will play an important role in the study of the phase transition of the cost function in the next section.

We have
$$
F'(u)=\frac{u Q'(u)P(u)-Q(u)(P(u)+uP'(u))}{u^2P(u)^2}-\frac{1}{u}=:\frac{M(u)}{u^2P(u)^2},
$$
where
\begin{equation}
\label{eq: M}
M(u):=u Q'(u)P(u)-Q(u)(P(u)+uP'(u))-uP(u)^2.
\end{equation}

The sign of $F'(u)$ (thus, the monotonicity of $F$) is the same as that of the polynomial $M$. The next proposition presents some properties of $M$.
\begin{proposition} 
\label{prop: M}
The following statements hold
\begin{enumerate}
    \item $M$ is a polynomial of degree $4N-6$,
$$
M(u)=\sum_{i=0}^{4N-6} m_i u^{i},
$$
where the leading coefficient is
$$
m_{4N-6}=a_{N-2}a_{N-3}>0.
$$
\item When $a=b$, the coefficients of $M$ are symmetric, that is for all $i=0,\ldots, 4N-6$
$$
m_{i}=m_{4N-6-i}.
$$
\item $M$ has at least two positive roots, one is less than 1 and the other is bigger than 1. For sufficiently small $N$, namely $N\leq N_0$, $M$ has exactly two positive roots, $u_1$ and $u_2$, where $u_1<1<u_2$. As a consequence, for $1<u<u_2$, $F'(u)<0$, thus $F$ is decreasing. While for $u_2<u$, $F'(u)>0$, thus $F$ is increasing.
\end{enumerate}
\end{proposition}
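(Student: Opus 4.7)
My plan is to treat the three assertions in order, reducing (1) to an explicit leading-order calculation for $P$ and $Q$, handling (2) via the symmetric/anti-symmetric structure from Proposition \ref{prop: coeffs}, and converting (3) into the study of critical points of $F$ on $(1,+\infty)$ via the identity $F'(u)=M(u)/(u^{2}P(u)^{2})$ already noted in the excerpt. Throughout I write $A(u)=\sum_{j=0}^{N-2}a_{j}u^{j}$ with $a_{j}:=H_{N,a,b}+\frac{\min((j+1)/a,(N-j-1)/b)}{(j+1)(N-j-1)}$ and $B(u)=\sum_{j=0}^{N-1}u^{j}$, so that $Q(u)=(1+u)A(u)B(u)$.

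For assertion (1), I would read off $\deg Q=2N-2$ with leading coefficient $q_{2N-2}=a_{N-2}$, and note that a short extension of the top-degree cancellation already carried out in Proposition \ref{prop: coeffs} gives $\deg P=2N-4$ with leading coefficient $p_{2N-4}=a_{N-3}$ (the $u^{2N-3}$ coefficient of $P$ vanishes, and at $u^{2N-4}$ one obtains $(a_{N-2}+2a_{N-3})-(a_{N-2}+a_{N-3})=a_{N-3}$). The four summands $uQ'P$, $QP$, $uQP'$, $uP^{2}$ of $M$ then have degrees $4N-6$, $4N-6$, $4N-6$, $4N-7$ respectively, so the last makes no contribution at top degree, and collecting yields
\[
m_{4N-6}=\bigl[(2N-2)-1-(2N-4)\bigr]\,q_{2N-2}\,p_{2N-4}=a_{N-2}\,a_{N-3}>0,
\]
which also confirms $\deg M=4N-6$.

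For assertion (2), under $a=b$ a direct check gives $a_{j}=a_{N-2-j}$, so $A$ is palindromic of degree $N-2$; together with the obvious palindromicity of $B$ and $1+u$, this makes $Q$ palindromic of degree $2N-2$, i.e.\ $u^{2N-2}Q(1/u)=Q(u)$, while Proposition \ref{prop: coeffs} tells us $P$ is anti-palindromic of degree $2N-4$, i.e.\ $u^{2N-4}P(1/u)=-P(u)$. Differentiating these two identities gives
\begin{align*}
u^{2N-4}Q'(1/u) &= (2N-2)\,u^{-1}Q(u)-Q'(u),\\
u^{2N-6}P'(1/u) &= P'(u)-(2N-4)\,u^{-1}P(u).
\end{align*}
Substituting these, together with $Q(1/u)=Q(u)/u^{2N-2}$ and $P(1/u)=-P(u)/u^{2N-4}$, term by term into $u^{4N-6}M(1/u)$ and collecting, the contributions to $QP$ combine with total coefficient $-(2N-2)+1+(2N-4)=-1$, while the remaining three terms reassemble to $uQ'P$, $-uQP'$ and $-uP^{2}$, yielding $u^{4N-6}M(1/u)=M(u)$, which is precisely $m_{i}=m_{4N-6-i}$.

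For assertion (3), I would analyse $F$ on $(1,+\infty)$. By Proposition \ref{prop: coeffs}, $P(1)=0$ and $P>0$ on $(1,+\infty)$, while $Q(1)=2N A(1)>0$ since each $a_{j}>0$; hence $F(u)\to+\infty$ as $u\to 1^{+}$. As $u\to+\infty$, $Q(u)/(uP(u))\sim(a_{N-2}/a_{N-3})\,u$ dominates $\log u$, so $F(u)\to+\infty$ again. Therefore $F$ attains an interior minimum at some $u_{2}>1$, producing a root of $M$ in $(1,+\infty)$, and by the symmetry from (2) its reciprocal $u_{1}:=1/u_{2}\in(0,1)$ is another positive root; this gives the at-least-two statement. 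The main obstacle is the exactly-two claim for $N\leq N_{0}$, which requires ruling out further interior critical points of $F$ on $(1,+\infty)$. My approach would be to apply Descartes' rule of signs to the palindromic coefficient vector $(m_{0},\ldots,m_{4N-6})$, using the symmetry to reduce to counting sign changes among $(m_{0},\ldots,m_{2N-3})$, and then to verify case by case, for $N$ up to an explicit $N_{0}$, that exactly one sign change occurs on $(1,+\infty)$; a uniform-in-$N$ convexity argument for $F$ on $(1,+\infty)$ appears to be out of reach, and this is precisely why the proposition is qualified by $N\leq N_{0}$. Once uniqueness of $u_{2}$ is established, the monotonicity statement is immediate: since $F\to+\infty$ at both endpoints of $(1,+\infty)$ with $u_{2}$ the only interior critical point, $F$ is strictly decreasing on $(1,u_{2})$ and strictly increasing on $(u_{2},+\infty)$, so the sign of $M=u^{2}P^{2}F'$ is negative on $(1,u_{2})$ and positive on $(u_{2},+\infty)$, as claimed.
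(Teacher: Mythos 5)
Your proposal is correct, and parts (1) and (2) coincide with the paper's own argument: the same leading-coefficient bookkeeping gives $m_{4N-6}=a_{N-2}a_{N-3}$, and the same palindromy/anti-palindromy identities $u^{2N-2}Q(1/u)=Q(u)$, $u^{2N-4}P(1/u)=-P(u)$, differentiated and substituted into $u^{4N-6}M(1/u)$, yield $m_i=m_{4N-6-i}$ (your collected coefficient $-(2N-2)+1+(2N-4)=-1$ on $QP$ is exactly the cancellation the paper performs).

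Where you genuinely diverge is the existence part of (3). The paper works directly with $M$: from the symmetry, $M(0)=m_0=m_{4N-6}>0$ and $M(u)\to+\infty$ as $u\to+\infty$, while $P(1)=0$ gives $M(1)=-Q(1)P'(1)$, and the anti-symmetry of the $p_k$ is used to compute $P'(1)=(2N-4)\sum_{k=0}^{N-3}p_{2N-4-k}>0$, so $M(1)<0$ and the Intermediate Value Theorem produces one root in $(0,1)$ and one in $(1,+\infty)$. You instead exploit $F'(u)=M(u)/\bigl(u^{2}P(u)^{2}\bigr)$: the blow-up of $F$ at $u\to1^{+}$ and $u\to+\infty$ forces an interior minimiser $u_2>1$, hence a root of $M$ in $(1,+\infty)$, and the palindromic symmetry of $M$ reflects it to $1/u_2\in(0,1)$. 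Both routes are valid and of comparable length; the paper's sign evaluation at $u=1$ additionally records that $M$ is negative at $1$ (useful intuition for the sign pattern), whereas your route avoids computing $P'(1)$ altogether but leans on part (2), which restricts it to $a=b$ (as does the paper's, implicitly). On the ``exactly two roots for $N\le N_0$'' claim you are on the same footing as the paper: the paper does not prove it either (it explicitly conjectures that the coefficient sequence of $M$ has exactly two sign changes and relies on small-$N$ verification), so your plan of a Descartes-type, case-by-case check up to $N_0$ is an honest match of what the paper actually does, and your derivation of the monotonicity of $F$ from uniqueness of the critical point plus the endpoint behaviour is sound.
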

The proof of this proposition is presented in Section \ref{sec:proof2}.  We conjecture that the sequence of $M$ has exactly two changes of signs, and thus $M$ has exactly two positive roots. 

\subsection{The phase transition and the minimisation problem}
In this section, we study the phase transition problem, which describes the change in the behaviour of the cost function when varying the strength of selection $\beta$, and the optimal incentive problem, thus proving Theorem \ref{thm: main thm2}. We focus on the case $a=b$.
\begin{proof}[Proof of Theorem \ref{thm: main thm2}]

It follows from Proposition \ref{prop: coeffs} that for $0<u$,
$P(u)>0$ if and only if $u>1$.

Thus according to the argument at the end of Section \ref{sec: cost function}, if $u\leq 1$ then $E'_{mix}>0$ (thus $E_{mix}$ is increasing); and for $u> 1$ we have (see \eqref{eq: derivative Emix})
$$
E_{mix}'(\theta)=\frac{N^2}{2g(x)^2}(uP(u))\Big(\frac{Q(u)}{uP(u)}-\log(u) + \beta \delta\Big)=\frac{N^2}{2g(x)^2}(uP(u))(F(u) + \beta \delta),   
$$
where the function $F$ is (see \eqref{eq: F})
$$
F(u)= \frac{Q(u)}{uP(u)}-\log(u).
$$
Since $Q(u)$ is a polynomial of degree $2N-2$ and $uP(u)$ is a polynomial of degree $2N-3$ and their leading coefficients are both positive,
$$
\lim_{u\rightarrow +\infty} F(u)=+\infty=\lim_{u\rightarrow 1^+} F(u).
$$
This, together with the fact that $F$ is smooth on $(1,+\infty)$, we deduce that there exists a global minimum of $F$ in the interval $(1,+\infty)$ 
$$
F^*:=\min\{F(u),~u>1\}.
$$
Let
$$
\beta^*:=-\frac{F^*}{\beta \delta}=-\frac{F^*}{\delta}.
$$
Then it follows from  the above formula of $E'_{mix}(\theta)$ that for $\beta\leq \beta^*$, $E_{mix}'(\theta)\geq 0$. Thus for $\beta\leq \beta^*$, $E_{mix}(\theta)$ is always increasing. For $\beta>\beta^*$, the sign of $E_{mix}'(\theta)$ depends on the sign of the term $F(u)+\beta \delta$. For arbitrary $N$, the equation $F(u)=-\beta \delta$ has at least two roots, thus the sign of the term $F(u)+\beta \delta$ changes at least twice, therefore $E$ is not monotonic. In particular, for $N\leq N_0$, since $F$ is decreasing in $(1,u_2)$ and increasing in $(u_2,+\infty)$, where $F^*=F(u_2)<-\beta \delta$, the equation $F(u)=-\beta \delta$ has two roots $1<\bar{u_1}<u_2<\bar{u}_2$, and $F(u)+\beta \delta<0$ when $\bar{u}_1< u< \bar{u}_2$ while $F(u)+\beta \delta\geq 0$ when $u\in (1,\bar{u}_1]\cup [\bar{u}_2,+\infty)$, see Figure \ref{fig:F} for an illustration.

\begin{figure}[H]
    \centering
    \includegraphics[scale=0.5]{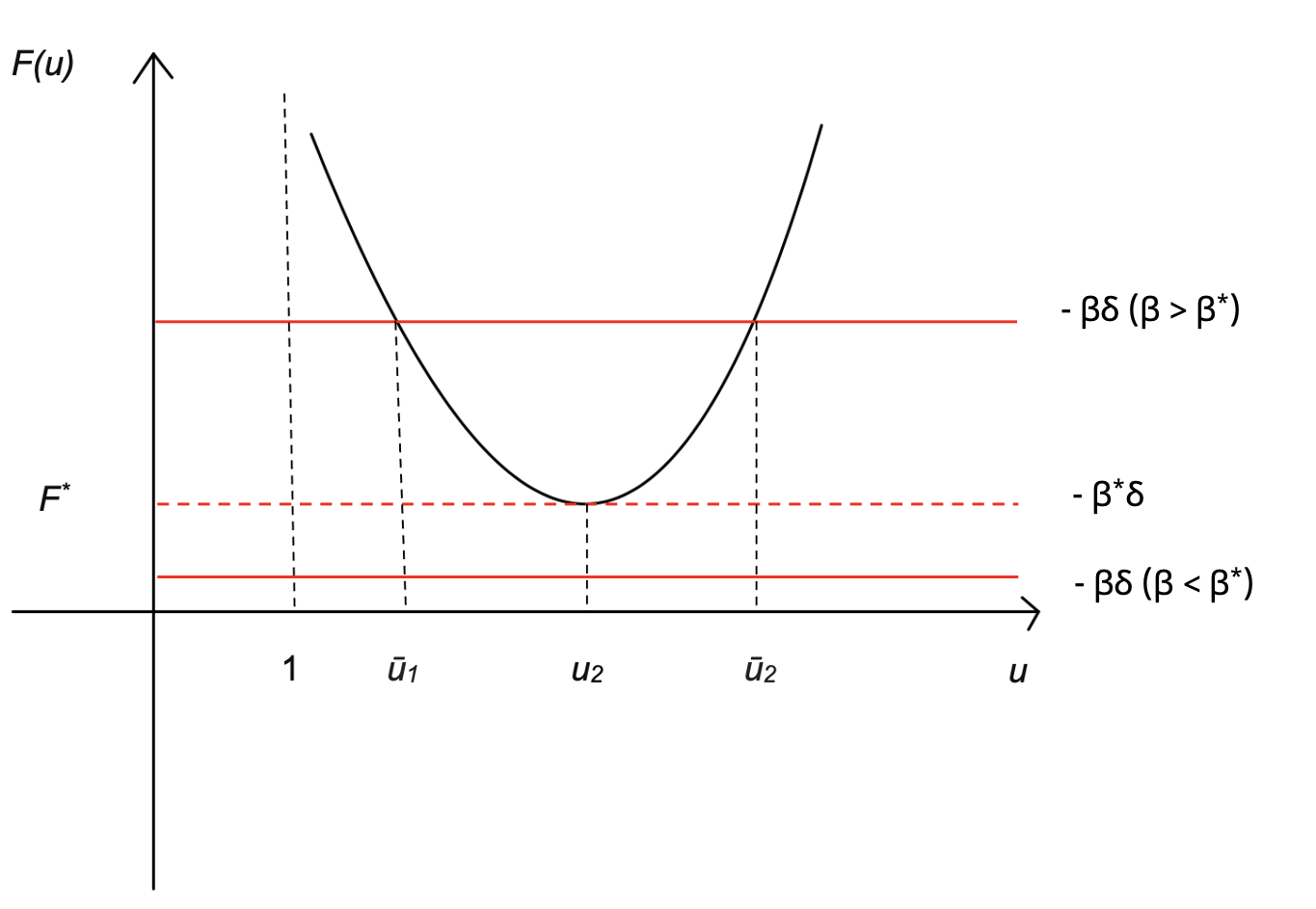}
    \caption{Behaviour of $F$ and determination of the critical threshold $\beta^*$. For sufficiently small N, i.e. $N\leq N_0$, since $F$ is decreasing in $(1,u_2)$ and increasing in $(u_2,+\infty)$, where $F^*=F(u_2)<-\beta \delta$, the equation $F(u)=-\beta \delta$ has two roots $1<\bar{u_1}<u_2<\bar{u}_2$, and $F(u)+\beta \delta<0$ when $\bar{u}_1< u< \bar{u}_2$ while $F(u)+\beta \delta\geq 0$ when $u\in (1,\bar{u}_1]\cup [\bar{u}_2,+\infty)$.}
    \label{fig:F}
\end{figure}

Hence $E_{mix}'(\theta)<0$ when $\bar{u}_1< u< \bar{u}_2$ while $E'_{mix}(\theta)\geq 0$ when $u\in (1,\bar{u}_1]\cup [\bar{u}_2,+\infty)$. Thus $E_{mix}(\theta)$ is increasing in $(1,\bar{u}_1)$, decreasing in $(\bar{u}_1,\bar{u}_2)$, and increasing again in $(\bar{u}_2,+\infty)$. In term of the variable $\theta$, $E_{mix}(\theta)$ is increasing in $(-\delta,\theta_1)$, decreasing in $(\theta_1,\theta_2)$, and increasing again in $(\theta_2,+\infty)$, where
$$
\theta_1:=\frac{\log(u_1)}{\beta
} - \delta,\quad \theta_2:=\frac{\log(u_2)}{\beta}- \delta.
$$
As a consequence, for $N\leq N_0$,
$$
\min_{\theta\geq \theta_0}
E_{mix}(\theta)=\min\{E_{mix}(\theta_0), E_{mix}(\theta_2)\}.
$$

This completes the proof of this theorem.
\end{proof}

\section{Asymptotic behaviour of the expected cost function}
\label{sec: asymptotic behaviour}
In this section, we study the asymptotic behaviour (neutral drift, strong selection, and infinite population limits) of the cost function, proving the main results in Theorem \ref{thm: main thm1}. To this end, we will first need some auxiliary technical lemmas. 
\subsection{Some auxiliary lemmas}
The first auxiliary lemma is an elementary inequality which will be used to estimate the cost function.  Its proof can be found in \cite{duong2021cost}.
\begin{lemma} For all $x\in\mathbb{R}$, we have
\label{lem: max}
\begin{equation}
0\leq \frac{e^x+\ldots+e^{(N-2)x}}{1+e^x+\ldots+e^{(N-1)x}}\leq \frac{N-2}{N}.
\end{equation}
\end{lemma}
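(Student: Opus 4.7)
The lower bound is immediate since $x\mapsto e^{kx}$ is positive for every $k$, so both numerator and denominator are strictly positive sums and the quotient is positive. The substance of the lemma is the upper bound, and my plan is to reduce everything to a polynomial inequality in $u:=e^x>0$ and then exploit a pairing trick.

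After the substitution $u=e^x$, the claim becomes
\begin{equation*}
\frac{u+u^2+\cdots+u^{N-2}}{1+u+u^2+\cdots+u^{N-1}}\leq \frac{N-2}{N},
\end{equation*}
which, after clearing denominators (both are positive for $u>0$), is equivalent to showing
\begin{equation*}
(N-2)\bigl(1+u^{N-1}\bigr)\geq 2\sum_{k=1}^{N-2}u^k.
\end{equation*}
So the first step of my plan is this algebraic reformulation.

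The second step is the key identity. For any integer $k$ with $0\leq k\leq N-1$, one has
\begin{equation*}
1+u^{N-1}-u^{k}-u^{N-1-k}=(1-u^{k})(1-u^{N-1-k}).
\end{equation*}
The right-hand side is a product of two factors that share the same sign for every $u>0$: both are non-positive if $u\geq 1$ and both are non-negative if $0<u\leq 1$. Hence $1+u^{N-1}\geq u^{k}+u^{N-1-k}$ pointwise. This is the only nontrivial inequality used, and I expect it to be the cleanest route; the potential obstacle, if there is one, is recognising this factorisation, but once spotted the argument is essentially mechanical.

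The third and final step is to sum the pointwise inequality over $k=1,\ldots,N-2$. The left-hand side contributes $(N-2)(1+u^{N-1})$, while the right-hand side gives $\sum_{k=1}^{N-2}u^{k}+\sum_{k=1}^{N-2}u^{N-1-k}$; reindexing the second sum via $j=N-1-k$ shows it equals the first, so the total right-hand side is $2\sum_{k=1}^{N-2}u^{k}$. This yields exactly the inequality obtained in the first step, completing the proof. Overall the argument is short and does not require any case split on the sign of $x$, since the factorisation handles both $u\geq 1$ and $u\leq 1$ uniformly.
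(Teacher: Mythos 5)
Your proof is correct: the reduction to $(N-2)\bigl(1+u^{N-1}\bigr)\geq 2\sum_{k=1}^{N-2}u^{k}$ is the right equivalent form, the factorisation $1+u^{N-1}-u^{k}-u^{N-1-k}=(1-u^{k})(1-u^{N-1-k})\geq 0$ is valid for all $u>0$ since the two factors always share a sign, and the summation/reindexing step is exact. Note that this paper does not actually prove the lemma itself; it cites \cite{duong2021cost} for the proof, so your argument supplies a complete, self-contained justification, and the pairing of the terms $u^{k}$ and $u^{N-1-k}$ against $1+u^{N-1}$ is exactly the natural (and essentially standard) route for this inequality. One cosmetic remark: for $N=2$ the numerator is an empty sum, so "strictly positive" should read "non-negative" in your first sentence, but the statement $0\leq 0\leq 0$ holds trivially in that case and nothing else in your argument is affected.
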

In the next lemma, we provide lower and upper bounds for the number $H_{N,a,b}$ defined in \eqref{eq: HNab}. As mentioned in the introduction, this number plays a similar role as the harmonic number $H_n$ in \cite{duong2021cost}. However, unlike the harmonic number, we will show that $H_{N,a,b}$ is always bounded and has a finite limit as $N\rightarrow+\infty$.  \begin{lemma}
\label{lem: bound of HNab}
It holds that
$$
\frac{2(\ln{2}+\frac{1}{2N-1}-\frac{1}{N+1})}{\max(a,b)}\leq H_{N,a,b}\leq \frac{2(\ln{2}+\frac{1}{2N+1}-\frac{1}{N-1})}{\min(a,b)}.
$$ 
\end{lemma}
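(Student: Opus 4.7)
The plan is to reduce the bound on $H_{N,a,b}$ to an elementary harmonic-type estimate on a sum that depends only on $N$, by first factoring out the $a,b$-dependence of each summand. The key algebraic observation is that $\frac{1}{j(N-j)}\cdot\frac{j}{a}=\frac{1}{a(N-j)}$ and $\frac{1}{j(N-j)}\cdot\frac{N-j}{b}=\frac{1}{bj}$, and since scaling by a positive constant preserves the min,
$$H_{N,a,b}=\sum_{j=1}^{N-1}\min\Big(\frac{1}{a(N-j)},\frac{1}{bj}\Big)=\sum_{j=1}^{N-1}\frac{1}{\max(a(N-j),\,bj)}.$$

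Next, I would squeeze each denominator using the elementary inequalities
$$\min(a,b)\,\max(N-j,j)\;\le\;\max\!\big(a(N-j),bj\big)\;\le\;\max(a,b)\,\max(N-j,j),$$
where the right inequality is immediate and the left one follows from $a(N-j)\ge\min(a,b)(N-j)$ together with $bj\ge\min(a,b)j$. Taking reciprocals and summing gives $\frac{S_N}{\max(a,b)}\le H_{N,a,b}\le \frac{S_N}{\min(a,b)}$, with $S_N:=\sum_{j=1}^{N-1}\frac{1}{\max(j,N-j)}$. Using the symmetry $j\leftrightarrow N-j$ to split the sum at $j=\lfloor N/2\rfloor$ and reindexing the lower half, the sum collapses into
$$S_N=2\sum_{j=\lfloor N/2\rfloor+1}^{N-1}\frac{1}{j}+\frac{2}{N}\mathbf{1}_{\{N\text{ even}\}}=2\bigl(H_{2\lfloor N/2\rfloor}-H_{\lfloor N/2\rfloor}\bigr),$$
a truncated harmonic difference, where $H_n=\sum_{k=1}^n 1/k$.

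The stated bounds would then follow from the standard monotone Riemann comparisons $\int_p^{q+1}\frac{dx}{x}\le\sum_{j=p}^q\frac{1}{j}\le\int_{p-1}^q\frac{dx}{x}$ applied with $p=\lfloor N/2\rfloor+1$, $q=N-1$, followed by rewriting the resulting logarithms in the form $\ln 2+\ln(1+r/N)$ and using the elementary bounds $\frac{x}{1+x}\le\ln(1+x)\le x$ to expose the advertised rational correction terms. The main obstacle will be here: extracting precisely $\frac{2}{2N-1}-\frac{2}{N+1}$ (lower) and $\frac{2}{2N+1}-\frac{2}{N-1}$ (upper) requires treating the parities of $N$ coherently, and in particular absorbing the isolated midpoint contribution $2/N$ that appears only when $N$ is even into the logarithmic error so that a single rational bound works uniformly in $N$. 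A cruder variant of the same argument yields the much easier asymptotic $S_N=2\ln 2+O(1/N)$ immediately, so all the delicacy is concentrated in pinning down the sharp rational constants and verifying them at small $N$.
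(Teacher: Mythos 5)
Your reduction coincides with the paper's. The rewriting $\frac{1}{j(N-j)}\min\big(\frac{j}{a},\frac{N-j}{b}\big)=\frac{1}{\max(a(N-j),\,bj)}$ together with the squeeze by $\min(a,b)$ and $\max(a,b)$ is exactly the paper's first step, namely $\frac{\min(j,N-j)}{\max(a,b)}\le\min\big(\frac{j}{a},\frac{N-j}{b}\big)\le\frac{\min(j,N-j)}{\min(a,b)}$, because $\frac{1}{\max(j,N-j)}=\frac{\min(j,N-j)}{j(N-j)}$; your $S_N$ is the paper's $H_{N,1,1}$, and your parity-exact evaluation $S_N=2\big(H_{2\lfloor N/2\rfloor}-H_{\lfloor N/2\rfloor}\big)$ is more careful bookkeeping than the paper's $H_{N,1,1}=2(H_N-H_{\lfloor N/2\rfloor})$, which glosses over the odd-$N$ case. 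The paper then concludes by inserting the two-sided estimate $\ln n+\gamma+\frac{1}{2n+1}\le H_n\le\ln n+\gamma+\frac{1}{2n-1}$, where you propose an integral comparison; either device is adequate for that final step.

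The gap is precisely the step you flag, and it is not merely unexecuted: as aimed, it cannot succeed, because the two rational correction terms in the statement are interchanged. For $a=b$ the displayed lower bound exceeds the displayed upper bound, and concretely for $N=10$, $a=b=1$ one has $H_{N,a,b}=2\big(\tfrac19+\tfrac18+\tfrac17+\tfrac16\big)+\tfrac15\approx 1.2913$, while $2\big(\ln 2+\tfrac1{19}-\tfrac1{11}\big)\approx 1.3097$, so the printed lower bound fails (and the printed upper bound $\approx 1.2593$ fails as well). What the paper's own computation produces, and what your route will also give once the harmonic (or integral) estimates are inserted, is
\[
\frac{2\big(\ln 2+\frac{1}{2N+1}-\frac{1}{N-1}\big)}{\max(a,b)}\;\le\; H_{N,a,b}\;\le\;\frac{2\big(\ln 2+\frac{1}{2N-1}-\frac{1}{N+1}\big)}{\min(a,b)},
\]
i.e.\ the lemma as printed contains a transcription slip. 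So your instinct that all the delicacy sits in pinning down the exact constants is correct, but the resolution is not a finer parity analysis: the target constants must be swapped, after which your plan (with your exact expression for $S_N$ handling both parities) goes through along essentially the same lines as the paper's proof.
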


The proof of this lemma is given in Section \ref{sec: proof3}.\\

The following proposition characterises the asymptotic limit of $H_{N,a,b}$ as  $N\rightarrow+\infty$, which will be used later to obtain asymptotic limits of the cost function.
\begin{proposition}
\label{prop: limit HNab}
It holds that
$$
\lim\limits_{N\rightarrow +\infty} H_{N,a,b}=H_{a,b}\quad\text{,where}\quad H_{a,b}=\frac{1}{a}\ln(\frac{a+b}{b})+\frac{1}{b}\ln(\frac{a+b}{a}).
$$
\end{proposition}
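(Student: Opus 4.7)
The strategy is to split the sum defining $H_{N,a,b}$ at the index where the minimum switches from $j/a$ to $(N-j)/b$, reducing each half to a partial-sum of the harmonic series, and then using $H_n = \ln n + \gamma + O(1/n)$ to pass to the limit.

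First I would observe that $\frac{j}{a} \le \frac{N-j}{b}$ iff $j \le \frac{aN}{a+b}$. Set $j^\star(N) := \lfloor \frac{aN}{a+b}\rfloor$. Then, noticing the crucial cancellation $\frac{1}{j(N-j)}\cdot \frac{j}{a} = \frac{1}{a(N-j)}$ and symmetrically $\frac{1}{j(N-j)}\cdot\frac{N-j}{b} = \frac{1}{bj}$, I can split
\begin{equation*}
H_{N,a,b} \;=\; \frac{1}{a}\sum_{j=1}^{j^\star} \frac{1}{N-j} \;+\; \frac{1}{b}\sum_{j=j^\star+1}^{N-1}\frac{1}{j}.
\end{equation*}
Re-indexing $k = N-j$ in the first sum gives $\sum_{k=N-j^\star}^{N-1}\frac{1}{k} = H_{N-1} - H_{N-j^\star-1}$, while the second sum is $H_{N-1} - H_{j^\star}$. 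Hence
\begin{equation*}
H_{N,a,b} \;=\; \frac{1}{a}\bigl(H_{N-1}-H_{N-j^\star-1}\bigr) \;+\; \frac{1}{b}\bigl(H_{N-1}-H_{j^\star}\bigr).
\end{equation*}

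Next I would invoke the classical asymptotic $H_n = \ln n + \gamma + O(1/n)$. Combined with $\frac{j^\star}{N}\to \frac{a}{a+b}$ and $\frac{N-j^\star}{N}\to \frac{b}{a+b}$ as $N\to\infty$, the $\gamma$'s and the $\ln N$'s cancel in each bracket, leaving
\begin{equation*}
H_{N-1}-H_{N-j^\star-1} \;\longrightarrow\; \ln\!\frac{a+b}{b}, \qquad H_{N-1}-H_{j^\star}\;\longrightarrow\;\ln\!\frac{a+b}{a}.
\end{equation*}
Multiplying by $1/a$ and $1/b$ respectively and summing yields the claimed value $H_{a,b} = \frac{1}{a}\ln\frac{a+b}{b} + \frac{1}{b}\ln\frac{a+b}{a}$.

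There is essentially no serious obstacle; the only subtle point is ensuring that the floor function $j^\star = \lfloor aN/(a+b)\rfloor$ does not spoil the limit. This is handled by noting $j^\star = \frac{aN}{a+b} + O(1)$, so that $\ln(N-1) - \ln(N-j^\star-1) = -\ln\!\bigl(1-\frac{j^\star+1}{N-1}\bigr) \to -\ln\frac{b}{a+b}$, and similarly for the other term; the $O(1)$ shift is absorbed since we are comparing logarithms of quantities of order $N$. A brief sanity-check: this reasoning never requires $a=b$, and when $a=b$ the formula specialises to $\frac{2}{a}\ln 2$, consistent with the upper and lower bounds established in Lemma~\ref{lem: bound of HNab}.
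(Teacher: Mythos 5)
Your proposal is correct and follows essentially the same route as the paper's proof: splitting the sum at $\lfloor aN/(a+b)\rfloor$ where the minimum switches, rewriting each half as a difference of harmonic numbers, and passing to the limit via $H_n=\ln n+\gamma+O(1/n)$. Your explicit treatment of the floor-function shift is, if anything, slightly more careful than the paper's, but the argument is the same.
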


\begin{proof}
We recall that $H_{N,a,b}=\sum\limits_{j=1}^{N-1}\frac{1}{j(N-j)}\min(\frac{j}{a},\frac{N-j}{b})$. As in the proof of the above lemma, we will link $H_{N,a,b}$ to the harmonic number $H_N$ by splitting the sum at an appropriate point, which allows us to determine the minimum between $j/a$ and $(N-j)/a$ explicitly, given by 
$$
N_{a,b}=\left\lfloor \frac{N}{\frac{b}{a}+1} \right\rfloor.
$$
We have 
\begin{align*}
    H_{N,a,b}= \sum\limits_{j=1}^{N_{a,b}}\frac{\frac{j}{a}}{j(N-j)} + \sum\limits_{j=N_{a,b}+1}^{N-1}\frac{\frac{N-j}{b}}{j(N-j)} = \frac{1}{a}\sum\limits_{j=1}^{N_{a,b}}\frac{1}{N-j} + \frac{1}{b}\sum\limits_{j=N_{a,b}}^{N-1}\frac{1}{j}.
\end{align*}
\noindent Let $\hat{j}=N-j$. Thus:
\begin{align*}
    H_{N,a,b}= \frac{1}{a}\sum\limits_{\hat{j}=N-N_{a,b}}^{N-1}\frac{1}{\hat{j}} + \frac{1}{b}\sum\limits_{j=N_{a,b}+1}^{N-1}\frac{1}{j}= \frac{1}{a}\Big(H_{N}-H_{N-N_{a,b}}\Big) + \frac{1}{b}\Big(H_{N}-H_{N_{a,b}}\Big).
\end{align*}

Note that $N-N_{a,b}=N(1-\frac{a}{a+b})=N\frac{b}{a+b}$ and $N_{a,b}=N\frac{a}{a+b}.$
Thus both $N_{a,b}$ and $N-N_{a,b}$ go to $+\infty$ as $N\rightarrow +\infty$. Then it follows from the following well-known asymptotic behaviour of the harmonic number (see \eqref{eq: HN estimates})
$$
\lim\limits_{N\rightarrow+\infty}H_N=\ln N+\gamma,
$$
we obtain the following asymptotic behaviour of $H_{N,a,b}$
$$
\lim\limits_{N\rightarrow +\infty} H_{N,a,b}=\frac{1}{a}\ln(\frac{a+b}{b})+\frac{1}{b}\ln(\frac{a+b}{a}).
$$ 
This completes the proof of the proposition.
\end{proof}
The following lemma provides lower and upper bounds for the cost function, which show that it grows quadratically with respect to the population size.
\begin{lemma}
It holds that
\begin{equation}
\label{eq: estimates1}
\frac{N^2\theta}{2}\Big(H_{N,a,b}+m\Big)\leq E_{mix}(\theta)\leq N(N-1)\theta \Big(H_{N,a,b}+M\Big),
\end{equation}
where
\begin{equation}
\label{eq: m and M}
m=\min_{i}\frac{\min(\frac{i+1}{a},\frac{N-i-1}{b})}{(i+1)(N-i-1)},\quad  M=\max_{i}\frac{\min(\frac{i+1}{a},\frac{N-i-1}{b})}{(i+1)(N-i-1)}.
\end{equation}
\end{lemma}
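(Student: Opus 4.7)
My plan is to exploit the explicit formula for $E_{mix}(\theta)$ derived in Lemma \ref{lem: Emix and derivative}, separate the sum into a ``bulk'' part involving $H_{N,a,b}$ and a ``residual'' part, and then sandwich the residual using the constants $m$ and $M$ defined in \eqref{eq: m and M}. Specifically, I would set $S_{N-1}(x):=1+e^x+\ldots+e^{(N-2)x}$ and $S_N(x):=1+e^x+\ldots+e^{(N-1)x}$, and recognise that the formula from Lemma \ref{lem: Emix and derivative} can be written as
\begin{equation*}
E_{mix}(\theta)=\frac{N^2\theta(1+e^x)}{2S_N(x)}\Big[H_{N,a,b}\,S_{N-1}(x)+T(x)\Big],\qquad T(x):=\sum_{j=1}^{N-1}\frac{e^{(j-1)x}}{j(N-j)}\min\!\Big(\frac{j}{a},\frac{N-j}{b}\Big).
\end{equation*}

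Next, I would observe that the relabelling $i=j-1$ turns the coefficients in $T(x)$ into the quantities appearing in \eqref{eq: m and M}, so each coefficient lies in $[m,M]$. Since $\sum_{j=1}^{N-1}e^{(j-1)x}=S_{N-1}(x)$, this yields $m\,S_{N-1}(x)\leq T(x)\leq M\,S_{N-1}(x)$ and therefore
\begin{equation*}
\frac{N^2\theta(1+e^x)S_{N-1}(x)}{2 S_N(x)}\,(H_{N,a,b}+m)\leq E_{mix}(\theta)\leq \frac{N^2\theta(1+e^x)S_{N-1}(x)}{2 S_N(x)}\,(H_{N,a,b}+M).
\end{equation*}

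The remaining task is to bound the prefactor $R(x):=(1+e^x)S_{N-1}(x)/S_N(x)$ from below by $1$ and from above by $2(N-1)/N$. For this I would exploit the telescoping identity
\begin{equation*}
(1+e^x)S_{N-1}(x)=S_{N-1}(x)+(e^x+e^{2x}+\ldots+e^{(N-1)x})=S_N(x)+\big(S_N(x)-1-e^{(N-1)x}\big),
\end{equation*}
so that $R(x)=2-\frac{1+e^{(N-1)x}}{S_N(x)}$. Since $1+e^{(N-1)x}\leq S_N(x)$ trivially, we get $R(x)\geq 1$, which delivers the lower bound in \eqref{eq: estimates1}. For the upper bound, I would invoke Lemma \ref{lem: max}, which gives $\frac{e^x+\ldots+e^{(N-2)x}}{S_N(x)}\leq \frac{N-2}{N}$; equivalently $\frac{1+e^{(N-1)x}}{S_N(x)}\geq \frac{2}{N}$, hence $R(x)\leq 2-\frac{2}{N}=\frac{2(N-1)}{N}$, which yields the claimed upper bound.

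The proof is essentially routine once the explicit cost formula and Lemma \ref{lem: max} are available; the only mildly delicate step is spotting the algebraic identity $(1+e^x)S_{N-1}=2S_N-1-e^{(N-1)x}$ that converts the prefactor into a form where Lemma \ref{lem: max} applies cleanly. No additional estimates or case analysis on $a$ versus $b$ are needed, since the constants $m$ and $M$ absorb the min-structure of the coefficients.
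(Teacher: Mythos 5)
Your proposal is correct and follows essentially the same route as the paper: the same split of the cost into the $H_{N,a,b}$ part plus the residual sum sandwiched between $m\,S_{N-1}$ and $M\,S_{N-1}$, and the same bound $1\leq (1+e^x)S_{N-1}(x)/S_N(x)\leq 2(N-1)/N$ via Lemma \ref{lem: max}, since your identity $R(x)=2-\frac{1+e^{(N-1)x}}{S_N(x)}$ is just a rewriting of the paper's $R(x)=1+\frac{e^x+\ldots+e^{(N-2)x}}{S_N(x)}$. The only detail worth adding is the (implicit) use of $\theta>0$ when multiplying the inequalities through by the prefactor.
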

\begin{proof}
Let $H_{N,a,b}=\sum\limits_{j=1}^{N-1}\frac{1}{j(N-j)}\min(\frac{j}{a},\frac{N-j}{b})$.
We have
$$
\frac{(1+e^x)(1+e^x+\ldots+e^{(N-2)x})}{1+e^x+\ldots+e^{(N-1)x}}=1+\frac{e^x+\ldots+e^{(N-2)x}}{1+e^x+\ldots+e^{(N-1)x}}.
$$

Using Lemma \ref{lem: max}, we get
$$
1\leq \frac{(1+e^x)(1+e^x+\ldots+e^{(N-2)x})}{1+e^x+\ldots+e^{(N-1)x}}\leq \frac{2(N-1)}{N}.
$$

Let $m=\min\limits_{i}\frac{\min(\frac{i+1}{a},\frac{N-i-1}{b})}{(i+1)(N-i-1)}$ and $M=\max\limits_{i}\frac{\min(\frac{i+1}{a},\frac{N-i-1}{b})}{(i+1)(N-i-1)}$.
Since
\begin{equation*}
m\sum_{j=0}^{N-2}e^{jx}\leq \sum_{j=1}^{N-1}\frac{e^{(j-1) x}}{j(N-j)}\min\Big(\frac{j}{a},\frac{N-j}{b}\Big)\leq M\sum_{j=0}^{N-2}e^{jx}.
\end{equation*} 
we obtain the following estimates
\begin{align*}
m\leq m\frac{(1+e^x)(1+e^x+\ldots+e^{(N-2)x})}{1+e^x+\ldots+e^{(N-1)x}}&\leq \frac{(1+e^x)}{1+e^x+\ldots+e^{(N-1)x}}\sum_{j=1}^{N-1}\frac{e^{(j-1)x}}{j(N-j)}\min\Big(\frac{j}{a},\frac{N-j}{b}\Big)
\\&\leq (H_{N,a,b}+M)\frac{(1+e^x)(1+e^x+\ldots+e^{(N-2)x})}{1+e^x+\ldots+e^{(N-1)x}}\leq \frac{2(N-1)(H_{N,a,b}+M)}{N}.
\end{align*}

Thus for $\theta>0$ we have
$$
\frac{N^2\theta}{2}\Big(H_{N,a,b}+m\Big)\leq E_{mix}(\theta)\leq N(N-1)\theta \Big(H_{N,a,b}+M\Big),
$$
which completes the proof of the lemma.
\end{proof}
We can further estimate $m$ and $M$ in \eqref{eq: m and M} from below and above respectively in terms of the population size $N$ as follows.
\begin{lemma}
\label{lem: m and M}
For $m$ and $M$ defined in \eqref{eq: m and M}, it holds that 
$$
m\geq \frac{1}{\max(a,b)(N-1)},\quad M\leq \frac{1}{\min(a,b)\lfloor \frac{(N-1)}{2} \rfloor}.
$$
As a consequence, we have
$$
\frac{N^2\theta}{2}\Big(H_{N,a,b}+\frac{1}{\max(a,b)(N-1)}\Big)\leq E_{mix}(\theta)\leq N(N-1)\theta \Big(H_{N,a,b}+\frac{1}{\min(a,b)\lfloor \frac{(N-1)}{2} \rfloor}\Big).
$$
\end{lemma}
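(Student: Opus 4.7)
The plan is to prove both bounds by sandwiching $\min(k/a,(N-k)/b)$ between simple expressions involving only $\min(k,N-k)$, then exploiting the elementary identity $\min(k,N-k)/(k(N-k)) = 1/\max(k,N-k)$, and finally bounding $\max(k,N-k)$ over the index set $\{1,\ldots,N-1\}$. Throughout I reindex $k := i+1 \in \{1,\ldots,N-1\}$ so that the quantity inside the $\min$/$\max$ defining $m$ and $M$ reads $\frac{\min(k/a,(N-k)/b)}{k(N-k)}$.

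First I would prove the lower bound on $m$. Since $\frac{k}{a} \geq \frac{k}{\max(a,b)}$ and $\frac{N-k}{b} \geq \frac{N-k}{\max(a,b)}$, one has
$$\min\!\left(\tfrac{k}{a},\tfrac{N-k}{b}\right) \;\geq\; \tfrac{\min(k,N-k)}{\max(a,b)}.$$
Dividing by $k(N-k)$ and using $\frac{\min(k,N-k)}{k(N-k)} = \frac{1}{\max(k,N-k)}$ gives $\frac{\min(k/a,(N-k)/b)}{k(N-k)} \geq \frac{1}{\max(a,b)\max(k,N-k)}$. Because $k\in\{1,\ldots,N-1\}$, we always have $\max(k,N-k) \leq N-1$, which yields the claimed lower bound $m \geq 1/(\max(a,b)(N-1))$.

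For the upper bound on $M$ I would argue symmetrically, using $\frac{k}{a} \leq \frac{k}{\min(a,b)}$ and $\frac{N-k}{b} \leq \frac{N-k}{\min(a,b)}$ to obtain
$$\min\!\left(\tfrac{k}{a},\tfrac{N-k}{b}\right) \;\leq\; \tfrac{\min(k,N-k)}{\min(a,b)},$$
so that $\frac{\min(k/a,(N-k)/b)}{k(N-k)} \leq \frac{1}{\min(a,b)\max(k,N-k)}$. Now I need a \emph{lower} bound on $\max(k,N-k)$ valid for every $k\in\{1,\ldots,N-1\}$. From $\max(k,N-k) \geq (k+(N-k))/2 = N/2$ and integrality, $\max(k,N-k) \geq \lceil N/2 \rceil \geq \lfloor (N-1)/2 \rfloor$ (checking the two parities of $N$ separately; the inequality is trivially true when $\lfloor (N-1)/2\rfloor=0$). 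This gives $M \leq 1/(\min(a,b)\lfloor (N-1)/2\rfloor)$.

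The consequence follows immediately by substituting the two bounds just established into the estimate \eqref{eq: estimates1} from the preceding lemma. There is no real obstacle here: the argument is essentially two elementary inequalities plus the cute identity $\min(k,N-k)/(k(N-k)) = 1/\max(k,N-k)$, which is the only step worth highlighting since it is what turns a product $k(N-k)$ in the denominator into a single $\max(k,N-k)$ that can be bounded uniformly in $k$.
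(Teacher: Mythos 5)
Your argument is correct and is essentially the paper's own proof: the paper likewise replaces $a,b$ by $\max(a,b)$ (resp.\ $\min(a,b)$) to reduce to $\min(k,N-k)/(k(N-k))=1/\max(k,N-k)$, bounds that quantity by $1/(N-1)$ from below and $1/\lfloor (N-1)/2\rfloor$ from above, and then substitutes into \eqref{eq: estimates1}. Your explicit use of the identity and the parity check via $\lceil N/2\rceil$ merely spells out what the paper leaves implicit, so there is nothing to add.
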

The proof of this lemma is given in Section \ref{sec: proof4}.
\subsection{Neutral drift limit}
In this section, we study the neutral drift limit of the cost function, proving the second part of Theorem \ref{thm: main thm1}.
\begin{proposition}[neutral drift limit] It holds that
\label{prop: weak selection}
$$
\lim\limits_{\beta\rightarrow 0}E_{mix}(\theta)=\theta N^2 H_{N,a,b}.
$$
\end{proposition}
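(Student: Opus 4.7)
The plan is to pass to the limit $\beta\to 0$ directly in the explicit expression for $E_{mix}(\theta)$ provided by Lemma \ref{lem: Emix and derivative}. Recall that the formula is expressed in terms of the single auxiliary variable $x=x(\theta)=\beta(\theta+\delta)$, and since $\theta$ and $\delta$ are fixed, the limit $\beta\to 0$ simply corresponds to $x\to 0$. Thus the proposition reduces to evaluating the continuous function of $x$ at $x=0$, which is a genuine point of continuity because the denominator $1+e^x+\cdots+e^{(N-1)x}$ does not vanish at $x=0$ (it equals $N$); there is no indeterminate form to resolve.

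Concretely, I would record the four elementary limits
\begin{align*}
1+e^x &\longrightarrow 2, \\
1+e^x+\cdots+e^{(N-1)x} &\longrightarrow N, \\
1+e^x+\cdots+e^{(N-2)x} &\longrightarrow N-1, \\
\sum_{j=1}^{N-1}\frac{e^{(j-1)x}}{j(N-j)}\min\!\Big(\frac{j}{a},\frac{N-j}{b}\Big) &\longrightarrow \sum_{j=1}^{N-1}\frac{1}{j(N-j)}\min\!\Big(\frac{j}{a},\frac{N-j}{b}\Big)=H_{N,a,b},
\end{align*}
each of which follows from $e^{kx}\to 1$ as $x\to 0$ for every fixed integer $k$.

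Substituting these limits into the formula of Lemma \ref{lem: Emix and derivative} gives
$$
\lim_{\beta\to 0} E_{mix}(\theta)=\frac{N^2\theta\cdot 2}{2\cdot N}\Big[(N-1)H_{N,a,b}+H_{N,a,b}\Big]=\frac{N^2\theta}{N}\cdot N\,H_{N,a,b}=N^2\theta\,H_{N,a,b},
$$
as claimed. There is no genuine obstacle: the only thing worth checking is the bookkeeping on the bracketed factor, where one must correctly combine the contribution $(N-1)H_{N,a,b}$ coming from the first sum with the contribution $H_{N,a,b}$ coming from the second sum to obtain $NH_{N,a,b}$, after which the factor of $N$ cancels against the denominator and leaves the announced expression.
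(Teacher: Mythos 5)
Your proposal is correct and follows essentially the same route as the paper: both pass to the limit $x=\beta(\theta+\delta)\to 0$ directly in the explicit formula for $E_{mix}(\theta)$, using $e^{kx}\to 1$ termwise, and the bookkeeping $\frac{N^2\theta\cdot 2}{2N}\big[(N-1)H_{N,a,b}+H_{N,a,b}\big]=\theta N^2 H_{N,a,b}$ matches the paper's computation exactly.
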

It is worth noting that the neutral drift limit of the cost function depends on the population size $N$, the reward and punishment efficiency ratios $a$ and $b$ through the number $H_{N,a,b}$, but does not depend on the underlying game.
\begin{proof}
\noindent We recall that $x=\beta(\theta+\delta)$. Since $\beta\rightarrow 0$ implies $x\rightarrow 0$, it follows from the formula of $E_{mix}(\theta)$ that
\begin{align*}
    \lim_{\beta\rightarrow 0}E_{mix}(\theta)&=\lim_{\beta\rightarrow 0} \frac{N^2 \theta(1+e^x)}{2(1+e^x+\ldots e^{(N-1)x})}\bigg[\big(1+e^x+\ldots +e^{(N-2)x}\big)\sum_{j=1}^{N-1}\frac{1}{j(N-j)}\min\Big(\frac{j}{a}, \frac{N-j}{b}\Big)
    \\&\hspace{6cm}+\sum_{j=1}^{N-1}\frac{e^{(j-1) x}}{j(N-j)}\min\Big(\frac{j}{a}, \frac{N-j}{b}\Big)\bigg]\notag
   \\&=\theta N[(N-1)H_{N,a,b}+H_{N,a,b}]
  \\& =\theta N^2H_{N,a,b}.
\end{align*}
\end{proof}

\subsection{Strong selection limits}
In this section, we study the strong selection limits of the cost function, proving the third statement of Theorem \ref{thm: main thm1}.
\begin{proposition} [strong selection limits]
\label{prop: strong selection}
$$
\lim\limits_{\beta\rightarrow +\infty}E_{mix}(\theta)=\begin{cases}
\frac{N^2\theta}{2}\Big(H_{N,a,b} + \frac{1}{a(N-1)}\Big),  \quad\text{for}\quad \theta<-\delta,\\
\frac{N A}{2}\Big[2N H_{N,a,b}+\frac{1}{a(N-1)}+\frac{1}{b(N-1)}-\frac{\min(2/a, (N-2)/b)}{2(N-2)}-\frac{\min((N-1)/a,1/b)}{(N-1)}\Big], \quad\text{for}\quad \theta=-\delta,\\
\frac{N^2\theta}{2}\bigg[H_{N,a,b}+\frac{1}{b(N-1)}\bigg]  \quad\text{for}\quad \theta>-\delta.
\end{cases}
$$
\end{proposition}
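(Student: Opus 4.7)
My plan is to take the closed-form expression for $E_{mix}(\theta)$ from Lemma \ref{lem: Emix and derivative}, written in terms of $x=\beta(\theta+\delta)$, and pass to the limit $\beta\to+\infty$ in each of the three regimes determined by the sign of $\theta+\delta$. Since $\delta<0$ for both the DG and the PGG, these three cases correspond respectively to $x\to-\infty$, $x\equiv 0$, and $x\to+\infty$. Throughout I would invoke the mixed-regime assumption $\tfrac{1}{N-1}<\tfrac{b}{a}<N-1$ from Proposition \ref{prop: 1st result} (outside this interval $E_{mix}$ collapses to $E_r$ or $E_p$, whose strong-selection limits already appear in \cite{duong2021cost}); this immediately resolves the two endpoint minima as $\min(1/a,(N-1)/b)=1/a$ and $\min((N-1)/a,1/b)=1/b$, which will supply the corrections in cases 1 and 3.

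For $\theta<-\delta$, I would note that $e^{kx}\to 0$ for every $k\geq 1$, so the prefactor $\frac{1+e^x}{\sum_{k=0}^{N-1}e^{kx}}\to 1$, the partial geometric sum $\sum_{k=0}^{N-2}e^{kx}\to 1$, and the weighted exponential sum $\sum_{j=1}^{N-1}\frac{e^{(j-1)x}}{j(N-j)}\min(j/a,(N-j)/b)$ collapses to its $j=1$ contribution, namely $\frac{1}{1\cdot(N-1)}\min(1/a,(N-1)/b)=\frac{1}{a(N-1)}$. Substituting into the formula from Lemma \ref{lem: Emix and derivative} gives $\frac{N^2\theta}{2}\bigl(H_{N,a,b}+\tfrac{1}{a(N-1)}\bigr)$ at once. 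The case $\theta>-\delta$ is the mirror image: dividing numerator and denominator by $e^{(N-1)x}$, only the largest power survives in the denominator, the factor $(1+e^x)\sum_{k=0}^{N-2}e^{kx}$ contributes a term of order $e^{(N-1)x}$, and the weighted sum is now dominated by its $j=N-1$ term $\frac{e^{(N-2)x}}{b(N-1)}$ via the dual identity $\min((N-1)/a,1/b)=1/b$. Collecting the surviving contributions yields $\frac{N^2\theta}{2}\bigl(H_{N,a,b}+\tfrac{1}{b(N-1)}\bigr)$.

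The case $\theta=-\delta$ is the genuinely new one, since $x$ vanishes identically in $\beta$; the limit is therefore just $E_{mix}(-\delta)$, and I would evaluate the closed form directly at $x=0$. At $x=0$ one has $(1+e^x)=2$, $\sum_{k=0}^{N-1}e^{kx}=N$, $\sum_{k=0}^{N-2}e^{kx}=N-1$, and both inner sums reduce to $H_{N,a,b}$, producing the dominant contribution $N^2\theta H_{N,a,b}$ which corresponds to the $2NH_{N,a,b}$ term in the prefactor $\tfrac{NA}{2}$ with $A=\theta=-\delta$. To produce the additional boundary corrections $\tfrac{1}{a(N-1)}$, $\tfrac{1}{b(N-1)}$, $\tfrac{\min(2/a,(N-2)/b)}{2(N-2)}$ and $\tfrac{\min((N-1)/a,1/b)}{N-1}$ that appear in the statement, my strategy is to peel off the extreme indices $j\in\{1,2,N-2,N-1\}$ from the weighted sum before collapsing, then exploit the $j\leftrightarrow N-j$ symmetry of the sum defining $H_{N,a,b}$ to group those endpoint contributions into the pairs of indices visible in the statement.

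The main obstacle is the case $\theta=-\delta$: the two asymptotic cases are clean because a single dominant term survives, but the on-critical value forces one to retain the bounded boundary terms and to reorganise them so that the four explicit $\min$-corrections emerge in the exact pattern of indices $(1,N-1)$ and $(2,N-2)$. A secondary but routine step is to verify, in each case, that the resolution of the remaining minima is consistent with the mixed-regime hypothesis; outside that regime Proposition \ref{prop: 1st result} reduces everything to $E_r$ or $E_p$, so no new limit has to be computed.
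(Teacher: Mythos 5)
Your handling of the two off-critical cases is correct and is essentially the paper's own argument: a dominant-term analysis of the closed form from Lemma \ref{lem: Emix and derivative} (the paper merely reorganises the numerator first as $\sum_{j=0}^{N-1}\eta_j e^{jx}$ before letting $x\to\pm\infty$), with the endpoint minima resolved to $1/a$ and $1/b$ exactly as you do, via $\tfrac{1}{N-1}<\tfrac{b}{a}<N-1$.

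The gap is the case $\theta=-\delta$, which you leave as a plan rather than a computation, and the plan as described cannot deliver the displayed formula. As you correctly note, $x\equiv 0$ there, so the limit is just the exact value $E_{mix}(-\delta)$, and evaluating the closed form at $x=0$ gives $\tfrac{N^2\theta\cdot 2}{2N}\bigl[(N-1)H_{N,a,b}+H_{N,a,b}\bigr]=N^2\theta\,H_{N,a,b}$ with $\theta=-\delta$ --- a complete answer leaving no room for additional boundary corrections. Equivalently, in the paper's notation, $\sum_{j=0}^{N-1}\eta_j=2\sum_{j=0}^{N-2}\bigl(H_{N,a,b}+h_j\bigr)=2N H_{N,a,b}$ (set $u=1$ in the identity defining the $\eta_j$, and use $\sum_{j=0}^{N-2}h_j=H_{N,a,b}$): the $+h_0+h_{N-2}$ contributed by $\eta_0+\eta_{N-1}$ is exactly cancelled by the $-h_0-h_{N-2}$ deficit of the interior sum. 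So your proposed step of peeling off the indices $1,2,N-2,N-1$ and invoking the $j\leftrightarrow N-j$ symmetry can only show that the four $\min$-corrections cancel; it cannot produce them. Indeed, those correction terms in the stated middle case trace back to an indexing slip in the paper's own proof ($\sum_{j=1}^{N-2}h_j=H_{N,a,b}-h_0$, not $H_{N,a,b}-h_1$), and the printed expression disagrees with the exact value unless $h_0=h_1$: for $N=4$, $a=b=1$ it yields $\tfrac{55}{3}(-\delta)$, whereas $E_{mix}(-\delta)=N^2(-\delta)H_{N,a,b}=\tfrac{56}{3}(-\delta)$ (consistent with the neutral-drift formula, which is exact at $x=0$). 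To close your argument, simply record $\lim_{\beta\to+\infty}E_{mix}(-\delta)=E_{mix}(-\delta)=N^2(-\delta)H_{N,a,b}=\tfrac{N(-\delta)}{2}\cdot 2N H_{N,a,b}$ and note that the extra $\min$-terms in the displayed middle case should cancel; attempting to derive them as written would leave your proof stuck.
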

Similarly to the neutral drift limit, the strong selection limit of the cost function depends on the population size $N$, the reward and punishment efficiency ratios $a$ and $b$, but does not depend on the underlying game.
\begin{proof}
To establish the strong selection limit $\lim\limits_{\beta\rightarrow+\infty} E_{mix}(\theta)$, we rewrite $E_{mix}(\theta)$ in a more convenient form as follows:
\begin{align*}
E_{mix}(\theta)&=\frac{\frac{N^2\theta}{2}(1+e^x)}{\sum\limits_{j=1}^{N-1}e^{jx}}\Big[\sum\limits_{j=0}^{N-2} e^{jx}H_{N,a,b} + \sum\limits_{j=0}^{N-2} \frac{e^{\hat{j}x}}{(\hat{j}+1)(N-\hat{j}-1)\min\Big(\frac{\hat{j}+1}{a}, \frac{N-\hat{j}-1}{b}\Big)}\Big]
\\&=\frac{\frac{N^2\theta}{2}(1+e^x)}{\sum\limits_{j=0}^{N-1}e^{jx}}\Big[ \sum\limits_{j=0}^{N-2} e^{jx}(H_{N,a,b} + h_j)\Big],
\end{align*} where  $h_j=\frac{\min(\frac{j+1}{a}, \frac{N-j-1}{b})}{(j+1)(N-j-1)}$.

 Now, note that:

\begin{align*}
(1+e^x)\sum\limits_{j=0}^{N-2} e^{jx}(H_{N,a,b} + h_j)
&= \sum\limits_{j=0}^{N-2} e^{jx}(H_{N,a,b} + h_j) + \sum\limits_{j=0}^{N-2} e^{(j+1)x}(H_{N,a,b} + h_j)
\\&= \sum\limits_{j=0}^{N-1}\eta_j e^{jx},
\end{align*}
where 
\begin{align*}
&\eta_0=H_{N,a,b}+h_0=H_{N,a,b}+\frac{\min(\frac{1}{a}, \frac{N-1}{b})}{N-1}=H_{N,a,b}+\frac{1}{a(N-1)},\\
&\eta_{N-1}=H_{N,a,b}+h_{N-2}=H_{N,a,b}+\frac{1}{b(N-1)},\\
&\eta_j=H_{N,a,b}+h_j+H_{N,a,b}+h_{j-1}=2H_{N,a,b}+h_j+h_{j-1}, \quad\text{for}\quad 1\leq j\leq N-2.
\end{align*}
By putting all of the above together, we obtain that:

\begin{equation*}
E_{mix}(\theta) = \frac{N^2\theta}{2\sum\limits_{j=0}^{N-1}e^{jx}}\sum\limits_{j=0}^{N-1}\eta_j e^{jx}.
\end{equation*}
\noindent Recall that $x=\beta(\theta+\delta)$ with $\delta$ being the difference of the average payoffs between a cooperator and a defector. We study 3 cases:
\begin{enumerate}
    \item If $\theta<-\delta$, then $j(\theta+\delta)<0$, so $e^{jx}=\Big[e^{j(\theta+\delta)}\Big]^{\beta}$ for all $1\leq j \leq N-1$. Thus,
    $\lim\limits_{\beta\rightarrow+\infty} e^{jx}=\lim\limits_{\beta\rightarrow+\infty} \Big[e^{j(\theta+\delta)}\Big] = 0$. Therefore,
    \begin{align*}
        \lim\limits_{\beta\rightarrow +\infty}E_{mix}(\theta) &=\frac{N^2\theta}{2}\frac{1}{(1+0)}\eta_0
        \\&=\frac{N^2\theta}{2}\bigg(H_{N,a,b}+\frac{1}{a(N-1)}\bigg).
    \end{align*}
    
    \item If $\theta=-\delta$, then $x=0$. So $E_{mix}(\theta)=E_{mix}(-\delta)=-\frac{N^2\delta}{2N}\sum\limits_{j=0}^{N-1} \eta_j$. 
We have
\begin{align*}
 \sum\limits_{j=0}^{N-1} \eta_j &=\eta_0+ \sum_{j=1}^{N-2}\eta_j+\eta_{N-1}
 \\&=H_{N,a,b}+\frac{1}{a(N-1)}+H_{N,a,b}+\frac{1}{b(N-1)}+\sum_{j=1}^{N-2}\Big(2 H_{N,a,b}+h_j+h_{j-1}\Big)
 \\&=2(N-1) H_{N,a,b}+\frac{1}{a(N-1)}+\frac{1}{b(N-1)}+H_{N,a,b}-h_1+H_{N,a,b}-h_{N-2}
 \\&=2N H_{N,a,b}+\frac{1}{a(N-1)}+\frac{1}{b(N-1)}-\frac{\min(2/a, (N-2)/b)}{2(N-2)}-\frac{\min((N-1)/a,1/b)}{(N-1)}.
 \end{align*}
    Therefore, 
$$
E_{mix}(-\delta)=-\frac{N \delta}{2}\Big[2N H_{N,a,b}+\frac{1}{a(N-1)}+\frac{1}{b(N-1)}-\frac{\min(2/a, (N-2)/b)}{2(N-2)}-\frac{\min((N-1)/a,1/b)}{(N-1)}\Big].
$$
\item If $\theta>-\delta$, then we obtain
        \begin{align*}
         \lim\limits_{\beta\rightarrow+\infty} E_{mix}(\theta) &= \frac{N^2\theta}{2} \lim\limits_{\beta\rightarrow+\infty} \frac{ \sum\limits_{j=0}^{N-1} \eta_j e^{jx}}{\sum\limits_{j=0}^{N-1} e^{jx}}
         \\&=\frac{N^2\theta}{2}\eta_{N-1}
         \\&=\frac{N^2\theta}{2}\bigg[H_{N,a,b}+\frac{1}{b(N-1)}\bigg],
    \end{align*} since 
    \begin{align*}
        \lim\limits_{\beta\rightarrow+\infty} \frac{ \sum\limits_{j=0}^{N-1} \eta_j e^{jx}}{\sum\limits_{j=0}^{N-1} e^{jx}} &=\frac{e^{(N-1)x}\sum\limits_{j=0}^{N-1} e^{(j-(N-1))x}}{e^{(N-1)x}\sum\limits_{j=0}^{N-1} e^{(j-(N-1))x}}
        \\&=\frac{\sum\limits_{j=0}^{N-1} \eta_j e^{(j-(N-1))x}}{\sum\limits_{j=0}^{N-1} e^{(j-(N-1))x}},
    \end{align*} so $e^{(j-(N-1))x}=e^{(j-(N-1))\beta(\theta+\delta)}=\bigg[ e^{(j-(N-1))(\theta+\delta)}\bigg]^{\beta}$ for all $0\leq j \leq N-2$.\\
    Thus $\lim\limits_{\beta\rightarrow+\infty} e^{(j-(N-1))x}=0$  for all $0\leq j \leq N-2$.
    
\end{enumerate}

\end{proof}

\begin{figure}[H]
\centering
  \includegraphics[width=1.1\textwidth]{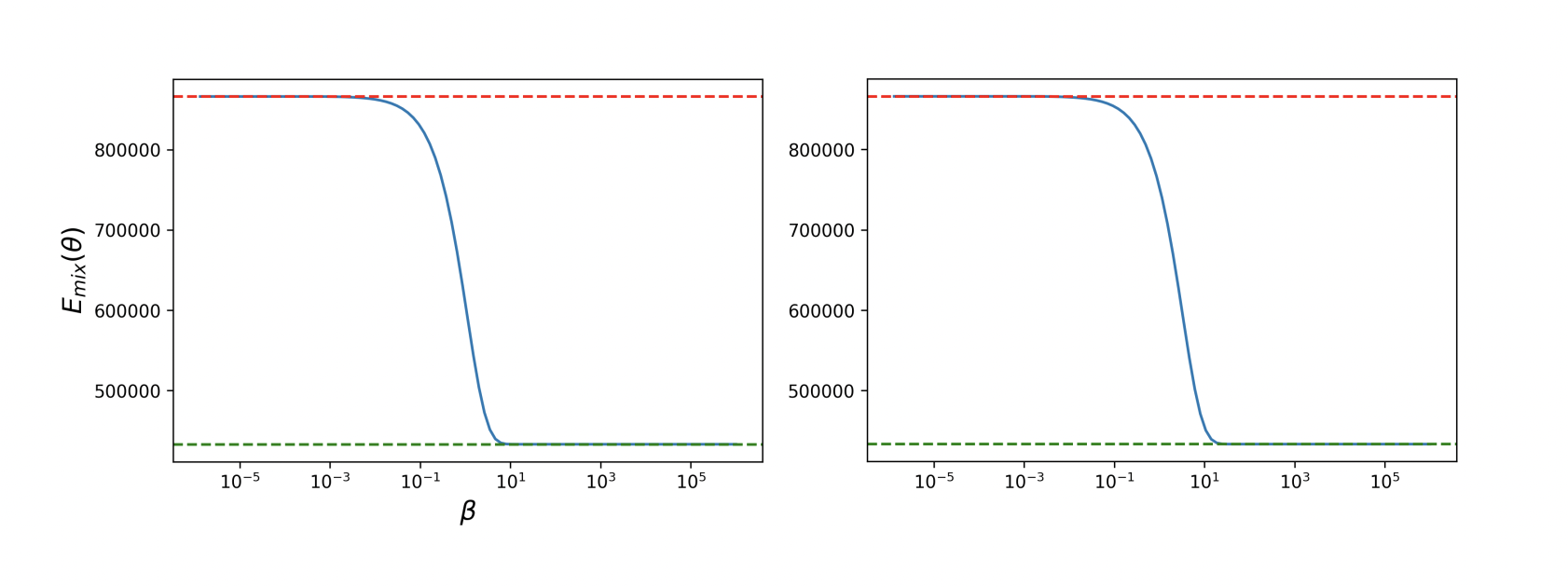}
\caption{Neutral drift and strong selection limits. We calculate numerically the expected total cost for the mixed incentive, varying the intensity of selection $\beta$. The dashed red lines represent the corresponding theoretical limiting values obtained in Proposition \ref{prop: weak selection} for the neutral drift limit, while the dashed green lines represent the corresponding theoretical limiting values obtained in Proposition \ref{prop: strong selection} for the strong selection limit. We observe that numerical results are in close accordance with those obtained theoretically. Results are obtained for DG with $N = 2500,  B = 2, c = 1$ (left) and for PGG with $N = 2500, c = 1, r = 3, n = 5$ (right).}
\label{fig:weakstrong}
\end{figure}

\subsection{Infinite population limits}
In this section, we establish the infinite population limits of the cost function, proving the fourth statement of Theorem \ref{thm: main thm1}.
\begin{proposition}[Infinite population limit] 
\label{prop: large N limit reward}
We have 
\begin{equation}
\lim_{N\rightarrow +\infty}\frac{E_{mix}(\theta)}{\frac{N^2\theta}{2}H_{a,b}}=\begin{cases}
1+e^{-\beta|\theta-c|} \quad \text{for DG},\\
1+ e^{-\beta |\theta-c(1-\frac{r}{n})|}\quad\text{for PGG}.
\end{cases}
\end{equation}
\end{proposition}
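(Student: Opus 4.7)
The plan is to start from the explicit formula for $E_{mix}(\theta)$ given in Lemma \ref{lem: Emix and derivative}, divide through by $\frac{N^2\theta}{2} H_{a,b}$, and split the resulting expression into two parts:
\begin{align*}
\frac{E_{mix}(\theta)}{\frac{N^2\theta}{2}H_{a,b}}
&= \frac{H_{N,a,b}}{H_{a,b}}\cdot\frac{(1+e^x)(1+e^x+\ldots+e^{(N-2)x})}{1+e^x+\ldots+e^{(N-1)x}}\\
&\qquad + \frac{1}{H_{a,b}}\cdot\frac{(1+e^x)\sum_{j=1}^{N-1}\frac{e^{(j-1)x}}{j(N-j)}\min\!\big(\tfrac{j}{a},\tfrac{N-j}{b}\big)}{1+e^x+\ldots+e^{(N-1)x}}.
\end{align*}
Here $x=x_N=\beta(\theta+\delta_N)$ depends on $N$ through $\delta_N$, and $x_N\to x_\infty:=\beta(\theta+\delta_\infty)$ where $\delta_\infty=-c$ for DG and $\delta_\infty=-c(1-r/n)$ for PGG. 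The factor $H_{N,a,b}/H_{a,b}$ tends to $1$ by Proposition \ref{prop: limit HNab}, and the specific limiting expressions $1+e^{-\beta|\theta-c|}$ and $1+e^{-\beta|\theta-c(1-r/n)|}$ will just be $1+e^{-|x_\infty|}$ after substituting $\delta_\infty$, so the crux is to show that the first ratio converges to $1+e^{-|x_\infty|}$ and the second ratio vanishes.

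For the first ratio, setting $u_N=e^{x_N}$, closed-form geometric sums give
\[
\frac{(1+u_N)(1+u_N+\ldots+u_N^{N-2})}{1+u_N+\ldots+u_N^{N-1}}=\frac{(1+u_N)(u_N^{N-1}-1)}{u_N^N-1}\quad (u_N\neq 1).
\]
I would split into three cases according to the sign of $x_\infty$. If $x_\infty>0$ then $u_N\ge 1+\varepsilon$ for all large $N$ and dividing numerator and denominator by $u_N^N$ yields convergence to $(1+u_\infty)/u_\infty=1+e^{-x_\infty}$. If $x_\infty<0$ then $u_N^N,u_N^{N-1}\to 0$ and the expression converges to $1+u_\infty=1+e^{x_\infty}$. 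The delicate case is $x_\infty=0$, where $u_N\to 1$; here $x_N=O(1/N)$ so $Nx_N$ tends to a finite constant ($-\beta B$ for DG and $-\beta c\,r/n$ for PGG by the explicit formulas of $\delta_N$), and a careful expansion of the numerator and denominator (each asymptotic to $(1-e^{Nx_\infty})/(-x_N)$ up to lower order) shows both sums are of the same order, giving the limit $2=1+e^{-|x_\infty|}$. In all three cases the limit equals $1+e^{-|x_\infty|}$.

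For the second ratio, I would use Lemma \ref{lem: m and M}: the uniform bound
$\max_j\frac{\min(j/a,(N-j)/b)}{j(N-j)}\le \frac{1}{\min(a,b)\lfloor(N-1)/2\rfloor}=O(1/N)$
allows one to estimate
\[
\frac{(1+u_N)\sum_{j=1}^{N-1}u_N^{j-1}\frac{\min(j/a,(N-j)/b)}{j(N-j)}}{1+u_N+\ldots+u_N^{N-1}}
\le \frac{C}{N}\cdot\frac{(1+u_N)(1+u_N+\ldots+u_N^{N-2})}{1+u_N+\ldots+u_N^{N-1}},
\]
and the second factor is bounded (it is exactly the expression analysed above, which has a finite limit $1+e^{-|x_\infty|}$). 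Hence this contribution is $O(1/N)\to 0$, and combining with the first part gives the claimed limit $1+e^{-|x_\infty|}$. Substituting $\delta_\infty=-c$ for DG and $\delta_\infty=-c(1-r/n)$ for PGG yields the two formulas in the statement.

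I expect the main technical obstacle to be the boundary case $x_\infty=0$ (i.e.\ $\theta=c$ for DG, or $\theta=c(1-r/n)$ for PGG), since both numerator and denominator in the first ratio degenerate simultaneously and one has to keep track of the next-order term in the expansion $u_N=1+x_N+O(x_N^2)$ while using that $Nx_N$ stays bounded. Away from this boundary, everything reduces to dominant-term extraction in geometric sums together with the uniform $O(1/N)$ bound on the $\min$-coefficients.
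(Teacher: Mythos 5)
Your proposal is correct and takes essentially the same route as the paper: both reduce the statement to the limit of the same geometric-sum ratio $\frac{(1+e^x)(1+e^x+\cdots+e^{(N-2)x})}{1+e^x+\cdots+e^{(N-1)x}}$, handled by case analysis on the sign of $\beta(\theta+\delta_\infty)$ with the boundary case controlled by the boundedness of $Nx_N$, and both show that the contribution of the coefficients $\frac{\min(j/a,\,(N-j)/b)}{j(N-j)}$ vanishes (the paper via the sandwich $\underline{h}\le h_j\le \overline{h}\to 0$, you via the uniform $O(1/N)$ bound of Lemma \ref{lem: m and M}). Two harmless slips: at the PGG boundary one has $Nx_N\to-\beta c\,r(n-1)/n$ rather than $-\beta c r/n$, and the asymptotic written as $(1-e^{Nx_\infty})/(-x_N)$ should involve $e^{Nx_N}$ (which tends to that finite constant); neither affects the validity of the argument.
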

Unlike the neutral drift and strong selection limits, the infinite population limit of the cost function strongly depends on the underlying game. The strength of selection, $\beta$, is fixed. See Figure \ref{fig:infinitepopulation} below for an illustration of these limits.

\begin{proof}
\noindent Using 
$$
E_{mix}(\theta)=\frac{\frac{N^2\theta}{2}(1+e^x)}{\sum\limits_{j=0}^{N-1}e^{jx}}\Big[ \sum\limits_{j=0}^{N-2} e^{jx}(H_{N,a,b} + h_j)\Big]\quad \text{where}\quad h_j=\frac{\min(\frac{j+1}{a}, \frac{N-j-1}{b})}{(j+1)(N-j-1)},
$$

we can estimate

\begin{equation}
\label{eq: es0}
\frac{(1+e^x)(1+\ldots+e^{(N-2)x})}{1+\ldots+e^{(N-1)x}}\Big(\frac{H_{N,a,b}}{H_{a,b}}+\frac{\underline{h}}{H_{a,b}}\Big)\leq \frac{E_{mix}(\theta)}{\frac{N^2\theta}{2} H_{a,b}}\\ \leq \frac{(1+e^x)(1+\ldots+e^{(N-2)x})}{1+\ldots+e^{(N-1)x}}\Big(\frac{H_{N,a,b}}{H_{a,b}}+\frac{\overline{h}}{H_{a,b}}\Big),
\end{equation}
where $\underline{h}=\min h_j$ and $\overline{h}=\max h_j$.

We recall that $x=\beta(\theta+\delta)$, where
$$
\delta=\delta(N)=\begin{cases}
-(c+\frac{B}{N-1})\quad\text{for DG},\\
-c(1-\frac{r(N-n)}{n(N-1)}) \quad\text{for PGG}.
\end{cases}
$$

Let
$$
w(\theta):=\frac{(1+e^x)(1+\ldots+e^{(N-2)x})}{1+\ldots+e^{(N-1)x}}.
$$

Then we have
\begin{equation}
\label{eqn: omega}
w(\theta)=1+\frac{e^x+\ldots e^{(N-2)x}}{1+\ldots+e^{(N-1)x}}=\begin{cases}
\frac{2(N-1)}{N}\quad\text{if}\quad x=0,\\
1+e^x\frac{1-e^{(N-2)x}}{1-e^{N x}}\quad \text{if}\quad x\neq 0.
\end{cases}
\end{equation}

Since for fixed $\theta$, $x$ equals to $0$ for only one value of $N$, we can consider $x\neq 0$ when we study the limit $N\rightarrow +\infty$. We have 
\begin{equation}
\label{eqn: exponential}
\lim_{N\rightarrow +\infty} e^x=\lim_{N\rightarrow +\infty}e^{\beta(\theta+\delta(N))}=\begin{cases}
e^{\beta(\theta-c)}\quad\text{for DG},\\
e^{\beta(\theta-c(1-\frac{r}{n}))}\quad \text{for PGG}.
\end{cases}
\end{equation}
In addition, for DG, we have
\begin{equation}
\label{eq: DG1}
\lim_{N\rightarrow +\infty}\frac{1-e^{(N-2)x}}{1-e^{N x}}=\lim_{N\rightarrow +\infty}\frac{1-e^{-\beta b(N-3)/(N-1)}e^{\beta (\theta-c)(N-3)}}{1-e^{-\beta b}e^{\beta (\theta-c)(N-1)}}=\begin{cases}
1\quad\text{if}\quad \theta\leq c,\\
e^{-2\beta(\theta-c)}\quad\text{if}\quad \theta>c.
\end{cases}
\end{equation}
While for PGG, we have
\begin{equation}
\label{eqn: PGG1}
\frac{1-e^{(N-2)x}}{1-e^{N x}} = \frac{1 - e^{(N-2)\beta[\theta-c(1-\frac{r}{n})-\frac{cr}{n}\frac{n-1}{N-1}]}}{1-e^{N\beta[\theta-c(1-\frac{r}{n})-\frac{cr}{n}\frac{n-1}{N-1}]}}
= \frac{1 - e^{(N-2)\beta(\theta-c(1-\frac{r}{n}))}e^{-\beta(N-2)\frac{cr}{n}\frac{n-1}{N-1}}}{1-e^{N\beta(\theta-c(1-\frac{r}{n}))} e^{-N\beta\frac{cr}{n}\frac{n-1}{N-1}}}
\end{equation}

If $\theta-c(1-\frac{r}{n}) \leq 0$, then 
\begin{align*}
& \lim_{N\rightarrow +\infty} e^{(N-2)\beta(\theta-c(1-\frac{r}{n}))} = \lim_{N\rightarrow +\infty} e^{-\beta(N-2)\frac{cr}{n}\frac{n-1}{N-1}} = 0,
\\& \lim_{N\rightarrow +\infty} e^{-\beta(N-2)c \frac{r(n-1)}{n(N-1)}} = e^{-\beta c \frac{r(n-1)}{n}},
\\&  \lim_{N\rightarrow +\infty} e^{-N\beta c \frac{r(n-1)}{n(N-1)}} = e^{-\beta c \frac{r(n-1)}{n}}.
\end{align*}

Substituting these above limits in \eqref{eqn: PGG1}, it follows that, if $\theta-c(1-\frac{r}{n}) \leq 0$, we have
\begin{equation*}
  \lim\limits_{N\rightarrow +\infty}\frac{1-e^{(N-2)x}}{1-e^{N x}} = 1. 
\end{equation*}

If $\theta-c(1-\frac{r}{n})>0$, then we have

\begin{equation}
\label{eq: PGG2}
   \frac{1 - e^{(N-2)\beta(\theta-c(1-\frac{r}{n}))}e^{-\beta(N-2)\frac{cr}{n}\frac{n-1}{N-1}}}{1-e^{N\beta(\theta-c(1-\frac{r}{n}))} e^{-N\beta\frac{cr}{n}\frac{n-1}{N-1}}}= \frac{\frac{1}{e^{N\beta(\theta-c(1-\frac{r}{n}))}} - e^{-2\beta(\theta-c(1-\frac{r}{n}))} e^{-\beta(N-2)\frac{cr}{n}\frac{n-1}{N-1}}}{\frac{1}{e^{N\beta(\theta-c(1-\frac{r}{n}))}}-e^{-N\beta\frac{cr}{n}\frac{n-1}{N-1}}}.
\end{equation}

We have
\begin{align*}
&    \lim\limits_{N\rightarrow +\infty}\frac{1}{e^{N\beta(\theta-c(1-\frac{r}{n}))}}=0,
\\&\lim\limits_{N\rightarrow+\infty}  e^{-\beta(N-2)\frac{cr}{n}\frac{n-1}{N-1}}=e^{-\beta c \frac{r(n-1)}{n}},
\\&  \lim_{N\rightarrow +\infty} e^{-N\beta c \frac{r(n-1)}{n(N-1)}} = e^{-\beta c \frac{r(n-1)}{n}}.
\end{align*}

From the above limits, \eqref{eqn: PGG1} and \eqref{eq: PGG2} we obtain
\begin{align*}
   \lim\limits_{N\rightarrow +\infty}\frac{1-e^{(N-2)x}}{1-e^{N x}}&= \lim_{N\rightarrow +\infty} \frac{\frac{1}{e^{N\beta(\theta-c(1-\frac{r}{n}))}} - e^{-2\beta(\theta-c(1-\frac{r}{n}))} e^{-\beta(N-2)\frac{cr}{n}\frac{n-1}{N-1}}}{\frac{1}{e^{N\beta(\theta-c(1-\frac{r}{n}))}}-e^{-N\beta\frac{cr}{n}\frac{n-1}{N-1}}}
   \\&=\frac{- e^{-2\beta(\theta-c(1-\frac{r}{n}))} e^{-\beta\frac{cr(n-1)}{n}}}{-e^{-\beta\frac{cr(n-1)}{n}}}
   \\&= e^{-2\beta(\theta-c(1-\frac{r}{n}))}.
\end{align*}

Hence for PGG
\begin{equation}
\label{eq: PGG3}
    \lim\limits_{N\rightarrow +\infty}\frac{1-e^{(N-2)x}}{1-e^{N x}}=\begin{cases}
    1,\quad \text{if}\quad \theta\leq c(1-\frac{r}{n}),\\
    e^{-2\beta(\theta-c(1-\frac{r}{n}))}\quad\text{if}\quad \theta> c(1-\frac{r}{n}).
    \end{cases}
\end{equation}

From \eqref{eqn: omega}, \eqref{eqn: exponential}, and \eqref{eq: DG1}, we obtain, for DG 
\begin{align}
\label{eq: es2}
\lim_{N\rightarrow+\infty}w(\theta)&=\begin{cases}
1+e^{\beta(\theta-c)}\quad\text{if}\quad \theta \leq c,\\
1+e^{-\beta(\theta-c)}\quad\text{if}\quad \theta>c
\end{cases}\notag\\
&=1+e^{-\beta |\theta-c|},
\end{align}

and similarly, from \eqref{eqn: omega}, \eqref{eqn: exponential}, and \eqref{eq: PGG3}, we get, for PGG
\begin{align}
\label{eq: es21}
\lim_{N\rightarrow+\infty}w(\theta)&=\begin{cases}
1+e^{\beta(\theta-c)}e^{\beta c\frac{r}{n}}\quad\text{if}\quad \theta \leq c(1-\frac{r}{n}),\\
1+e^{-\beta(\theta-c)}e^{-\beta c\frac{r}{n}}\quad\text{if}\quad \theta>c(1-\frac{r}{n})
\end{cases}\nonumber
\\&=1+ e^{-\beta |\theta-c(1-\frac{r}{n})|}.
\end{align}

Now we study the limit of $\frac{H_{N,a,b}+\underline{h}}{H_{a,b}}$ and of $\frac{H_{N,a,b}+\overline{h}}{H_{a,b}}$ as $N\rightarrow+\infty$.

For $i=1,\ldots, N_{a,b}$, $h_i=\frac{1}{a(N-i)}< \frac{1}{a(N-N_{a,b})}$ as $N-i\geq N-N_{a,b}$. Since $\lim\limits_{N\rightarrow +\infty}\frac{1}{a(N-N_{a,b})}=0$, it follows that $\lim\limits_{N\rightarrow +\infty} h_i=0$.

For $i=N_{a,b}+1,\ldots, N-2$, $h_i=\frac{1}{b(i+1)}< \frac{1}{b(N_{a,b}+1)}$ as $i\geq N_{a,b}+1$. Since $\lim\limits_{N\rightarrow +\infty}\frac{1}{b(N_{a,b}+2)}=0$, it follows that $\lim\limits_{N\rightarrow +\infty} h_i=0$.

As $N_{a,b}=\lfloor \frac{N}{\frac{b}{a}+1} \rfloor$ $\rightarrow +\infty$ for $N\rightarrow +\infty$, we deduce that $h_j \rightarrow 0$ as $N\rightarrow +\infty$ for any $j$,
\begin{equation}
\label{eq: es3}
\lim_{N\rightarrow +\infty}\frac{H_{N,a,b}}{H_{a,b}}+\frac{\underline{h}}{H_{a,b}}=\lim_{N\rightarrow +\infty}\frac{H_{N,a,b}}{H_{a,b}}+\frac{\overline{h}}{H_{a,b}}=1.
\end{equation}

Therefore, from  \eqref{eq: es0}, \eqref{eq: es2},  and \eqref{eq: es3} we obtain, for DG: 
\begin{equation*}
\lim_{N\rightarrow +\infty}\frac{E_{mix}(\theta)}{\frac{N^2\theta}{2}H_{N,a,b}}=1+e^{-\beta|\theta-c|},
\end{equation*}

and from \eqref{eq: es0}, \eqref{eq: es21} and \eqref{eq: es3}, for PGG:
\begin{equation*}
\lim_{N\rightarrow +\infty}\frac{E_{mix}(\theta)}{\frac{N^2\theta}{2}H_{N,a,b}}=1+ e^{-\beta |\theta-c(1-\frac{r}{n})|}.
\end{equation*}
\end{proof}

\begin{figure}[H]
\centering
\includegraphics[width=1\textwidth]{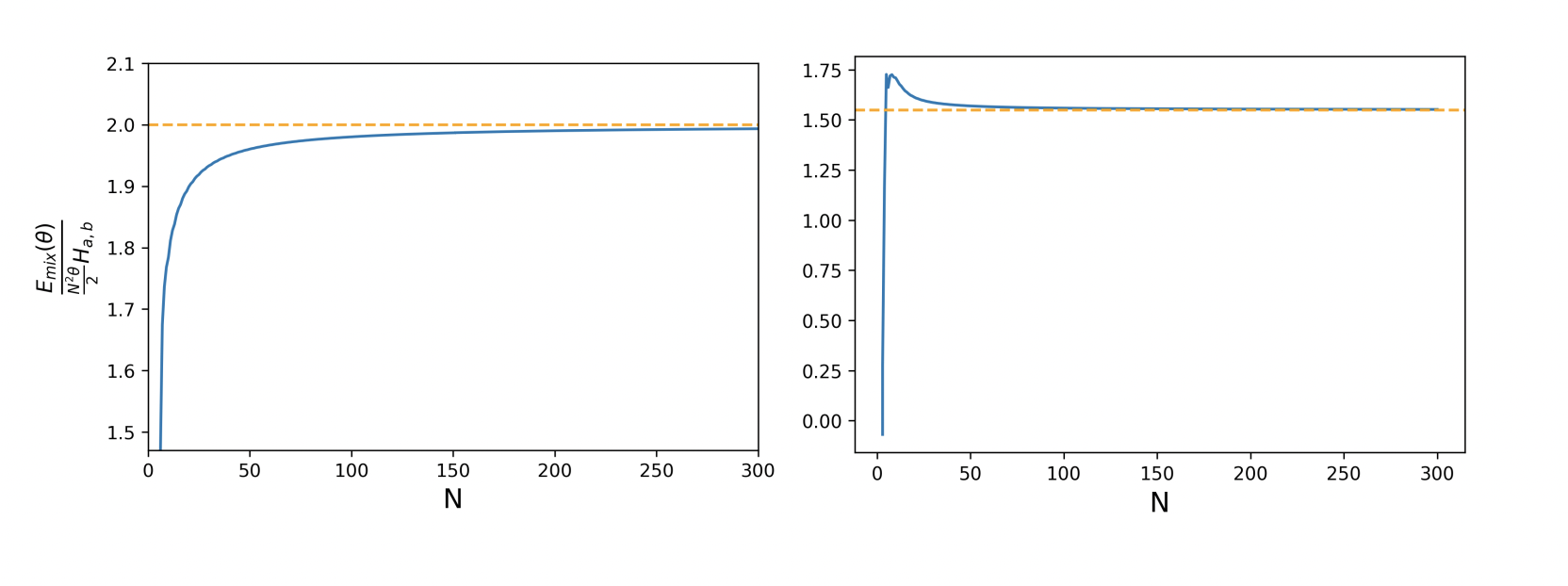}
\caption{Large population size limit. We calculate numerically the expected total cost for the mixed incentive, varying the population size $N$. The dashed orange lines represent the corresponding theoretical limiting values obtained in Proposition \ref{prop: large N limit reward} for the large population size limit, $N\rightarrow +\infty$. We observe that numerical results are in close accordance with those obtained theoretically. Results are obtained for DG with $N = 2500, B = 2, c = 1, \beta = 1, \theta = 1$ (left) and for PGG with $N = 2500, c = 1, r = 3, n = 5, \beta = 1, \theta = 1$ (right).}
\label{fig:infinitepopulation}
\end{figure}

\section{Discussion}
\label{sec: discussion}
The use of institutional incentives such as reward and punishment is an effective tool for the promotion of cooperation in social dilemmas, as proven both by theoretical  \cite{hauert2007,carrotstick,sigmundinstitutions,han2022institutional,duong2021cost}  and experimental results \cite{ostrom2009understanding,van2014reward,wu2014role}. In past works, although mixed incentives were used, the aspect of minimisation of the cost function while ensuring a minimum level of cooperation was overlooked. Moreover, existing works that consider this question usually omit the stochastic effects that drive population dynamics, namely when the strength of selection, $\beta$, varies.

In this work, we used a stochastic evolutionary game theoretic approach for finite, well-mixed populations and obtained theoretical results for the optimal cost of mixed incentives that ensure a desired level of cooperation, for a given intensity of selection, $\beta$. We show that this cost strongly depends on the value of $\beta$ due to the existence of a phase transition in the cost function for providing mixed incentives. This behaviour is missing in works that consider a deterministic evolutionary approach \cite{wang2019exploring}. We also characterised asymptotic behaviours of the cost function and showed that mixed incentives are always cost efficient than either using only reward or only punishment. In particular, we successfully  obtained an infinite population limit, as well as those for neutral drift and strong selection.

For the mathematical analysis of the mixed incentive cost function to be possible, we made some assumptions. Firstly, in order to derive the analytical formula for the frequency of cooperation, we assumed a small mutation limit \cite{rockenbach,nowak2004emergence,sigmund2010calculus}. Despite the simplified assumption, this small mutation limit approach has  wide applicability to scenarios which go well beyond the strict limit of very small mutation rates \cite{zisis2015generosity,hauert2007,sigmundinstitutions,rand2013evolution}. If we were to relax this assumption, the derivation of a closed form for the frequency of cooperation would be intractable. Secondly, we focused on two important cooperation dilemmas, the Donation Game and the Public Goods Game. Both have in common that the difference in average payoffs between a cooperator and a defector does not depend on the population composition. This special property allowed us to simplify the fundamental matrix of the Markov chain to a tridiagonal form and apply the techniques of matrix analysis to obtain a closed form of its inverse matrix. In games with more complex payoff matrices such as the general prisoners' dilemma and the collective risk game \cite{sun2021combination}, this property no longer holds (e.g. in the former case the payoff difference, $\Pi_C(i)-\Pi_D(i)$, depends additively on $i$) and the technique in this paper cannot be directly applied.  In these scenarios, we might consider other approaches to approximate the inverse matrix, exploiting its block structure.

More recent works looked at the effect of indirect exclusion on cooperation, having found that the introduction of indirect exclusion could induce the stable coexistence of cooperators and defectors or the dominance of cooperators, successfully promoting cooperation \cite{Liu_2022}. Looking at prior agreement before an interaction takes place also showed that cooperation would be increased. Thus, individuals choose whether to take part in a social dilemma and those who do are rewarded \cite{ogbo2022evolution,han2022institutional}. Our future work will consider cost-efficiency in these different form of institutional incentives, including the problem of providing incentives whenever the frequency or number of cooperators (or defectors) in the population  does not exceed a given threshold.

We furthermore aim to investigate the optimisation problems of incentives such as reward, punishment, and exclusion in complex networks. There has been little attention to providing analytical results for cost-efficient incentives in structured populations or in more complex games, so this would also be an interesting research avenue. Finally, since time is most precious, we intent to explore the time that the system needs in order to achieve a desired level of cooperation.

\section{Appendix}
\label{sec: appendix}
In this appendix, we provide explicit calculations for some small population cases, as well as detailed proofs and computations of technical results in the previous sections.
\subsection{Small population examples}

\subsubsection{$N=3$}
The cost function is
\begin{align*}
E_{mix}(\theta) &= \frac{\theta}{2}\sum_{j=1}^{2}(n_{1j}+n_{2,j}) \min\Big(\frac{j}{a}, \frac{3-j}{b}\Big)\\
&=\frac{\theta}{2} \sum_{j=1}^{2} \frac{3^2}{(3-j)j} \Big((W^{-1})_{1j} +(W^{-1})_{2,j}\Big) \min(j,3-j) \notag\\
&= \frac{9 \theta(1+e^x)}{2(e^{3x}-1)} \sum_{j=1}^{2} \frac{e^{2x} - e^{(j-1)x} + e^{jx} - 1 }{j(3-j)}\min(j, 3-j)\notag
\\&=\frac{27 \theta  (1 + e^x)^2}{4(e^{x} +e^{2x} +1 )}.
\end{align*}
The function $F$ is
$$
F(u) = \frac{u^4 + 3u^3 + 4u^2 + 3u +1}{u^3 - u} - \log(u).
$$
The critical threshold is
$$
\beta^*= 3.67.
$$
\subsubsection{$N=4$}
The cost function is
\begin{align*}
E_{mix}(\theta) &= \frac{\theta}{2}\sum_{j=1}^{3}(n_{1j}+n_{3,j}) \min(\frac{j}{a}, \frac{4-j}{b})\\
&=\frac{\theta}{2} \sum_{j=1}^{3} \frac{4^2}{(4-j)j} \Big((W^{-1})_{1j} +(W^{-1})_{3,j}\Big) \min(j, 4-j) \notag\\
&= \frac{8 \theta(1+e^x)}{(e^{4x}-1)} \sum_{j=1}^{3} \frac{e^{3x} - e^{(j-1)x} + e^{jx} - 1 }{j(4-j)}\min(j, 4-j)\notag
\\&=\frac{4 \theta (1 + e^x)(9+ 10 e^x + 10 e^{2x})}{3 (1 + e^x + e^{2x} + e^{3x})}.
\end{align*}
The function $F$ is
$$
F(u) = \frac{1.16u^6 + 3.66u^5 + 6.16u^4 + 7.33u^3 + 6.16u^2 + 3.66u + 1.16}{1.33u^5 + 2.66u^4 - 2.66u^2 - 1.33u} - \log(u).
$$
The critical threshold is 
$$
\beta^*= 2.34.
$$
\subsubsection{$N=5$}
The cost function is
\begin{align*}
E_{mix}(\theta) &= \frac{\theta}{2}\sum_{j=1}^{4}(n_{1j}+n_{4,j}) \min\Big(\frac{j}{a}, \frac{5-j}{b}\Big)\\
&=\frac{\theta}{2} \sum_{j=1}^{4} \frac{5^2}{(5-j)j} \Big((W^{-1})_{1j} +(W^{-1})_{4,j}\Big) \min(j, 5-j) \notag\\
&= \frac{25\theta(1+e^x)}{2(e^{5x}-1)} \sum_{j=1}^{4} \frac{e^{4x} - e^{(j-1)x} + e^{jx} - 1 }{j(5-j)}\min(j, 5-j)\notag
\\&=\frac{25\theta(1+e^x)(17+18e^{x}+18e^{2x}+17e^{3x})}{24(1+e^x+e^{2x}+e^{3x}+e^{4x})}.
\end{align*}
The function $F$ is
$$
F(u) = \frac{1.16u^8 + 3.58u^7 + 6.083u^6 + 8.5u^5 + 9.66u^4 + 8.5u^3 + 6.083u^2 + 3.58u + 1.16}{1.25u^7 + 2.66u^6 + 3.83u^5 + 8.88u^4 - 3.83u^3 - 2.66u^2 - 1.25u} - \log(u).
$$
The critical threshold is 
$$
\beta^*= 2.084.
$$

\subsection{Proof of Proposition \ref{prop: 1st result}}
\label{sec: proof of 1st result}
In this section, we prove Proposition \ref{prop: 1st result}. 
\begin{proof}
We recall that:
$$
E_{r}(\theta) = \frac{\theta}{2}\sum_{j=1}^{N-1}(n_{1j}+n_{N-1,j})\cdot \frac{j}{a},\quad E_{p}(\theta)=\frac{\theta}{2}\sum_{j=1}^{N-1}(n_{1j}+n_{N-1,j})\cdot \frac{N-j}{b},
$$ 
and 
$$
E_{mix}(\theta)= \frac{\theta}{2}\sum_{j=1}^{N-1}(n_{1j}+n_{N-1,j})\cdot \min\Big(\frac{j}{a}, \frac{N-j}{b}\Big).
$$ 

The first statement follows directly from these formulae, the fact that $n_{1,j}, n_{N-1,j}>0$, and the elementary inequalities
$$
\min\Big(\frac{j}{a}, \frac{N-j}{b}\Big)\leq \frac{j}{a} \quad \text{and}\quad \min\Big(\frac{j}{a}, \frac{N-j}{b}\Big)\leq \frac{N-j}{b}.
$$

Now we prove the second statement. The third one is similar. Suppose that $\frac{b}{a}\leq \frac{1}{N-1}$. We have 
\begin{align*}
\frac{b}{a}\leq \frac{1}{N-1}=\frac{N}{N-1}-1= \min_{1\leq j \leq N-1} \Big(\frac{N-j}{j}\Big),
\end{align*}
which implies that 
$$
 \frac{b}{a}\leq \frac{N-j}{j}, \quad \text{i.e.}, \quad \frac{j}{a}\leq \frac{N-j}{b} \quad\text{for all} \quad j=1,\ldots, N-1.
$$

Thus, $\min\Big(\frac{j}{a},\frac{N-j}{b}\Big)=\frac{j}{a}$ for all $j=1,2,\ldots,N-1$, where $N$ is the number of individuals in the population. Therefore, 
\begin{align*}
E_{mix}(\theta)&=\frac{\theta}{2}\sum_{j=1}^{N-1}(n_{1j}+n_{N-1,j})\min\Big(\frac{j}{a}, \frac{N-j}{b}\Big)
\\&=\frac{\theta}{2}\sum_{j=1}^{N-1}(n_{1j}+n_{N-1,j})\frac{j}{a}
\\&=E_r(\theta).
\end{align*}
\end{proof}

\subsection{Proof of Lemma \ref{lem: Emix and derivative}}
\label{sec: prooflem2}

In this section, we give the proof of Lemma \ref{lem: Emix and derivative}.

\begin{proof}
Since $x=\beta(\theta+\delta)$, we have
\begin{align}
\label{eq: derivative E}
\frac{d}{d\theta}E_{mix}(\theta)=\beta\frac{d}{dx}E_{mix}(\theta)&=\frac{\beta N^2}{2}\Bigg\{\frac{1}{\beta}\frac{f(x)}{g(x)}+\theta\Big(\frac{f'(x)}{g(x)}-\frac{f(x)g'(x)}{g(x)^2}\Big)\Bigg\}\notag
\\&=\frac{N^2}{2}\Bigg\{\left[\frac{f(x)}{g(x)}+(x-\beta \delta)\Big(\frac{f'(x)}{g(x)}-\frac{f(x)g'(x)}{g(x)^2}\Big)\right]\Bigg\}\notag
\\&=\frac{N^2}{2}\frac{1}{g(x)^2}\Big[f(x)g(x)-(x-\beta \delta)(f(x)g'(x)-f'(x)g(x))\Big].
\end{align}
Let $u:=e^x$. From the formula of $f(x)=(1+e^x)\Big[\big(1+e^x+\ldots +e^{(N-2)x}\big)H_{N,a,b} + \sum_{j=1}^{N-1}\frac{e^{(j-1) x}}{j(N-j)}\min\Big(\frac{j}{a}, \frac{N-j}{b}\Big)\Big]$ and $g(x)=1+e^x+\ldots+e^{(N-1)x}$, it is more convenient to express the right-hand side of \eqref{eq: derivative E} in terms of $u$. We have
\begin{align*}
f(x)&=(1+u)\sum_{j=0}^{N-2} u^j\Big(H_{N,a,b} + \frac{\min(\frac{j+1}{a},\frac{N-j-1}{b})}{(j+1)(N-j-1)}\Big),
\\f'(x)&=\sum_{j=0}^{N-2}\Big(H_{N,a,b} + \frac{\min(\frac{j+1}{a},\frac{N-j-1}{b})}{(j+1)(N-j-1)}\Big)u^j(j(1+u)+u),
\\g(x)&=\sum_{j=0}^{N-1} u^j, \quad g'(x)=\sum_{j=1}^{N-1} j u^{j}.
\end{align*}
Therefore
\begin{align*}
f(x)g'(x)-f'(x)g(x)&=(1+u)\Big(\sum_{j=0}^{N-2}(H_{N,a,b} + \frac{\min(\frac{j+1}{a},\frac{N-j-1}{b})}{(j+1)(N-j-1)}\Big)\Big(\sum_{j=1}^{N-1} j u^{j}\Big)
\\&\quad-\Big(\sum_{j=0}^{N-2}\Big(H_{N,a,b} + \frac{\min(\frac{i+1}{a},\frac{N-i-1}{b})}{(i+1)(N-i-1)}\Big) u^j(j(1+u)+u)\Big)\Big(\sum_{j=0}^{N-1}u^j\Big)
\\&\quad=(1+u)u\Bigg[\Big(\sum_{j=0}^{N-2}\Big(H_{N,a,b} + \frac{\min(\frac{j+1}{a},\frac{N-j-1}{b})}{(j+1)(N-j-1)}\Big) u^j\Big)\Big(\sum_{j=1}^{N-1} j u^{j-1}\Big)
\\&\qquad\qquad\qquad-\Big(\sum_{j=1}^{N-2}\Big(H_{N,a,b} + \frac{\min(\frac{j+1}{a},\frac{N-j-1}{b})}{(j+1)(N-j-1)}\Big) j u^{j-1}\Big)\Big(\sum_{j=0}^{N-1}u^j\Big)\Bigg]
\\&\qquad\qquad\qquad- u\Big(\sum_{j=0}^{N-2}\Big(H_{N,a,b} + \frac{\min(\frac{j+1}{a},\frac{N-j-1}{b})}{(j+1)(N-j-1)}\Big) u^j\Big)\Big(\sum_{j=0}^{N-1}u^j\Big)
\\&\quad:=uP(u),
\end{align*}
where we define
\begin{align}
\label{eq: def of P}
P(u)&:=(1+u)\Bigg[\Big(\sum_{j=0}^{N-2}\Big(H_{N,a,b} + \frac{\min(\frac{j+1}{a},\frac{N-j-1}{b})}{(j+1)(N-j-1)}\Big) u^j\Big)\Big(\sum_{j=1}^{N-1} j u^{j-1}\Big)\notag
\\&\qquad\qquad-\Big(\sum_{j=1}^{N-2}\Big(H_{N,a,b} + \frac{\min(\frac{j+1}{a},\frac{N-j-1}{b})}{(j+1)(N-j-1)}\Big) j u^{j-1}\Big)\Big(\sum_{j=0}^{N-1}u^j\Big)\Bigg]\notag
\\&\qquad\qquad-\Big(\sum_{j=0}^{N-2}\Big(H_{N,a,b} + \frac{\min(\frac{j+1}{a},\frac{N-j-1}{b})}{(j+1)(N-j-1)}\Big) u^j\Big)\Big(\sum_{j=0}^{N-1}u^j\Big).
\end{align}
We also have
\begin{align}
\label{eq: def of Q}
    f(x)g(x)&=(1+u)\Big(\sum_{j=0}^{N-2}\Big(H_{N,a,b} + \frac{\min(\frac{j+1}{a},\frac{N-j-1}{b})}{(j+1)(N-j-1)}\Big) u^j\Big)\Big(\sum_{j=0}^{N-1} u^{j}\Big):=Q(u).
\end{align}

Note that $x-\beta\delta=\beta \theta>0$ and$f(x)g(x)>0$. It follows that if $f(x)g'(x)-f'(x)g(x)\leq 0$ then $E_{mix}'(\theta)>0$. Therefore, we only need to consider the case $f(x)g'(x)-f'(x)g(x)=u P(u)> 0$ (i.e., when $P(u)>0$). Let $u=e^x$ and define
\begin{equation}
\label{eq: F}
F(u):=\frac{Q(u)}{uP(u)}-\log (u).
\end{equation}
Then
\begin{equation}
\label{eq: derivative Emix}
E_{mix}'(\theta)=\frac{N^2}{2g(x)^2}(uP(u))\Big(\frac{Q(u)}{uP(u)}-\log(u) + \beta \delta\Big)=\frac{N^2}{2g(x)^2}(uP(u))(F(u) + \beta \delta).    
\end{equation}

The next step is to understand the sign of the term $F(u)+\beta \delta$, which is required to understand the polynomials $P$ and $Q$. In the next section, we analyse the polynomial $P$ ($Q$ is explicit and is much simpler).  
\end{proof}

\subsection{Proof of Proposition \ref{prop: coeffs}}
\label{sec:proof1}
In this section, we provide a proof of Proposition \ref{prop: coeffs}. We delicately analyse the coefficients of $P$ and employ the discrete integration by parts techniques from numerical analysis.
\begin{proof} 
Let $m_j=\frac{\min(\frac{j+1}{a},\frac{N-j-1}{b})}{(j+1)(N-j-1)}$ and let $a_j:=H_{N,a,b}+m_j$ for $j=0,\ldots, N-2$. 
Suppose that $P(u)=\sum_{j=0}^{2N-4}p_j u^j$. Using the following formula of product of two polynomials:
$$
\Big(\sum_{j=0}^m a_j x^j\Big)\Big(\sum_{j=0}^n b_j x^j\Big)=\sum_{k=0}^{m+n}\Big(\sum_{j=\max\{0,k-n\}}^{\min\{m,k\}} a_j b_{k-j}\Big) x^k,
$$

we have
\begin{align*}
(i)~~\Big(\sum_{j=0}^{N-2}(H_{N,a,b}+m_j) u^j\Big)\Big(\sum_{j=1}^{N-1} j u^{j-1}\Big)&=\Big(\sum_{j=0}^{N-2}a_j u^j\Big)\Big(\sum_{j=1}^{N-2} (j+1) u^{j}\Big)
\\&=\sum_{k=0}^{2N-4}\Big(\sum_{j=\max(k-N+2,0)}^{\min(k,N-2)} a_j (k-j+1)\Big) u^k\\
(ii)~~\Big(\sum_{j=1}^{N-2}\Big(H_{N,a,b} + m_j\Big) j u^{j-1}\Big)\Big(\sum_{j=0}^{N-1}u^j\Big)&=\Big(\sum_{j=0}^{N-3}a_{j+1}(j+1) u^{j}\Big)\Big(\sum_{j=0}^{N-1}u^j\Big)
\\&=\sum_{k=0}^{2N-4}\Big(\sum_{j=\max(0,k-N+1)}^{\min(k,N-3)}a_{j+1}(j+1)\Big)u^k.\\
(iii)~~ \Big(\sum_{j=0}^{N-2}\Big(H_{N,a,b} + m_j\Big) u^j\Big)\Big(\sum_{j=0}^{N-1}u^j\Big)&=\Big(\sum_{j=0}^{N-2}a_j u^j\Big)\Big(\sum_{j=0}^{N-1}u^j\Big)
\\&=\sum_{k=0}^{2N-3}\Big(\sum_{j=\max(0,k-N+1)}^{\min(N-2,k)}a_j\Big)u^k.
\end{align*}

Hence 
\begin{align*}
(iii)~~&\Bigg[\Big(\sum_{j=0}^{N-2}(H_{N,a,b}+m_j) u^j\Big)\Big(\sum_{j=1}^{N-1} j u^{j-1}\Big)-\Big(\sum_{j=1}^{N-2}\Big(H_{N,a,b} + m_j\Big) j u^{j-1}\Big)\Big(\sum_{j=0}^{N-1}u^j\Big)\Bigg](1+u)
\\ \quad&=\Bigg(\sum_{k=0}^{2N-4}\Big(\sum_{j=\max(k-N+2,0)}^{\min(k,N-2)} a_j (k-j+1)-\sum_{j=\max(0,k-N+1)}^{\min(k,N-3)}a_{j+1}(j+1)\Big) u^k\Bigg)(1+u)\\
\\&=\Big(\sum_{k=0}^{2N-4}b_k u^k\Big)(1+u)
\\&=\sum_{k=0}^{2N-3}\Big(\sum_{j=\max(0,k-1)}^{\min(k,2N-4)}b_j\Big)u^k
\\&=b_0+\sum_{k=1}^{2N-4}(b_{k-1}+b_{k})u^k+b_{2N-4}u^{2N-3}
\end{align*}
where for $k=0,\ldots, 2N-4$
\begin{align}
b_k&:=\sum_{j=\max(k-N+2,0)}^{\min(k,N-2)} a_j (k-j+1)-\sum_{j=\max(0,k-N+1)}^{\min(k,N-3)}a_{j+1}(j+1)
\\&=\sum_{j=\max(k-N+3,1)}^{\min(k+1,N-1)} a_{k-j+1} j-\sum_{j=\max(1,k-N+2)}^{\min(k+1,N-2)}a_j j\notag
\\&=\begin{cases}
\sum_{j=1}^{k+1}(a_{k-j+1}-a_j)j\qquad k\leq N-3,\\
\sum_{j=1}^{N-1}a_{k-j+1}j -\sum_{j=1}^{N-2}a_j j\qquad k=N-2,\\
\sum_{j=k-N+3}^{N-1}a_{k-j+1} j-\sum_{j=k-N+2}^{N-2}a_j j\qquad N-1\leq k\leq 2N-4.
\end{cases}\notag
\end{align}

Therefore, we obtain
\begin{align*}
P(u)&=\sum_{k=0}^{2N-3}\Big(\sum_{j=\max(0,k-1)}^{\min(k,2N-4)}b_j-\sum_{j=\max(0,k-N+1)}^{\min(N-2,k)}a_j\Big)u^k,
\end{align*}
that is for $k=0,\ldots, 2N-3$
\begin{align}
p_k&=\sum_{j=\max(0,k-1)}^{\min(k,2N-4)}b_j-\sum_{j=\max(0,k-N+1)}^{\min(N-2,k)}a_j
\\&=\begin{cases}
b_0-a_0\qquad k=0,\\
b_{k-1}+b_k-\sum_{j=0}^{k}a_{j}\qquad 1\leq k\leq N-2,\\
b_{k-1}+b_k-\sum_{j=k-N+1}^{N-2}a_j\qquad N-1\leq k\leq 2N-4,\\
b_{2N-4}-a_{N-2}\qquad k=2N-3.
\end{cases}
\end{align}

In particular,
\begin{align*}
p_0 &=b_0-a_0=a_0 - a_1 - a_0=-a_1 
\\&= - \Big[H_{N,a,b} + \frac{\min(\frac{2}{a},\frac{N-2}{b})}{2(N-2)}\Big] <0
\\ p_{2N-3}&=b_{2N-4}-a_{N-2}= (N-1)a_{N-2} - (N-2)a_{N-2} - a_{N-2}=0,
\\ p_{2N-4} &= b_{2N-5} + b_{2N-4} - a_{N-3}  - a_{N-2} 
\\&= \Big[(N-2)a_{N-2} + (N-1)a_{N-3} - (N-3)a_{N-3} - (N-2)a_{N-2}\Big] + \Big[(N-1)a_{N-2} - (N-2)a_{N-2}\Big] 
\\&\qquad- a_{N-3}  - a_{N-2} 
    \\&= a_{N-3}
    \\&=\Big[H_{N,a,b} + \frac{\min(\frac{N-2}{a},\frac{2}{b})}{2(N-2)} \Big] > 0
\end{align*}

Hence, in particular, if $a=b$ then
$$
p_0+p_{2N-4}=0.
$$

Now we will show that when $a=b$, $p_{N-2}=0$ and $p_{k}+p_{2N-4-k}=0$ for all $1\leq k\leq N-3$.

\textbf{For $1\leq k\leq N-3$:}
\begin{align*}
b_{k-1}+b_{k}&=\sum_{j=1}^k(a_{k-j}-a_j)j+\sum_{j=1}^{k+1}(a_{k-j+1}-a_j)j
\\&=\sum_{j=1}^k(a_{k-j}-2a_j+a_{k-j+1})j+(a_0-a_{k+1})(k+1)
\\&=\sum_{j=1}^k (m_{k-j+1} + m_{k-j} - 2m_j)j +(m_0-m_{k+1})(k+1).
\end{align*}

In addition,
\begin{align*}
       \sum_{j=0}^{k}a_{j}&=\sum_{j=0}^{k}\Big(H_{N,a,b}+ m_{j}\Big)
    \\&=(k+1)H_{N,a,b}+\sum_{j=0}^{k}m_j.
\end{align*}

Therefore, for $1\leq k\leq N-3$, we have
\begin{align}
p_k&=b_{k-1}+b_k-\sum_{j=0}^{k}a_j\notag
\\&=\sum_{j=1}^k (m_{k-j+1} + m_{k-j} - 2m_j)j +(m_0-m_{k+1})(k+1)-(k+1)H_{N,a,b}-\sum_{j=0}^{k}m_j.\label{eq:pk}
\end{align}

\textbf{For $k=N-2$}, we have
\begin{align*}
    \sum_{j=0}^{N-2}a_{j}&=\sum_{j=0}^{N-2}\Big(H_{N,a,b}+ m_{j}\Big)
    \\&=(N-1)H_{N,a,b}+\sum_{j=0}^{N-2}\frac{\min(\frac{j+1}{a},\frac{N-j-1}{b})}{(j+1)(N-j-1)}
    \\&=(N-1)H_{N,a,b}+\sum_{j=1}^{N-1}\frac{\min(\frac{j}{a},\frac{N-j}{b})}{j(N-j)}
    \\&=(N-1)H_{N,a,b}+H_{N,a,b}
    \\&=N H_{N,a,b}.
\end{align*}

It follows from the above computations that 
$$
\sum_{j=0}^{N-2}m_j=H_{N,a,b}.
$$

Hence
\begin{align*}
p_{N-2}&=b_{k-1}+b_k-\sum_{j=0}^{k}a_{j}
\\&=b_{N-3}+b_{N-2} -\sum_{j=0}^{N-2}a_{j}
\\&=\sum_{j=1}^{N-2}(a_{N-2-j}-a_j)j+\sum_{j=1}^{N-1}a_{N-1-j}j -\sum_{j=1}^{N-2}a_j j - \sum_{j=0}^{N-2}a_{j}
\\&=\sum_{j=1}^{N-2}(a_{N-2-j}-2a_j+a_{N-1-j})j + (N-1)a_0 - \sum_{j=0}^{N-2}a_{j}
\\&=\sum_{j=1}^{N-2}(H_{N,a,b} + m_{N-2-j} - 2H_{N,a,b} - 2m_{j} + H_{N,a,b} + m_{N-1-j})j + (N-1)a_0-N H_{N,a,b}
\\&= \sum_{j=1}^{N-2}( m_{N-2-j} - 2m_{j} +  m_{N-1-j})j + (N-1)a_0-N H_{N,a,b}
\end{align*}

We have 
$$
H_{N,a,b}=\sum\limits_{j=1}^{N-1}\frac{1}{j(N-j)}\min(\frac{j}{a},\frac{N-j}{b})=\sum\limits_{j=1}^{N-1}\frac{1}{j(N-j)}\min(\frac{N-j}{a},\frac{j}{b})
$$
$$
m_j=\frac{\min(\frac{j+1}{a},\frac{N-j-1}{b})}{(j+1)(N-j-1)},\quad m_{N-2-j}=\frac{\min((N-1-j)/a, (j+1)/b)}{(j+1)(N-j-1)}, \quad m_{N-1-j}=\frac{\min((N-j)/a, j/b)}{j(N-j)}
$$

Hence 

\begin{align*}
&\sum_{j=1}^{N-2}m_{N-2-j} j =\sum_{j=1}^{N-2}\frac{\min((N-1-j)/a, (j+1)/b)}{(j+1)(N-j-1)}j=\sum_{j=2}^{N-1}\frac{\min((N-j)/a, j/b)}{j(N-j)}(j-1)
\\&=\sum_{j=2}^{N-1}\frac{\min((N-j)/a, j/b)}{(N-j)}-\sum_{j=2}^{N-1}\frac{\min((N-j)/a, j/b)}{(N-j)}
\\&=\sum_{j=2}^{N-1}\frac{\min((N-j)/a, j/b)}{(N-j)}-\Big[H_{N,a,b}-\frac{\min((N-1)/a, 1/b)}{(N-1)}\Big],\\
&\sum_{j=1}^{N-2}m_{N-j-1} j=\sum_{j=1}^{N-2}\frac{\min((N-j)/a, j/b)}{j(N-j)}j=\sum_{j=1}^{N-2}\frac{\min((N-j)/a, j/b)}{(N-j)},\\
& \sum_{j=1}^{N-2}m_j j=\sum_{j=1}^{N-2}\frac{\min(\frac{j+1}{a},\frac{N-j-1}{b})}{(j+1)(N-j-1)}j=\sum_{j=2}^{N-1}\frac{\min(\frac{j}{a},\frac{N-j}{b})}{j(N-j)}(j-1)
\\&=\sum_{j=2}^{N-1}\frac{\min(\frac{j}{a},\frac{N-j}{b})}{(N-j)}-\Big[H_{N,a,b}-\frac{\min((N-1)/a, 1/b)}{(N-1)}\Big].
\end{align*}

Therefore, recalling that
\begin{align*}
a_0&= H_{N,a,b}+m_0=H_{N,a,b}+\frac{\min(1/a, (N-1)/b)}{N-1}, \\
H_{N,a,b}&=\sum_{j=2}^{N-1}\frac{\min((N-j)/a, j/b)}{j(N-j)}
\\&=\frac{1}{N}\sum_{j=2}^{N-1}\min((N-j)/a, j/b)\Big[\frac{1}{j}+\frac{1}{N-j}\Big]
\\&=\frac{1}{N}\Big[\sum_{j=2}^{N-1}\frac{\min((N-j)/a, j/b)}{j}+\sum_{j=2}^{N-1}\frac{\min((N-j)/a, j/b)}{N-j}\Big]
\end{align*}

Hence,
\begin{align*}
p_{N-2} &=\sum_{j=2}^{N-1}\frac{\min((N-j)/a, j/b)}{N-j}  - 2\sum_{j=2}^{N-1}\frac{\min(\frac{j}{a},\frac{N-j}{b})}{N-j} + \sum_{j=1}^{N-2}\frac{\min((N-j)/a, j/b)}{N-j} 
\\&\qquad+\Big[H_{N,a,b}-\frac{\min((N-1)/a, 1/b)}{(N-1)}\Big]+  (N-1)H_{N,a,b}+\min(1/a, (N-1)/b)-N H_{N,a,b}
\\&=\sum_{j=2}^{N-1}\frac{\min((N-j)/a, j/b)}{N-j}  - 2\sum_{j=2}^{N-1}\frac{\min(\frac{j}{a},\frac{N-j}{b})}{N-j} + \sum_{j=2}^{N-1}\frac{\min((N-j)/a, j/b)}{N-j}
\\&=2\sum_{j=2}^{N-1}\frac{\min((N-j)/a, j/b)-\min(\frac{j}{a},\frac{N-j}{b})}{N-j}.
\end{align*}

If $a=b$, then
$$
\min((N-j)/a, j/b)-\min(\frac{j}{a},\frac{N-j}{b})=\frac{1}{a}\Big[\min(N-j, j)-\min(N-j, j)\Big]=0,
$$

hence, in this case, we have $p_{N-2}=0$.

\textbf{For $N-1\leq k\leq 2N-5$:}
\begin{align*}
b_{k-1}+b_k&=\sum_{j=k-N+2}^{N-2}(a_{k-j}-a_j)j+(2N-2-k)a_{k-N+1}+\sum_{j=k-N+3}^{N-2}(a_{k-j+1}-a_j)j+(2N-3-k)a_{k-N+2}
\\&=\sum_{j=k-N+3}^{N-2}(a_{k-j+1}-2a_j+a_{k-j})j+(k-N+2)(a_{N-2}-a_{k-N+2})
\\&\qquad+(2N-2-k)a_{k-N+1}+(2N-3-k)a_{k-N+2}
\\&=\sum_{j=k-N+3}^{N-2}(m_{k-j}-2m_j+m_{k-j+1})j 
\\&\qquad+(k-N+2)(m_{N-2}-m_{k-N+2})+(2N-2-k)(H_{N,a,b}+m_{k-N+1})+(2N-3-k)(H_{N,a,b}+m_{k-N+2})
\\&=\sum_{j=k-N+3}^{N-2}(m_{k-j}-2m_j+m_{k-j+1})j 
\\&\quad+(k-N+2)(m_{N-2}-m_{k-N+2})+(2N-2-k)m_{k-N+1}+(2N-3-k)m_{k-N+2}+(4N-2k-5)H_{N,a,b}.
\end{align*}

Similarly,
\begin{align*}
    \sum_{j=k-N+1}^{N-2} a_j&=\sum_{j=k-N+1}^{N-2} (H_{N,a,b}+m_j)=(2N-k-2)H_{N,a,b}+\sum_{j=k-N+1}^{N-2}m_j
\end{align*}

Hence, for $N-1\leq k\leq 2N-5$
\begin{align*}
p_k&=b_{k-1}+b_k-\sum_{j=k-N+1}^{N-2} a_j
\\&=\sum_{j=k-N+3}^{N-2}(m_{k-j}-2m_j+m_{k-j+1})j 
\\&\quad+(k-N+2)(m_{N-2}-m_{k-N+2})+(2N-2-k)m_{k-N+1}+(2N-3-k)m_{k-N+2}+(4N-2k-5)H_{N,a,b}
\\&\qquad-(2N-k-2)H_{N,a,b}-\sum_{j=k-N+1}^{N-2} m_j
\\&=\sum_{j=k-N+3}^{N-2}(m_{k-j}-2m_j+m_{k-j+1})j 
\\&\quad+(k-N+2)(m_{N-2}-m_{k-N+2})+(2N-2-k)m_{k-N+1}+(2N-3-k)m_{k-N+2}
\\&\qquad+(2N-k-3)H_{N,a,b}-\sum_{j=k-N+1}^{N-2} m_j.
\end{align*}

For $N-1\leq k\leq 2N-5$, let $k:=2N-4-\hat{k}$, then $1\leq \hat{k}\leq N-3$ and
\begin{align*}
p_{2N-4-\hat{k}}&=\sum_{j=N-\hat{k}-1}^{N-2}(m_{2N-4-\hat{k}-j}-2m_j+m_{2N-3-\hat{k}-j})j+(N-\hat{k}-2)(m_{N-2}-m_{N-\hat{k}-2})
\\&\qquad +(\hat{k}+2)m_{N-\hat{k}-3}+(\hat{k}+1)m_{N-\hat{k}-2}+(\hat{k}+1)H_{N,a,b}-\sum_{j=N-\hat{k}-3}^{N-2}m_j
\end{align*}

Now we will show that $p_{2N-4-\hat{k}}+p_{\hat{k}}=0$, for all $1\leq \hat{k}\leq N-3$. We have
\begin{align}
p_{2N-4-\hat{k}}+p_{\hat{k}}&=\sum_{j=N-\hat{k}-1}^{N-2}(m_{2N-4-\hat{k}-j}-2m_j+m_{2N-3-\hat{k}-j})j+(N-\hat{k}-2)(m_{N-2}-m_{N-\hat{k}-2})\notag
\\&\qquad +(\hat{k}+2)m_{N-\hat{k}-3}+(\hat{k}+1)m_{N-\hat{k}-2}+(\hat{k}+1)H_{N,a,b}-\sum_{j=N-\hat{k}-3}^{N-2}m_j\notag
\\& \qquad +\sum_{j=1}^{\hat{k}}
(m_{\hat{k}-j+1} + m_{\hat{k}-j} - 2m_j)j +(m_0-m_{\hat{k}+1})(\hat{k}+1)-(k+1)H_{N,a,b}-\sum_{j=0}^{\hat{k}}m_j\notag
\\&=\sum_{j=N-\hat{k}-1}^{N-2}(m_{2N-4-\hat{k}-j}-2m_j+m_{2N-3-\hat{k}-j})j+(N-\hat{k}-2)m_{N-2}\notag
\\&\qquad-(N-2\hat{k}-3)m_{N-\hat{k}-2} +(\hat{k}+2)m_{N-\hat{k}-3}-\sum_{j=N-\hat{k}-3}^{N-2}m_j\notag
\\& \qquad + \sum_{j=1}^{\hat{k}}
(m_{\hat{k}-j+1} + m_{\hat{k}-j} - 2m_j)j +(m_0-m_{\hat{k}+1})(\hat{k}+1)-\sum_{j=0}^{\hat{k}}m_j\label{eq:sum1}
\end{align}

To proceed further, we need to simplify the two sums
$$
S_1=\sum_{j=1}^{\hat{k}}
(m_{\hat{k}-j+1} + m_{\hat{k}-j} - 2m_j)j\quad \text{and}\quad S_2=\sum_{j=N-\hat{k}-1}^{N-2}(m_{2N-4-\hat{k}-j}-2m_j+m_{2N-3-\hat{k}-j})j
$$

using discrete integration by parts techniques.

For the first sum in $S_1$, by changing of index $j$ to $\hat{k}-j+1$, we get
$$
\sum_{j=1}^{\hat{k}}
m_{\hat{k}-j+1}\, j=\sum_{j=1}^{\hat{k}}
m_{j}\, (\hat{k}-j+1).
$$

Similarly, by changing of index $j$ to $\hat{k}-j$ we can rewrite the second sum in $S_1$ as
$$
\sum_{j=1}^{\hat{k}}
m_{\hat{k}-j}\, j=\sum_{j=0}^{\hat{k}-1}
m_{j}\, (\hat{k}-j)=\sum_{j=1}^{\hat{k}}
m_{j}\, (\hat{k}-j)+\hat{k} m_0.
$$

Hence, we have
\begin{align}
S_1&=\sum_{j=1}^{\hat{k}}
(m_{\hat{k}-j+1} + m_{\hat{k}-j} - 2m_j)j\notag
\\&=\sum_{j=1}^{\hat{k}}
m_{j}\, (\hat{k}-j+1)+\sum_{j=1}^{\hat{k}}
m_{j}\, (\hat{k}-j)- 2\sum_{j=1}^{\hat{k}}
m_{j}\,j+\hat{k} m_0\notag
\\&= \sum_{j=1}^{\hat{k}} (2(\hat{k}-2j)+1)m_j+ \hat{k}m_0. \label{eq: S1}
\end{align}

By proceeding similarly, we can transform $S_2$ as
\begin{align}
S_2&=\sum_{j=N-\hat{k}-1}^{N-2}(m_{2N-4-\hat{k}-j}-2m_j+m_{2N-3-\hat{k}-j})j \notag
\\&= \sum_{j=N-\hat{k}-2}^{N-2} m_j(4N-2\hat{k}-4j-7) - (N-\hat{k}-2)m_{N-2} - (N-1)m_{N-\hat{k}-2}+2(N-\hat{k}-2)m_{N-\hat{k}-2}\notag
\\&= \sum_{j=N-\hat{k}-2}^{N-2} m_j(4N-2\hat{k}-4j-7) - (N-\hat{k}-2)m_{N-2}+(N-2\hat{k}-3)m_{N-\hat{k}-2}.\label{eq: S2}
\end{align}

Substituting \eqref{eq: S1} and \eqref{eq: S2} back into \eqref{eq:sum1} we have
\begin{align*}
p_{2N-4-\hat{k}}+p_{\hat{k}}&=\sum_{j=N-\hat{k}-2}^{N-2} m_j(4N-2\hat{k}-4j-7) - (N-\hat{k}-2)m_{N-2} +(N-2\hat{k}-3)m_{N-\hat{k}-2}+ (N-\hat{k}-2)m_{N-\hat{k}-2}
\\& \qquad-(N-2\hat{k}-3)m_{N-\hat{k}-2} +(\hat{k}+2)m_{N-\hat{k}-3}-\sum_{j=N-\hat{k}-3}^{N-2}m_j
\\& \qquad+\sum_{j=1}^{\hat{k}} (2(\hat{k}-2j)+1)m_j+ \hat{k}m_0  +(m_0-m_{\hat{k}+1})(\hat{k}+1)-\sum_{j=0}^{\hat{k}}m_j.
\end{align*}

It follows immediately from the formula of $m_j$ that when $a=b$.
$$
\sum_{j=0}^{N-2}m_j=H_{N,a,b}, 
$$

and $m_{j}=m_{N-2-j}$ for $j=0,\ldots, N-2$.

Therefore, we have
\begin{align*}
\sum_{j=N-\hat{k}-2}^{N-2} m_j(4N-2\hat{k}-4j-7)&=\sum_{j=N-\hat{k}-2}^{N-2} m_{N-2-j}(4N-2\hat{k}-4j-7)=\sum_{i=0}^{\hat{k}} m_i (-2\hat{k}+4i+1)
\\&=\sum_{j=1}^{\hat{k}} m_j (-2\hat{k}+4j+1)+m_0(-2\hat{k}+1).
\end{align*}

So we have
\begin{align*}
p_{2N-4-\hat{k}}+p_{\hat{k}}&= 2 \sum_{j=1}^{\hat{k}}m_j+(\hat{k}+2)m_{N-\hat{k}-3}-\sum_{j=N-\hat{k}-3}^{N-2}m_j+m_0(-2\hat{k}+1)+\hat{k} m_0+(m_0-m_{\hat{k}+1})(\hat{k}+1)-\sum_{j=0}^{\hat{k}}m_j
\\&=\sum_{j=0}^{\hat{k}}m_j+(\hat{k}+2)m_{N-\hat{k}-3}-(\hat{k}+1)m_{\hat{k}+1}-\sum_{j=N-\hat{k}-3}^{N-2}m_j
\\&=\sum_{j=0}^{\hat{k}}m_j+(\hat{k}+2)m_{N-\hat{k}-3}-(\hat{k}+1)m_{\hat{k}+1}-\sum_{j=0}^{\hat{k}}m_j-m_{\hat{k}+1}
\\&=0,
\end{align*}
where in the last equality, we have used the fact that $m_{N-\hat{k}-3}=m_{\hat{k}+1}$.

Next we show that $p_k<0$ for $k=1,\ldots, N-3$. Substituting \eqref{eq: S1} back into \eqref{eq:pk} we get for $k=1,\ldots, N-3$
\begin{align}
p_k&=\sum_{j=1}^{k} (2(k-2j)+1)m_j+ km_0.  +(m_0-m_{k+1})(k+1)-(k+1)H_{N,a,b}-\sum_{j=0}^{k}m_j\notag
\\&=\sum_{j=0}^{k} 2(k-2j)m_j  -(k+1)m_{k+1}-(k+1)H_{N,a,b}\notag
\\&=\sum_{j=0}^{\lfloor k/2\rfloor} 2\,k\, m_j-\sum_{j=0}^{\lfloor k/2\rfloor} 4jm_j + \sum_{j=\lfloor k/2\rfloor+1}^{k} 2(k-2j)m_j  -(k+1)m_{k+1}-(k+1)H_{N,a,b}\label{eq:pk2}
\end{align}

In the above expression, only the first term is positive, the other ones are negative. We will show that the first term is dominated by the last term. We recall that
$H_{N,a,b}=\sum_{j=0}^{N-2}m_j$ and for $a=b$, then $m_j=m_{N-2-j}$. Hence
$$
H_{N,a,b}=\sum_{j=0}^{\lfloor (N-2)/2 \rfloor} m_j+\sum_{\lfloor (N-2)/2 \rfloor+1}^{N-2} m_j=2\sum_{j=0}^{\lfloor (N-2)/2 \rfloor} m_j
$$

Substituting this back to \eqref{eq:pk2} we obtain
\begin{align*}
p_k&=\sum_{j=0}^{\lfloor k/2\rfloor} 2\,k\, m_j-\sum_{j=0}^{\lfloor k/2\rfloor} 4jm_j + \sum_{j=\lfloor k/2\rfloor+1}^{k} 2(k-2j)m_j  -(k+1)m_{k+1}-2(k+1)\sum_{j=0}^{\lfloor (N-2)/2 \rfloor} m_j
\end{align*}

Since $k\leq N-3$ we have
$$
\sum_{j=0}^{\lfloor k/2\rfloor} 2\,k\, m_j<2(k+1)\sum_{j=0}^{\lfloor (N-2)/2 \rfloor} m_j.
$$

Therefore, $p_k<0$ for all $k=0,\ldots, N-3$. In conclusion,  we have for $0\leq k\leq N-2$
$$
p_k=-p_{2N-4-k}<0.
$$

As a consequence, when $a=b$, the sequence of coefficients of $P$ has exactly one change of sign. Thus according to Descartes' rule of sign, $P$ has exactly one positive root. Because $p_k=-p_{2N-4-k}$ for all $k=0,\ldots, N-2$, the root is equal to $1$. This finishes the proof of this proposition.
\end{proof}
\subsection{Proof of Proposition \ref{prop: M}}
\label{sec:proof2}
In this section, we provide a detailed proof of Proposition \ref{prop: M}.
\begin{proof}
Recall that
$$
Q(u)=(1+u)\Big(\sum_{j=0}^{N-2}a_j u^j\Big)\Big(\sum_{j=0}^{N-1} u^{j}\Big)
$$
where
$$
a_j= H_{N,a,b}+m_j.
$$

Thus $Q$ is a polynomial of degree $2N-2$ and the leading coefficient is $q_{2N-2}=a_{N-2}$, which is
$$
a_{N-2}=H_{N,a,b}+m_{N-2}=H_{N,a,b}+\frac{ \min{(\frac{N-1}{a}, \frac{1}{b})}}{N-1}>0.
$$

Hence $Q'(u)$ is a polynomial of degree $2N-3$ whose leading coefficient is $(2N-2)a_{N-2}$.

According to Proposition \ref{prop: coeffs} $P$ is a polynomial of degree $2N-4$ whose leading coefficient is
$$
p_{2N-4}=a_{N-3}=H_{N,a,b} + \frac{\min(\frac{N-2}{a},\frac{2}{b})}{2(N-2)}.
$$

Hence $P'$ is a polynomial of degree $2N-5$ whose leading coefficient is $(2N-4)a_{N-3}$.

It follows that $M$ is a polynomial of degree 
$$
\max \{(2N-3)+(2N-4)+1, (2N-2)+(2N-4), (2N-2)+(2N-5)+1, 1+2(2N-4)\}=4N-6.
$$

with the leading coefficient
$$
m_{4N-6}=(2N-2)a_{N-2} a_{N-3}-a_{N-2}a_{N-3}-(2N-4)a_{N-2}a_{N-3}=a_{N-2}a_{N-3}>0.
$$

We write
$$
M(u)=\sum_{i=0}^{4N-6} m_i u^i.
$$

Next we prove that the coefficients of $M$ are symmetric, that is $m_{i}=m_{4N-6-i}$ for all $i=0,\ldots, 4N-6$. In fact, we observe that
$$
u^{4N-6} M(1/u)=u^{4N-6}\sum_{i=0}^{4N-6} m_i (1/u)^i=\sum_{i=0}^{4N-6} m_i u^{4N-6-i}=\sum_{i=0}^{4N-6} m_{4N-6-i} u^{i}
$$

Thus to prove that the coefficients of $M$ are symmetric is equivalent to showing that
\begin{equation}
\label{eq: needprooof}
u^{4N-6}M(1/u)=M(u).
\end{equation}

We now prove this equality. We have
\begin{align}
u^{2N-2}Q(1/u)&=u^{2N-2}(1+1/u)\Big(\sum_{j=0}^{N-2}a_j (1/u)^j\Big)\Big(\sum_{j=0}^{N-1} (1/u)^{j}\Big)\notag
\\&=(1+u)\Big(\sum_{j=0}^{N-2}a_{j} u^{N-2-j}\Big)\Big(\sum_{j=0}^{N-1} u^{N-1-j}\Big)\notag
\\&=(1+u)\Big(\sum_{j=0}^{N-2}a_{N-2-j} u^{j}\Big)\Big(\sum_{j=0}^{N-1} u^j\Big)\notag
\\&=(1+u)\Big(\sum_{j=0}^{N-2}a_{j} u^{j}\Big)\Big(\sum_{j=0}^{N-1} u^j\Big)
\\&= Q(u), \label{eq: Q1u}
\end{align}
where we have used the fact that $a_j=a_{N-2-j}$ for all $j=0,\ldots, N-2$. It follows that
\begin{equation}
Q'(u)=\Big[u^{2N-2}Q(1/u)\Big]'=(2N-2) u^{2N-3} Q(1/u)-u^{2N-4}Q'(1/u).\label{eq: Qder}
\end{equation}

Therefore
\begin{equation}
Q'(1/u)=\frac{1}{u^{2N-4}}\Big[(2N-2) u^{2N-3} Q(1/u)-Q'(u)\Big].
\label{eq: Qder1u}
\end{equation}

We now do similar computations for $P$. We have
\begin{align}
u^{2N-4}P(1/u)&=u^{2N-4}\sum_{i=0}^{2N-4} p_i (1/u)^i\notag
\\&=\sum_{i=0}^{2N-4} p_i u^{2N-4-i}\notag
\\&=\sum_{i=0}^{2N-4} p_{2N-4-i} u^{i}\notag
\\&=-\sum_{i=0}^{2N-4} p_{i} u^{i}\notag
\\&=-P(u),\label{eq: P1u}
\end{align}
where we have used the fact that $p_{2N-4-i}=-p_i$ for all $i=0,\ldots, 2N-4$. It also follows that
\begin{equation}
P'(u)=-\Big[u^{2N-4}P(1/u)\Big]'=-\Big[(2N-4)u^{2N-5}P(1/u)-u^{2N-6}P'(1/u)\Big].
\label{eq: Pder}
\end{equation}

Thus
\begin{equation}
P'(1/u)=\frac{1}{u^{2N-6}}\Big[P'(u)+(2N-4)u^{2N-5}P(1/u)\Big].
\label{eq:Pder1u}
\end{equation}

Therefore, from \eqref{eq: Q1u} to \eqref{eq:Pder1u} we have
\begin{align*}
M(1/u)&= (1/u) Q'(1/u)P(1/u)-Q(1/u)(P(1/u)+(1/u)P'(1/u))-(1/u)P(1/u)^2
\\&=(1/u)\frac{1}{u^{2N-4}}\Big[(2N-2) u^{2N-3} Q(1/u)-Q'(u)\Big]\Big[-1/u^{2N-4} P(u)\Big]
\\&\qquad-\Big[1/u^{2N-2}Q(u)\Big]\bigg(-1/u^{2N-4}P(u)+(1/u)\frac{1}{u^{2N-6}}\Big[P'(u)+(2N-4)u^{2N-5}P(1/u)\Big]\bigg)
\\&\qquad-(1/u)\Big[-1/u^{2N-4}P(u)\Big]^2
\\&=\frac{1}{u^{4N-7}} Q'(u)P(u)+\frac{1}{u^{4N-6}}Q(u)P(u)-\frac{1}{u^{4N-7}}Q(u)P'(u)-\frac{1}{u^{4N-7}} P(u)^2
\\&\qquad-\frac{1}{u^{2N-4}}(2N-2)Q(1/u)P(u)-\frac{1}{u^{2N-2}}(2N-4)Q(u)P(1/u)
\end{align*}

Hence
\begin{align}
u^{4N-6} M(1/u)&= u Q'(u)P(u)-uQ(u)P'(u)-u P(u)^2\notag
    \\& + Q(u)P(u)-u^{2N-2}(2N-2)Q(1/u)P(u)-u^{2N-4}(2N-4)Q(u)P(1/u). \label{eq: M1u}    
\end{align}

Using \eqref{eq: Q1u} and \eqref{eq: P1u}, we can simplify further the last line in the above expression as follows
\begin{align*}
&Q(u)P(u)-u^{2N-2}(2N-2)Q(1/u)P(u)-(2N-4)Q(u)P(1/u)
\\&\qquad=Q(u)P(u)-(2N-2) Q(u)P(u)+(2N-4)Q(u)P(u)
\\&\qquad=-Q(u)P(u). 
\end{align*}

Substituting this back into \eqref{eq: M1u} we obtain
\begin{equation*}
 u^{4N-6} M(1/u)= u Q'(u)P(u)-uQ(u)P'(u)-u P(u)^2-Q(u)P(u)=M(u),
\end{equation*}
which is the desired equality \eqref{eq: needprooof}. Hence the coefficients of $M$ are symmetric.

Since $m_0=m_{4N-6}>0$, it implies that
\begin{equation}
\label{eq: M0}
M(0)>0, \quad M(+\infty)=\lim_{u\rightarrow +\infty} M(u)>0.
\end{equation}

Next we show that $M$ has at least to positive root by showing that $M(1)<0$.

From \eqref{eq: M}, since $P(1)=0$ we have
$$
M(1)=-Q(1)P'(1).
$$

Clearly $Q(1)= 2N \sum_{j=0}^{N-2}a_j >0$. Since $P(u) = \sum_{k=0}^{2N-4} p_k u^k$, we have $P'(u) = \sum_{k=0}^{2N-4} k p_k u^{k-1}$. Hence $P'(1)=\sum_{k=0}^{2N-4} k p_k$. According to Proposition \ref{prop: coeffs} $p_k=-p_{2N-4-k}$ for $k=0,\ldots, N-3$ and $p_{N-2}=0$. Thus we can rewrite $P'(1)$ as follows.
\begin{align*}
    P'(1)&= \sum_{k=0}^{2N-4} k p_k 
    \\&= \sum_{k=0}^{N-3} k p_k  + \sum_{\hat{k}=N-1}^{2N-4} \hat{k} p_{\hat{k}}
    \\&= \sum_{k=0}^{N-3} k p_k + \sum_{k=0}^{N-3} (2N-4+k) p_{2N-4-k}
    \\&= \sum_{k=0}^{N-3} (2N-4+k-k) p_{2N-4-k}
    \\&= (2N-4) \sum_{k=0}^{N-3} p_{2N-4-k},
\end{align*} 

Note that in the above computations, to go from the second equality to the third one, we have used a change of variable $\hat{k}= 2N-4-k$. Since $p_{2N-4-k}>0$ for all $k=0,\ldots, N-3$, $P'(1)>0$ for all $N\geq 3$.  Thus, as $M(1)=-Q(1)P'(1)$ and $Q(1)>0, P'(1)>0$, we obtain that $M(1)<0$. This, together with \eqref{eq: M0} and since $M$ is a polynomial, we deduce, by the Intermediate Value Theorem, that $M(u)$ has at least 2 roots, one in the interval $(0,1)$ and another in the interval $(1,+\infty)$. 
\end{proof}
\subsection{Proof of Lemma \ref{lem: bound of HNab}}
\label{sec: proof3}
\begin{proof}
We have
$$
\min\Big(\frac{j}{\max(a,b)},\frac{N-j}{\max(a,b)}\Big) \leq \min\Big(\frac{j}{a},\frac{N-j}{b}\Big) \leq \min\Big(\frac{j}{\min(a,b)},\frac{N-j}{\min(a,b)}\Big).
$$

We observe that
$$
\min\Big(\frac{j}{\max(a,b)},\frac{N-j}{\max(a,b)}\Big)=\frac{1}{\max(a,b)}\min(j,N-j),
$$

and

$$
\min\Big(\frac{j}{\min(a,b)},\frac{N-j}{\min(a,b)}\Big)=\frac{1}{\min(a,b)}\min(j,N-j).
$$

Thus we have
\begin{equation}
\label{eq: HNab vs HN11}
\frac{1}{\max(a,b)}\sum\limits_{j=1}^{N-1} \frac{\min(j,N-j)}{j(N-j)} \leq H_{N,a,b} \leq \frac{1}{\min(a,b)}\sum\limits_{j=1}^{N-1} \frac{\min(j,N-j)}{j(N-j)}.
\end{equation}

Let $H_{N,1,1}=\sum\limits_{j=1}^{N-1} \frac{\min(j,N-j)}{j(N-j)}$, which is exactly $H_{N,a,b}$ when $a=b=1$. Then we have
\begin{align*}
    H_{N,1,1}=\sum\limits_{j=1}^{N-1} \frac{\min(j,N-j)}{j(N-j)}
    =\sum\limits_{j=1}^{\lfloor{\frac{N}{2}}\rfloor} \frac{1}{N-j} + \sum\limits_{\lfloor{\frac{N}{2}}\rfloor+1}^{N-1} \frac{1}{j}= 2 \sum\limits_{\lfloor{\frac{N}{2}}\rfloor+1}^{N-1} \frac{1}{j}=2(H_N-H_{\lfloor{\frac{N}{2}}\rfloor}).
\end{align*}

where 

$$
H_N=\sum_{j=1}^{N-1}\frac{1}{j}
$$

is the harmonic number. This important number plays a central role in number theory. Its appearance in our analysis, as well as in \cite{duong2021cost}, is rather interesting. Using the above relationship and the following well-known estimates for the harmonic number
\begin{equation}
\label{eq: HN estimates}    
\ln N +\gamma +\frac{1}{2N+1}\leq H_N\leq \ln N +\gamma +\frac{1}{2N-1},
\end{equation}

where $\gamma=0.5772156649$ is the celebrated Euler–Mascheroni constant, we obtain the following lower and upper bounds for $H_{N,1,1}$:

\begin{align*}
&H_{N,1,1}=2(H_N-H_{\lfloor{\frac{N}{2}}\rfloor})\leq 2\Big(\ln N +\gamma +\frac{1}{2N-1}-\ln \lfloor \frac{N}{2} \rfloor\Big) -\gamma - \frac{1}{2\lfloor \frac{N}{2}) \rfloor +1}= 2\Big(\ln 2 + \frac{1}{2N-1}-\frac{1}{N+1}\Big),
\\&H_{N,1,1}=2\Big(H_N-H_{\lfloor{\frac{N}{2}}\rfloor}\Big)
 \geq 2\Big(\ln N +\gamma +\frac{1}{2N+1}-\ln \lfloor \frac{N}{2} \rfloor\Big) -\gamma - \frac{1}{2\lfloor \frac{N}{2}) \rfloor -1}= 2\Big(\ln 2 + \frac{1}{2N+1}-\frac{1}{N-1}\Big)
\end{align*}

Substituting these estimates back to \eqref{eq: HNab vs HN11} we obtain the desired lower and upper bounds for $H_{N,a,b}$ and complete the proof of this lemma.
\end{proof}
\subsection{Proof of Lemma \ref{lem: m and M}}
\label{sec: proof4}
\begin{proof}
We have, for $i=0,\ldots,N-2$
\begin{align*}
  m=\min_{i}\frac{\min(\frac{i+1}{a},\frac{N-i-1}{b})}{(i+1)(N-i-1)}\geq \min_{i}\frac{\min(\frac{i+1}{\max(a,b)},\frac{N-i-1}{\max(a,b)})}{(i+1)(N-i-1)}\geq \min_{i}\frac{\frac{\min(i+1,N-i-1)}{\max(a,b)}}{(i+1)(N-i-1)},
\end{align*} 

We have
$$
\min\Big\{\frac{\min(1,N-1)}{N-1},\ldots,\frac{\min(N-1,1)}{N-1}\Big\}
= \min\Big\{\frac{1}{N-1},\ldots,\frac{1}{N-1}\Big\} = \frac{1}{N-1}.
$$

Therefore, we obtain
\begin{align*}
    m\geq \frac{1}{\max(a,b)(N-1)}.
\end{align*}

Similarly for $M$
\begin{align*}
  M=\max_{i}\frac{\min(\frac{i+1}{a},\frac{N-i-1}{b})}{(i+1)(N-i-1)}\leq \max_{i}\frac{\min(\frac{i+1}{\min(a,b)},\frac{N-i-1}{\min(a,b)})}{(i+1)(N-i-1)}\leq \max_{i}\frac{\frac{\min(i+1,N-i-1)}{\min(a,b)}}{(i+1)(N-i-1)},
\end{align*} 
for $i=0,\ldots,N-2$.

We have
$$
\max\Big\{\frac{\min(1,N-1)}{N-1},\ldots,\frac{\min(N-1,1)}{N-1}\Big\}
= \max\Big\{\frac{1}{N-1},\ldots,\frac{1}{\lfloor \frac{(N-1)}{2} \rfloor} \Big\} = \frac{1}{\lfloor \frac{(N-1)}{2} \rfloor}.
$$

Thus
\begin{align*}
    M\leq \frac{1}{\min(a,b)\lfloor \frac{(N-1)}{2} \rfloor}.
\end{align*}

Substituting the lower bound for $m$ and upper bound for $M$ into \eqref{eq: estimates1} we obtain the desired estimates for $E_{mix}$. 
\end{proof}
\section*{Acknowledgement} The research of MHD was supported by EPSRC Grants EP/W008041/1 and EP/V038516/1. We would like to thank the anonymous referees for their instructive comments and suggestions, which were very helpful for us to improve the manuscript.
\section*{Data Availability Statement}
Data sharing is not applicable to this article as no datasets were generated or analysed during the current study.
\bibliographystyle{plain}
\bibliography{references}
\end{document}